\begin{document}

\title{Catalytic Embeddings of Quantum Circuits}

\author{Matthew Amy}
\affiliation{School of Computing Science, Simon Fraser University, Burnaby, BC V5A 1S6, Canada}

\author{Matthew Crawford}
\affiliation{Department of Mathematics, University of North Carolina at Chapel Hill, Chapel Hill, NC 27599, USA}

\author{Andrew N. Glaudell}
\affiliation{Photonic Inc., Coquitlam, BC V3K 6T1, Canada}
\affiliation{Department of Mathematical Sciences, George Mason University, Fairfax, VA 22030, USA}
\affiliation{Quantum Science and Engineering Center, George Mason University, Fairfax, VA 22030, USA}
\affiliation{Booz Allen Hamilton, Washington, DC 20005, USA}

\author{Melissa L. Macasieb}
\affiliation{Booz Allen Hamilton, Washington, DC 20005, USA}

\author{Samuel S. Mendelson}
\affiliation{Strategic \& Computing Systems Department, Naval Surface Warfare Center, Dahlgren Division, Dahlgren, VA 22448, USA}

\author{Neil J. Ross}
\affiliation{Department of Mathematics and Statistics, Dalhousie University, Halifax, NS B3H 4R2, Canada}

\begin{abstract}
  If a set $\gatesetG$ of quantum gates is countable, then the
  operators that can be exactly represented by a circuit over
  $\gatesetG$ form a strict subset of the collection of all unitary
  operators. When $\gatesetG$ is universal, one circumvents this
  limitation by resorting to repeated gate approximations: every
  occurrence of a gate which cannot be exactly represented over
  $\gatesetG$ is replaced by an approximating circuit.  Here, we
  introduce catalytic embeddings, which provide an alternative to
  repeated gate approximations. With catalytic embeddings,
  approximations are relegated to the preparation of a fixed number of
  reusable resource states called catalysts.  Because the catalysts
  only need to be prepared once, when catalytic embeddings exist they
  always produce shorter circuits, in the limit of increasing gate
  count and target precision.  In the present paper, we lay the
  foundations of the theory of catalytic embeddings and we establish
  several of their structural properties. In addition, we provide
  methods to design catalytic embeddings, showing that their
  construction can be reduced to that of a single fixed matrix when
  the gates involved have entries in well-behaved rings of algebraic
  numbers. Finally, we showcase some concrete examples and
  applications. Notably, we show that catalytic embeddings generalize
  a technique previously used to implement the Quantum Fourier
  Transform over the Clifford+$T$ gate set with $O(n)$ gate
  approximations.
\end{abstract}

\maketitle

\section{Introduction}
\label{sec:intro}

Certain gate sets in quantum computing have become particularly
distinguished due to their rich mathematical structure, prominence in
proposals for scalable quantum computing, and practical
convenience. To perform tasks reliably on a quantum computer, it is
often important to understand the \emph{expressivity} of a given gate
set $\gatesetG$. This is done by characterizing the unitary operators
which can be exactly implemented as a circuit over $\gatesetG$.  In
the context of fault-tolerant quantum computation, such
characterizations have been possible through constructive
number-theoretic methods
\cite{kliuchnikov2013fast,Giles2013a,forest2015exact,Amy2020}.  These
characterizations of expressivity enabled major progress in the theory
of quantum circuits, including exact synthesis algorithms for circuits
on one or more qubits
\cite{kliuchnikov2013fast,giles2013remarks,Giles2013a,forest2015exact,Amy2020,glaudell2021optimal},
powerful circuit optimization strategies \cite{mosca2021polynomial},
and optimally efficient approximation methods
\cite{ross2016optimal,kliuchnikov2015framework}. Evidenced by this
growing body of work, number-theoretic characterizations of
expressivity have come to play a central role in a large number of
practical frameworks and strategies for quantum compilation.

Number-theoretic characterizations are particularly crucial in the
development of efficient approximation methods. Gate sets that are
well-suited for fault-tolerant quantum computation are finite and, as
a result, can only express a countable set of distinct operators. This
is in contrast to arbitrary unitary evolution, where operators may
come from an uncountable continuum. In order to perform universal
quantum computation, one therefore needs to rely on approximations to
extend the expressivity of a gate set: if an operator $U$ cannot be
implemented exactly over the gate set $\gatesetG$, then one looks for
a circuit $V$ over $\gatesetG$ which approximates $U$ for a desired
norm and precision. A number-theoretic characterization of the gate
set simplifies this process by allowing the operator $U$ to be
approximated by a matrix with entries in the number ring $\ringR$
associated with $\gatesetG$. In practice, this approximation process
is then repeated as often as the operator $U$ is needed.

Here, we revisit the problem of extending the expressivity of a gate
set. We introduce the method of \emph{catalytic embeddings} which
provides an alternative to repeated gate approximations in a number of
important contexts. For a unitary operation $U$ and a quantum gate set
$\gatesetG$, a catalytic embedding of $U$ over $\gatesetG$ is given by
a circuit $\phi_U$ over $\gatesetG$ and a quantum state $\ket{\chi_U}$
called a \emph{catalyst} such that, for all quantum states
$\ket{\psi}$ of appropriate dimension, we have
\begin{center}
    \begin{quantikz}
        \lstick{$\ket{\psi}$} & \gate[2]{\quad\phi_U\quad} & \qw\rstick{$U\ket{\psi}$}\\
        \lstick{$\ket{\chi_U}$} &  & \qw\rstick{$\ket{\chi_U}.$}
    \end{quantikz}
\end{center}

The power of such a construction becomes evident in the case where $U$
does not normally have an exact circuit representation over
$\gatesetG$.  Suppose we wish to implement the composite unitary
operation $V = G_k U G_{k-1} U \cdots G_1 U G_0$ to a target precision
$\varepsilon$, where each $G_j$ is a unitary with an exact
implementation over the gate set $\mathbb{G}$. If $V$ acts on more
than a few qubits and there are no additional obvious circuit
optimizations, we are left with few options besides replacing each $U$
independently.

The standard approach to this problem is to use repeated
approximations.  The operator $U$ is approximated by some circuit $U'$
over $\gatesetG$ up to $\varepsilon/k$, which then guarantees that the
circuit below implements $V$ to the desired precision.
\begin{center}
\begin{quantikz}
    \lstick{$\ket{\psi}$} & \gate{G_0} & \gate{U'} & \gate{G_1} & \push{~\cdots~} & \gate{U'} & \gate{G_k} & \qw\rstick{$V\ket{\psi}+\order{\varepsilon}$}
\end{quantikz}
\end{center}
Using the Solovay-Kitaev algorithm, or an improved method if it is
available for $\gatesetG$, we can find such a $U'$ with
$\order{\log^c(k/\varepsilon)}$ gates in $\gatesetG$, where $c$ is a
constant that depends on $\gatesetG$ and the chosen approximation
technique. Writing $T$ for the gate count of the $k+1$ circuits $G_j$
after exact synthesis, the total gate count of the entire circuit is
given by
\[
    T + k\cdot\order{\log^c(k/\varepsilon)} = T + \order{k\log^c(k/\varepsilon)}.
\]

Now suppose that $\phi_U$ and $\ket{\chi_U}$ define a catalytic
embedding for $U$ over $\gatesetG$. Then, for any $\ket{\psi}$, we
have:
\begin{center}
\begin{quantikz}
    \lstick{$\ket{\psi}$} & \gate{G_0} & \gate[2]{\phi_U} & \gate{G_1} & \push{~\cdots~} & \gate[2]{\phi_U} & \gate{G_k} & \qw\rstick{$V\ket{\psi}$}\\
    \lstick{$\ket{\chi_U}$} & \qw & & \qw & \push{~\cdots~} & & \qw & \qw\rstick{$\ket{\chi_U}$}
\end{quantikz}
\end{center}
That is, in the presence of the catalyst $\ket{\chi_U}$, the above
circuit implements $V$ on its top register. Importantly, no elements
in the circuit are dependent on $\varepsilon$. In practice, we cannot
assume to have direct access to the catalyst $\ket{\chi_U}$, but we
can instead consider the action of this circuit on an approximate
catalyst $\ket{\chi}+\order{\delta}$. By linearity, we then have:
\begin{center}
\begin{quantikz}
    \lstick{$\ket{\psi}$} & \gate{G_0} & \gate[2]{\phi_U} & \gate{G_1} & \push{~\cdots~} & \gate[2]{\phi_U} & \gate{G_k} & \qw\rstick[wires=2]{$V\ket{\psi}\otimes\ket{\chi_U}+\order{\delta}$}\\
    \lstick{$\ket{\chi_U}+\order{\delta}$} & \qw & & \qw & \push{~\cdots~} & & \qw & \qw
\end{quantikz}
\end{center}
Thus, we get the desired action if we take $\delta=\varepsilon$, and
so we must approximate $\ket{\chi_U}$ to precision
$\varepsilon$. Writing $N$ for the gate count of the fixed circuit
$\phi_U$ and, as before, $T$ for the gate count of the $k+1$ circuits
$G_j$, and applying the Solovay-Kitaev algorithm (or an alternative) to
approximately prepare $\ket{\chi_U}$, we find the total gate count to
be
\[
    T + N k+\order{\log^c(1/\varepsilon)} = T + \order{k + \log^c(1/\varepsilon)}.
\]
Thus, given a circuit where $k$ approximations need to be made and an
overall precision of $\varepsilon$ is desired, catalytic embeddings
will incur the asymptotic cost savings
\[
    \order{k\log^c(k/\varepsilon)}\rightarrow \order{k + \log^c(1/\varepsilon)}.
\]
In other words, when catalytic embeddings exist, they will always
outperform repeated gate approximations, as $k$ and $1/\varepsilon$
increase.

In the present paper, we lay the foundations of the theory of
catalytic embeddings. We define a very general notion of catalytic
embedding and study its abstract properties. We then progressively
endow catalytic embeddings with additional structure, culminating in
the notion of \emph{linear catalytic embedding}. This provides an
axiomatization of a very general type of catalytic embedding that is
well-suited for gate sets admitting a number-theoretic
characterization, such as the Clifford+$T$, Clifford+$CS$, or
Toffoli-Hadamard gate sets. We then introduce a method to design
linear catalytic embeddings and we show that their construction can be
reduced to that of a single fixed matrix when the gates involved have
entries in well-behaved rings of algebraic numbers. Throughout, we
showcase concrete examples and applications.

Because catalytic embeddings have clear connections to algebraic
number theory, they provide a convenient framework through which to
study and compare gate sets that admit a number-theoretic
characterization. In the present paper, we define the framework of
catalytic embeddings and study its applications to the synthesis of
quantum circuits. We believe, however, that catalytic embeddings may
find applications in other areas of the theory of quantum circuits,
such as the study of the Clifford hierarchy \cite{cliffhierarchy} or
the classification of certain subgroups of the unitary group
\cite{aaronson2015classification,grier2022classification}.

The remainder of the paper is organized as follows. In
\cref{sec:prelims}, we define notation and provide the reader with a
very brief introduction to topics from algebraic number theory and the
theory of quantum computation. In \cref{sec:embedgates}, we formally
define catalytic embeddings. Afterwards, we introduce linear catalytic
embeddings in \cref{sec:lincats} and study their properties in
\cref{sec:props}. We introduce a particularly convenient method for
constructing linear catalytic embeddings of unitary matrices in
\cref{sec:standard}. Finally, we provide some concluding remarks and
outline avenues for future work in \cref{sec:conc}.

\section{Preliminaries}
\label{sec:prelims}

We begin by introducing some notation and terminology. We assume that
the reader has some familiarity with algebraic number theory and the
theory of quantum computation. Further details on these topics can be
found in \cite{artin2011algebra,schinzel_2000} and \cite{NC},
respectively.

\subsection{Kroneckerian Number Rings}
\label{ssec:kronrings}

Recall that a \emph{number field} is a finite degree field extension
of the field $\Q$ of rational numbers. An injective field homomorphism
from a number field $\fieldK$ into the field $\C$ of complex numbers
is an \emph{embedding} of $\fieldK$ into $\C$.

We will be interested in the number fields which can be endowed with
an unambiguous notion of complex conjugation. These are precisely the
\emph{Kroneckerian} number fields.

\begin{definition}[Kroneckerian Number Field]
  \label{def:kfield}
  A number field $\fieldK$ is \emph{Kroneckerian} if there exists an
  involution $c:\fieldK\to \fieldK$ such that, for any embedding
  $\sigma : \fieldK \to \C$, the following diagram commutes:
  \begin{center}
  \begin{tikzcd}[row sep=large,column sep=large,nodes={inner sep=5pt}]
    \fieldK \ar[r, "c"]\ar[d, swap, "\sigma"]& \fieldK  \ar[d, "\sigma"] \\
    \C\ar[r, swap, "(.)^\dagger"]& \C
  \end{tikzcd}  
  \end{center}
\end{definition}

The involution $c$ in \cref{def:kfield} is uniquely determined by the
number field $\fieldK$ and can be interpreted as the analogue of
complex conjugation in $\fieldK$. For this reason, and by a slight
abuse of notation, we use $(.)^\dagger$ to denote $c$ in what follows.

\begin{remark}
    \label{rem:kfield}
    Let $\fieldK$ be a number field. Then $\fieldK$ is said to be
    \emph{totally real} if $\sigma[\fieldK]\subseteq\R$ for every
    embedding $\sigma$ of $\fieldK$, and \emph{totally imaginary} if
    $\sigma[\fieldK]\not\subseteq\R$ for every embedding $\sigma$ of
    $\fieldK$. Moreover, $\fieldK$ is said to be \emph{CM} if it is a
    degree two totally imaginary extension of a totally real
    field. Kroneckerian number fields can also be defined as number
    fields that are either totally real or CM. The latter definition
    of Kroneckerian number fields is the standard one in the
    literature (see, e.g., \cite{schinzel_2000}) and is equivalent to
    \cref{def:kfield}. We choose to define Kroneckerian number fields
    as in \cref{def:kfield} because this definition emphasizes the
    fact that Kroneckerian number fields are those in which complex
    conjugation can be unambiguously defined.
\end{remark}

We will not only be interested in number fields but also in
\emph{number rings}, which we take to be subrings of number fields
(viewed as rings).  Recall that the field of fractions of a number
ring is the smallest field (up to isomorphism) containing that
ring. We introduce a notion of \emph{Kroneckerian number ring} to deal
with complex conjugation in number rings.

\begin{definition}[Kroneckerian Number Ring]
  \label{def:kring}
  A number ring $\ringR$ is \emph{Kroneckerian} if its field of
  fractions $\fieldfrac{\ringR}$ is a Kroneckerian field and if
  $r^\dagger \in \ringR$ for every $r\in\ringR$.
\end{definition}

An injective ring homomorphism from a number ring $\ringR$ into the
ring $\C$ of complex numbers is an \emph{embedding} of $\ringR$ into
$\C$.

We assume in the remainder of this paper that all number rings and
number fields exist in some ambient field $\mathcal{L}$, which is not
required to be Kroneckerian. For concreteness, one can consider all of
the rings and fields used in this paper as existing in the field of
algebraic numbers.

\subsection{Matrices and Unitary Groups}
\label{subsec:unitarygroups}

The addition, multiplication, direct sum, and tensor (or Kronecker)
product of matrices are defined as usual. If $\ringR$ is a ring, we
write $\matrices[\ringR][n]$ for the associative $\ringR$-algebra of
$n\times n$ matrices with entries in $\ringR$ and $\matrices[\ringR]$
for the collection of all square matrices with entries in
$\ringR$. That is,
\[
\matrices[\ringR] = \bigcup_{n>0} \matrices[\ringR][n].
\]

If $\ringR$ is a Kroneckerian number ring, we can extend conjugation
componentwise from $\ringR$ to $\matrices[\ringR]$. We then write
$U^\dagger$ for the conjugate transpose (or adjoint) of
$U\in\matrices[\ringR]$ and we say that $U$ is \emph{Hermitian} if
$U=U^\dagger$, \emph{unitary} if $U^{-1}=U^\dagger$, and \emph{normal}
if $UU^\dagger = U^\dagger U$.

For a fixed positive integer $n$, the collection of $n\times n$
unitary matrices over a Kroneckerian number ring $\ringR$ forms a
group known as the \emph{unitary group of degree $n$ over $\ringR$}.
        
\begin{definition}[Unitary Group]
  \label{def:unitarygroup}
  Let $\ringR$ be a Kroneckerian number ring. The
  \emph{unitary group of degree $n$ over $\ringR$}, denoted
  $\unitary[\ringR][n]$, is the group whose elements are
  $n$-dimensional unitary matrices over $\ringR$. That is, 
  \[
  \unitary[\ringR][n] = \s{ U \in \matrices[\ringR][n] \mid U^\dagger
  U = UU^\dagger = I}. 
  \]
  We write $\unitary[\ringR]$ for the collection of all 
  unitary matrices over $\ringR$. That is, 
  \[
  \unitary[\ringR] = \bigcup_{n>0} \unitary[\ringR][n] = \s{ U \in \matrices[\ringR] \mid U^\dagger
    U = UU^\dagger = I}.
  \]
\end{definition}

\cref{def:unitarygroup} is the reason for our interest in Kroneckerian
number rings. Indeed, because conjugation is well-defined in
Kroneckerian number rings, one can meaningfully talk about unitary
matrices and unitary groups over these rings.

Note that any embedding $\rho:\ringR \to \C$ extends componentwise to
a group homomorphism $\rho:\unitary[\ringR][n] \to \unitary[\C][n]$
satisfying $\rho(\unitary[\ringR][n]) =
\unitary[\rho(\ringR)][n]$. Note moreover that $\unitary[\ringR]$ is
closed under direct sum and tensor product. In other words, if
$U,V\in\unitary[\ringR]$, then $U\oplus V,U\otimes V\in
\unitary[\ringR]$.

\begin{remark}
\label{rem:ct1}
  The collection $\unitary[\ringR]$ is equipped with a partially
  defined operation of composition (the multiplication of matrices), a
  tensor product, and a \emph{dagger} (the conjugate transpose). These
  operations can be shown to endow $\unitary[\ringR]$ with the
  structure of a \emph{strict dagger symmetric monoidal groupoid}
  \cite{Mac,Selinger2011,HV2019}.
\end{remark}  

\subsection{Quantum States and Quantum Evolution}
\label{subsec:quantumstates}

\emph{Quantum states} are described by vectors in a Hilbert
space. Throughout the paper, we denote the Hilbert space of dimension
$m$ by $\hilbert_m$, assuming that the underlying field is $\C$ unless
otherwise stated.

We assume the existence of a preferred orthonormal basis called the
\emph{computational basis} and we represent the computational basis
for $\hilbert_m$ by the collection of quantum states $\basis_m =
\s{\ket{a}}[a\in\s{0,1,\dots,m-1}]$. We further assume that the
computational basis for the Hilbert space $\hilbert_m\otimes
\hilbert_n\cong\hilbert_{mn}$ is given by
\[
\s{\ket{a}\otimes\ket{b}=\ket{an+b}}[\ket{a}\in\basis_m,\ket{b}\in\basis_n].
\]

The evolution of a quantum state is described by a norm-preserving
linear transformation on a Hilbert space. Thus, in any orthonormal
basis, and in particular in the computational basis, this evolution
can be represented by a unitary matrix. A simple but ubiquitous
example of this kind of evolution is the transformation
$\swapalt(m,n):\hilbert_m\otimes\hilbert_n \to
\hilbert_n\otimes\hilbert_m$ whose action on computational basis
elements $\ket{a}\in\basis_m$ and $\ket{b}\in\basis_n$ is given by:
\[
\swapalt(m,n):\ket{a}\otimes\ket{b} \mapsto
\ket{b}\otimes\ket{a}.
\]

\subsection{Circuits}
\label{subsec:circs}

We now introduce \emph{circuits}, which are a widespread notation for
building complex linear operators in quantum information (see, e.g.,
\cite{NC}). Our presentation differs slightly from the standard one
and, in particular, emphasizes the distinction between a circuit and
the matrix it represents.

\begin{definition}[Gates]
  \label{def:gate}
  A \emph{gate} is a symbol. Every gate $G$ is associated with a
  positive integer $n$, called the \emph{dimension} of $G$, and an
  $n$-dimensional unitary matrix $\eval(G)$, called the
  \emph{evaluation} of $G$. A \emph{gate set} is a set of gates.
\end{definition}

\begin{definition}[Circuits]
  \label{def:circs}
  Let $\gatesetG$ be a gate set. \emph{Circuits over $\gatesetG$} and
  their \emph{dimension} are defined inductively as follows.
  \begin{itemize}
  \item For every $n\in\N^{*}$, $I_n$ is a circuit over $\gatesetG$ of
    dimension $n$.
  \item For every $m,n\in\N^{*}$, $\swapalt(m,n)$ is a circuit over
    $\gatesetG$ of dimension $mn$.
  \item If $G$ is a gate in $\gatesetG$ of dimension $n$, then $G$ is a circuit
    over $\gatesetG$ of dimension $n$.
  \item If $C$ and $D$ are circuits over $\gatesetG$ of dimension $n$,
    then $(C\circ D)$ is a circuit over $\gatesetG$ of dimension $n$.
  \item If $C$ and $D$ are circuits over $\gatesetG$ of dimensions $m$
    and $n$, respectively, then $(C\otimes D)$ is a circuit over
    $\gatesetG$ of dimension $mn$.
  \end{itemize}
  We write $\circuits(\gatesetG)$ for the collection of circuits over
  $\gatesetG$ and $\circuits_n(\gatesetG)$ for the circuits over
  $\gatesetG$ whose dimension is $n$.
\end{definition}

In \cref{def:circs}, $I_n$, $\swapalt(m,n)$, $\circ$, and $\otimes$
are treated as uninterpreted symbols. From this perspective, a circuit
over a gate set $\gatesetG$ is nothing but a word over the alphabet
\[
\gatesetG\cup \s{I_n \mid n\in\N^*} \cup \s{\swapalt(m,n) \mid
  m,n\in\N^*} \cup \s{\circ, \otimes, (,)}
\]
satisfying the constraints stated in \cref{def:circs}. For example, if
$\gatesetG=\s{A,B,C}$ and the gates $A$, $B$, and $C$, have dimension
1, 2, and 4, respectively, then $((I_2 \otimes((A\otimes I_2)\circ
B))\circ C)$ is a circuit over $\gatesetG$ of dimension 4. In what
follows, we always omit the outermost bracket around a circuit. That
is, we write $(I_2 \otimes((A\otimes I_2)\circ B))\circ C$ instead of
$((I_2 \otimes((A\otimes I_2)\circ B))\circ C)$.

By \cref{def:gate}, every gate in a gate set comes with an
evaluation. We now extend this evaluation from gates to circuits
containing these gates.

\begin{definition}[Evaluation]
  \label{def:evaluation}
  Let $\gatesetG$ be a gate set. The \emph{evaluation} function
  $e:\circuits(\gatesetG) \to \unitary(\C)$ is defined inductively as
  follows.
  \begin{itemize}
  \item For every $n\in\N$, $e(I_n)=I_n$.
  \item For every $m,n\in\N$, $e(\swapalt(m,n))=\swapalt(m,n)$.
  \item If $G$ is a gate in $\gatesetG$, then $e(G)=\eval(G)$.
  \item If $C$ and $D$ are circuits over $\gatesetG$, then $e(C\circ
    D)=e(C)\circ e(D)$.
  \item If $C$ and $D$ are circuits over $\gatesetG$, then
    $e(C\otimes D) = e(C) \otimes e(D)$.
  \end{itemize}
\end{definition}

\cref{def:gate,def:circs,def:evaluation} are meant to carefully
distinguish between a circuit (which is a word over some alphabet) and
the unitary matrix it denotes (its evaluation). Note that while
\cref{def:circs} treated $I_n$, $\swapalt(m,n)$, $\circ$, and
$\otimes$ as uninterpreted symbols, \cref{def:evaluation} fixes the
evaluation of these symbols to be natural one: $I_n$ is evaluated as
the identity matrix of dimension $n$, etc.

A common alternative to \cref{def:circs} is to denote circuits using
diagrams, rather than words. In diagrammatic notation, the circuit
$I_n$ is represented by a wire labelled by $n$ on both sides, and the
circuit $\swapalt(m,n)$ is represented by two crossing wires labelled
$m$ and $n$:
\begin{center}
\begin{tikzcd}
  \lstick{$n$} & \qw & \qw & \qw \rstick{$n$}
\end{tikzcd}
~~~ and ~~~
\begin{quantikz}[row sep = 0.5cm]
  \lstick{$m$} & \qw & \arrow[dl,dash,in=0,out=180] & \rstick{$n$} \qw\\
  \lstick{$n$} & \qw & \arrow[ul,dash,in=0,out=180] & \rstick{$m.$} \qw    
\end{quantikz}
\end{center}
For every $G \in \gatesetG$ of dimension $n$, the circuit $G$ is
represented by a box:
\begin{center}
    \begin{tikzcd}
        \lstick{$n$} & \gate{G} & \qw \rstick{$n$.}
    \end{tikzcd}
\end{center}
Finally, if $C$ and $D$ are circuits respectively represented as
\begin{center}
   \begin{tikzcd}[row sep = 0.2cm]
     \lstick{$m_1$} & \gate[3,nwires={2}]{C} & \qw\rstick{$m_1'$}\\
     \lstick{$\rvdots$} & & \rstick{$\rvdots$} \\
     \lstick{$m_\ell$} &  & \qw\rstick{$m_\ell'$}        
   \end{tikzcd}
   ~~~ and ~~~
   \begin{tikzcd}[row sep = 0.2cm]
     \lstick{$n_1$} & \gate[3,nwires={2}]{D} & \qw\rstick{$n_1'$}\\
     \lstick{$\rvdots$} & & \rstick{$\rvdots$} \\     
     \lstick{$n_{\ell '}$} &  & \qw\rstick{$n_{\ell '}',$}
    \end{tikzcd}    
\end{center}
then the circuits $C\circ D$ and $C\otimes D$ are respectively
represented as
\begin{center}
   \begin{tikzcd}[row sep = 0.2cm]
        \lstick{$m_1$} & \gate[3,nwires={2}]{D} & \gate[3,nwires={2}]{C} & \qw\rstick{$n_1'$}\\
        \lstick{$\rvdots$} & & & \rstick{$\rvdots$} \\     
        \lstick{$m_\ell$} & & & \qw\rstick{$n_{\ell '}'$}
   \end{tikzcd}
   ~~~ and ~~~
   \begin{tikzcd}[row sep = 0.2cm]
        \lstick{$m_1$} & \gate[3,nwires={2}]{C} & \qw\rstick{$m_1'$}\\
        \lstick{$\rvdots$} & & \rstick{$\rvdots$} \\     
        \lstick{$m_\ell$} &  & \qw\rstick{$m_\ell'$} \\
        \lstick{$n_1$} & \gate[3,nwires={2}]{D} & \qw\rstick{$n_1'$}\\
        \lstick{$\rvdots$} & & \rstick{$\rvdots$} \\     
        \lstick{$n_\ell'$} &  & \qw\rstick{$n_{\ell '}',$}        
    \end{tikzcd}    
\end{center}
where $C\circ D$ is well-defined if and only if $\ell = \ell'$ and
$m_j'=n_j$ for every $1\leq j \leq \ell$. We omit the wire labels if
they are unimportant or can be inferred from context.

\begin{example}
  \label{ex:circs2}
  The $\s{X, CX, CCX}$ gate set consists of the gates $X$, $CX$, and $CCX$, where
  \[
  e(X) = \begin{bmatrix} 0 & 1 \\ 1 & 0 \end{bmatrix}
  \quad 
  e(CX) = \diag(I_2,e(X))
  \quad
  \mbox{and}
  \quad
  e(CCX) =\diag(I_4, e(CX)).
  \]
\end{example} 

\begin{example}
  \label{ex:circs1}
  The \emph{Clifford+$T$} gate set consists of the gates $H$, $T$, and
  $CX$, where $X \coloneqq H\circ (T\circ(T\circ(T\circ (T\circ H))))$
  and
  \[
  e(H)= \frac{1}{\sqrt 2}
  \begin{bmatrix}
    1 & 1 \\ 1 & -1
  \end{bmatrix}, \qquad
  e(T)=
  \begin{bmatrix}
    1 & 0 \\ 0 & e^{i\pi/4}
  \end{bmatrix}, \qquad \mbox{and} \qquad
  e(CX)= \diag(I_2,e(X))
  \]
  The Clifford+$T$ circuit $C=((I_2\otimes H) \circ CX) \circ
  (T\otimes T)$ can be diagrammatically represented as follows (using
  the standard convention for the $CX$ gate)
  \begin{center}
    \begin{tikzcd}[row sep = 0.2cm]
        & \gate{T} & \ctrl{1} & \qw      & \qw \\
        & \gate{T} & \targ{}  & \gate{H} & \qw         
    \end{tikzcd}
  \end{center}
  and the circuit $C$ evaluates to 
  \[
  e(C)= \frac{1}{\sqrt 2}
  \begin{bmatrix}
    1 & e^{i\pi/4}  & 0         & 0 \\
    1 & -e^{i\pi/4} & 0         & 0 \\
    0 & 0         & e^{i\pi/4}  & e^{i\pi/2} \\
    0 & 0         & -e^{i\pi/4} & e^{i\pi/2} \\     
  \end{bmatrix}.
  \]
\end{example} 

\begin{remark}
  \label{rem:circdifs1}
  The notion of circuit introduced in this section differs from the
  usual one (as defined, say, in \cite{NC}). Firstly, the circuits
  introduced here can act (when evaluated) on Hilbert spaces of
  arbitrary dimensions, whereas circuits in the quantum computing
  literature are often restricted to spaces of dimension $2^n$ for
  some nonnegative integer $n$. The reader familiar with quantum
  circuits can think of the circuits described above as acting not
  only on qubits but, more generally, on a mixture of qudits of
  varying (finite) dimensions. Secondly, circuits are traditionally
  considered up to certain transformations. For example, according to
  \cref{def:circs,def:evaluation}, if $A$ and $B$ are two gates of
  dimension 2, then $A\otimes B$, $(A\otimes I_2)\circ(I_2 \otimes
  B)$, and $(I_2\otimes B) \circ (A \otimes I_2)$ are three distinct
  circuits representing the same matrix. In contrast, circuits in the
  literature are often defined in a way that equates these three
  circuits.
\end{remark}

\begin{definition}
  \label{def:unitarygateset}
  Let $\circuitsalt$ be a collection of circuits. We define
  $\unitary(\circuitsalt)$ as the image of $\circuitsalt$ under the
  evaluation function $e$. That is,
  $\unitary(\circuitsalt)=e[\circuitsalt]$. We further define
  $\unitary_n(\circuitsalt)$ as
  $\unitary[\circuitsalt][n]=\unitary[\circuitsalt]\cap\unitary[\C][n]$.
\end{definition}

The set $\unitary(\circuits(\gatesetG))$ is the collection of all
unitary matrices that can be represented by a circuit over
$\gatesetG$. If $\mathscr{U}$ is a collection of unitary matrices,
then we sometimes interpret $\mathscr{U}$ as a gate set by
introducing, for every $U\in\mathscr{U}$ of dimension $n$, a gate
$G_U$ of dimension $n$ with $e(G_U)=U$. In such a case, we then have
$\mathscr{U} \subseteq \unitary(\circuits(\mathscr{U}))$. If
$\mathscr{U}$ is closed under composition and tensor products, and
contains all identities and swaps, then we in fact have $\mathscr{U} =
\unitary(\circuits(\mathscr{U}))$.

By definition, the evaluation function $e:\circuitsalt \to
\unitary(\circuitsalt)$ is surjective. It is not injective, however,
since many different circuits evaluate to the same unitary matrix. An
\emph{exact synthesis function for $\circuitsalt$} is a function which
assigns, to each unitary matrix in $\unitary(\circuitsalt)$, a unique
circuit representing that matrix.

\begin{definition}[Exact Synthesis]
  \label{def:exactsynth}
  An \emph{exact synthesis function} for a collection of circuits
  $\circuitsalt$ is a function $s:\unitary(\circuitsalt)\to
  \circuitsalt$ such that $e\circ s = I_{\unitary(\circuitsalt)}$. An
  algorithm computing an exact synthesis function is an \emph{exact
  synthesis algorithm}.
\end{definition}

The notion of exact synthesis introduced in \cref{def:exactsynth} is
sometimes known as \emph{ancilla-free} exact synthesis.

\begin{example} 
\label{ex:exactsynth}
Several exact synthesis algorithms have been introduced in the
literature. In particular, exact synthesis algorithms exist for the
gate sets $\s{X, CX, CCX}$ and $\s{H, T, CX}$ discussed in
\cref{ex:circs1,ex:circs2} (see \cite{Giles2013a,shende03}). In both
cases, the exact synthesis algorithm relies on (and in fact
establishes) a characterization of the unitary matrices that can be
represented over the gate set. In particular, for $n\geq 4$,
\begin{itemize}
\item the elements of $\unitary_{2^n}(\circuits(\s{X, CX, CCX}))$ are
  exactly the permutations matrices of dimension $2^n$ that have
  determinant 1 (i.e., $\unitary_{2^n}(\circuits(\s{X, CX, CCX})) =
  A_{2^n}$), and
\item the elements of $\unitary_{2^n}(\circuits(\s{H, T, CX}))$ are
  exactly the elements of $\unitary_{2^n}(\Z[1/2,e^{2\pi i/8}])$ that
  have determinant 1.
\end{itemize}
The exact synthesis functions of \cite{Giles2013a,shende03}, and the
accompanying exact synthesis algorithms, can be adapted to $n<4$ by
varying the condition on the determinant. In fact, as we will discuss
below, the conditions on the determinant can be lifted altogether
through the use of ancillas.
\end{example}

\section{Catalytic Embeddings}
\label{sec:embedgates}

Recall from \cref{sec:intro} that we are interested in the problem of
constructing a circuit that, in the presence of a well-chosen resource
state, implements a desired operator. To make this intuition precise,
we now introduce \emph{catalytic embeddings}.

\begin{definition}
  \label{def:catembedcircs}
  Let $C$ be an $n$-dimensional circuit and let $\circuitsalt$ be a
  collection of circuits. A \emph{$k$-dimensional catalytic embedding
  of $C$ in $\circuitsalt$} is a pair $(\phi,\projector)$ consisting
  of an $(nk)$-dimensional circuit $\phi\in\circuitsalt$ and a nonzero
  orthogonal projector $\projector$ on $\hilbert_k$ satisfying the
  following \emph{catalytic condition}:
  \[
    e(\phi)(I\otimes\projector)=e(C)\otimes \projector.
  \]
\end{definition}

If $(\phi, \projector)$ is a $k$-dimensional catalytic embedding of a
circuit $C$, and $\ket{\chi}$ is such that
$\projector\ket{\chi}=\ket{\chi}$, then the catalytic condition
ensures that for any $\ket{\psi}$ we have
\begin{equation}
\label{eq:catstate}
e(\phi)\ket{\psi}\ket{\chi} = e(\phi)(I\otimes \projector)\ket{\psi}\ket{\chi}=(e(C)\otimes \projector) \ket{\psi}\ket{\chi} = (e(C)\ket{\psi})(\projector\ket{\chi})= (e(C)\ket{\psi})\ket{\chi}.
\end{equation}
The circuit $\phi$ therefore acts as $C$ in the presence of
$\ket{\chi}$. Moreover, $\ket{\chi}$ remains unchanged by the
application of $\phi$. For these reasons, and in accordance with
related work \cite{Beverland_2020}, we refer to any quantum state
$\ket{\chi}\in\hilbert_k$ with $\projector\ket{\chi}=\ket{\chi}$ as a
\emph{catalyst for $C$ over $\circuitsalt$}. Moreover, if $(\phi,
\projector)$ is an embedding of a circuit $C$, we sometimes refer to
$\phi$ as the \emph{embedding} and to $\projector$ as the
\emph{catalytic projector}. Note that a catalytic embedding can be
constructed from a catalyst. Indeed, if $\phi$ and $\ket{\chi}$
jointly satisfy \cref{eq:catstate} and $\ket{\chi}$ is a unit vector,
then $(\phi, |\chi\rangle\langle \chi|)$ is a catalytic embedding of
$C$.

As the propositions below show, the (evaluation of the) embedding of a
circuit $C$ can always be written as a block-diagonal matrix, up to a
change of basis.

\begin{proposition}
    \label{prop:leftside}
    Let $C$ be a circuit and $\circuitsalt$ be a collection of
    circuits. If $(\phi,\projector)$ is a catalytic embedding of $C$
    in $\circuitsalt$, then $(I\otimes\projector)e(\phi) =
    e(C)\otimes\projector$.
\end{proposition}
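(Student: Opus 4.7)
The plan is to derive the stated identity directly from the catalytic condition by taking adjoints and then exploiting unitarity of $e(\phi)$ and $e(C)$. The desired equation is essentially the catalytic condition read "from the other side," and the symmetry that makes this possible is the self-adjointness of $\projector$.

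First, I would take the adjoint of both sides of the catalytic condition
\[
e(\phi)(I\otimes\projector)=e(C)\otimes\projector.
\]
Since $\projector$ is an orthogonal projector we have $\projector^\dagger=\projector$, and since $(.)^\dagger$ reverses order of products and distributes over tensor products, this yields
\[
(I\otimes\projector)\,e(\phi)^\dagger=e(C)^\dagger\otimes\projector.
\]

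Next, I would multiply this identity on the left by $e(C)\otimes I$. Because $e(C)$ is unitary (the evaluation of a circuit is always unitary), the right-hand side collapses via $e(C)e(C)^\dagger=I$, giving
\[
(e(C)\otimes\projector)\,e(\phi)^\dagger=I\otimes\projector.
\]
Finally, I would multiply on the right by $e(\phi)$ and invoke $e(\phi)^\dagger e(\phi)=I$ to obtain
\[
e(C)\otimes\projector=(I\otimes\projector)\,e(\phi),
\]
which is the required identity.

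There is no serious obstacle; the argument is purely mechanical once one notices the adjoint trick. The only point worth flagging is that one cannot reach the conclusion by taking adjoints alone: a naive adjoint of the desired equation would require $e(\phi)^\dagger(I\otimes\projector)=e(C)^\dagger\otimes\projector$, which is not what the catalytic condition gives. It is the combined use of unitarity of both $e(C)$ and $e(\phi)$ that bridges the gap.
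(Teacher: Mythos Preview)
Your proof is correct and follows essentially the same approach as the paper: both arguments combine the adjoint of the catalytic condition with the unitarity of $e(C)$ and $e(\phi)$ to flip the projector to the left side. The only cosmetic difference is the order of operations—the paper first right-multiplies by $e(C)^\dagger\otimes I$ and then takes the adjoint, whereas you take the adjoint first and then left-multiply by $e(C)\otimes I$—but the ingredients and logic are identical.
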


\begin{proof}
    Note that $(e(C) \otimes I)(I\otimes \projector) = e(C)\otimes
    \projector = (I\otimes \projector)(e(C) \otimes I)$. The catalytic
    condition therefore yields
    \[
    e(\phi)(I\otimes \projector)(e(C)^\dagger \otimes I) = I\otimes \projector.
    \]
    Since $\projector$ is orthogonal, applying $(.)^\dagger$ on both
    sides of the above equation gives $(e(C) \otimes I)(I\otimes
    \projector)e(\phi)^\dagger = I\otimes \projector$, which then
    implies the desired equation by left-multiplication with
    $e(\phi)$.
\end{proof}

\begin{proposition}
  \label{prop:directsumform}
  Let $C$ be a circuit and $\circuitsalt$ be a collection of
  circuits. If $(\phi,\projector)$ is a catalytic embedding of $C$ in
  $\circuitsalt$, then there exists a unitary $U$ such that
  \[
      e(\phi) \sim \underbrace{e(C)\oplus e(C)\oplus\cdots\oplus e(C)}_{\rank(\projector)}\oplus U,
  \]
  where $\sim$ denotes equality up to conjugation by a unitary. 
\end{proposition}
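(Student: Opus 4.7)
The plan is to leverage \cref{prop:leftside} to show that $e(\phi)$ commutes with $I\otimes\projector$, which will force $e(\phi)$ to preserve a natural decomposition of $\hilbert_n\otimes\hilbert_k$ into invariant subspaces on which it acts as $e(C)$ or as some arbitrary unitary $U$.

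First, I would combine the catalytic condition of \cref{def:catembedcircs} with \cref{prop:leftside} to conclude that
\[
e(\phi)(I\otimes\projector) = e(C)\otimes\projector = (I\otimes\projector)e(\phi),
\]
so $e(\phi)$ commutes with $I\otimes\projector$. Since $I\otimes\projector$ is an orthogonal projector, its range and kernel are both invariant under $e(\phi)$. Let $n$ denote the dimension of $C$ and let $r=\rank(\projector)$. The range of $I\otimes\projector$ is $\hilbert_n\otimes\operatorname{range}(\projector)$, which has dimension $nr$, and its kernel is $\hilbert_n\otimes\operatorname{range}(\projector)^\perp$, of dimension $n(k-r)$.

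Next, I would choose an orthonormal basis $\ket{\chi_1},\ldots,\ket{\chi_r}$ of $\operatorname{range}(\projector)$. Each $\ket{\chi_i}$ is a catalyst in the sense of \cref{eq:catstate}, so for every $\ket{\psi}\in\hilbert_n$,
\[
e(\phi)\bigl(\ket{\psi}\otimes\ket{\chi_i}\bigr) = \bigl(e(C)\ket{\psi}\bigr)\otimes\ket{\chi_i}.
\]
Hence each subspace $\hilbert_n\otimes\spanset{\ket{\chi_i}}$ is $e(\phi)$-invariant, and the restriction of $e(\phi)$ to it is unitarily equivalent to $e(C)$ (via the obvious isomorphism with $\hilbert_n$). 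Taking the direct sum over $i$, the restriction of $e(\phi)$ to $\hilbert_n\otimes\operatorname{range}(\projector)$ is unitarily equivalent to $e(C)^{\oplus r}$.

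Finally, since $e(\phi)$ is unitary and preserves $\hilbert_n\otimes\operatorname{range}(\projector)$, it also preserves the orthogonal complement $\hilbert_n\otimes\operatorname{range}(\projector)^\perp$; let $U$ denote the unitary it induces on this complement. Choosing any unitary $W$ on $\hilbert_{nk}$ that maps the standard basis to an orthonormal basis adapted to the decomposition $\bigoplus_{i=1}^r\bigl(\hilbert_n\otimes\spanset{\ket{\chi_i}}\bigr)\oplus\bigl(\hilbert_n\otimes\operatorname{range}(\projector)^\perp\bigr)$ yields
\[
W\,e(\phi)\,W^\dagger = \underbrace{e(C)\oplus\cdots\oplus e(C)}_{r}\oplus\, U,
\]
as required. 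The only mild subtlety is verifying that the restrictions to the $\hilbert_n\otimes\spanset{\ket{\chi_i}}$ factors assemble into $e(C)^{\oplus r}$ rather than a twisted sum; this follows because each restriction is intertwined with $e(C)$ by the canonical isomorphism $\ket{\psi}\mapsto\ket{\psi}\otimes\ket{\chi_i}$, so no obstruction arises.
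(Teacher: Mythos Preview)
Your proof is correct and follows essentially the same approach as the paper: both use \cref{prop:leftside} to establish that $e(\phi)$ commutes with $I\otimes\projector$, then split $\hilbert_{nk}$ into the range and kernel of this projector and read off the block structure. The only cosmetic difference is that the paper diagonalizes $\projector$ via the spectral theorem and an explicit swap unitary $W=(V\otimes I_n)\circ\swapalt(n,k)$, whereas you work directly with an orthonormal basis of $\operatorname{range}(\projector)$; these are equivalent ways of constructing the conjugating unitary.
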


\begin{proof}
    Suppose that $C$ is an $n$-dimensional circuit and that
    $(\phi,\projector)$ is a $k$-dimensional catalytic embedding, so
    that $e(\phi)$ acts on $\hilbert = \hilbert_{nk}$. Consider the
    projectors $P=I\otimes \projector$ and $P^\perp = I-P$. Note that
    $PP^\perp = P^\perp P =0$. Note moreover that, by the catalytic
    condition and \cref{prop:leftside}, $Pe(\phi) = e(\phi)P$. We can
    therefore decompose the action of $e(\phi)$ on $\hilbert$ into its
    action on the subspaces associated with $P$ and $P^\perp$ as
    follows:
    \begin{align*}
        e(\phi) &= (P+P^{\perp})e(\phi)(P+P^{\perp}) \\
        & = Pe(\phi)P + Pe(\phi)P^\perp + P^\perp e(\phi)P + P^\perp e(\phi)P^\perp \\
        & = Pe(\phi)P + e(\phi)PP^\perp + P^\perp Pe(\phi) + P^\perp e(\phi)P^\perp \\        
        & = P e(\phi) P + P^{\perp} e(\phi) P^{\perp},
    \end{align*}
    Now let $r$ be the rank of $\projector$ and let
    $Q=\diag(I_r,0_{k-r})$. By the spectral theorem, there exists a
    unitary matrix $V$ such that $V\projector V^\dagger = Q$. Let
    $W=(V\otimes I_n)\circ \swapalt(n,k)$. Then $W$ is unitary and
    \[
    W e(\phi) W^\dagger = WP e(\phi) PW^\dagger + WP^{\perp} e(\phi) P^{\perp}W^\dagger
    = Q\otimes e(C) + [(I-Q)\otimes I] W e(\phi)W^\dagger [(I-Q)\otimes I],
    \]
    since $Pe(\phi)P= e(\phi)P = e(C)\otimes \projector$. Thus $W
    e(\phi) W^\dagger = \diag(I_r\otimes e(C), U)$ for some matrix
    $U$. Since $e(\phi)$, $I_r\otimes e(C)$, and $W$ are unitary, $U$
    must also be unitary.
\end{proof}

Rather than focus on the embedding of a single circuit, we will focus
on the embedding of collections of circuits.

\begin{definition}
  \label{def:catembedcircsets}
  Let $\circuitsalt$ and $\circuitsalt'$ be two collections of
  circuits. A \emph{catalytic embedding of $\circuitsalt$ in
  $\circuitsalt'$} is a collection
  $\s{(\phi_C,\projector_C)}[C\in\circuitsalt]$, where, for each
  $C\in\circuitsalt$, the pair $(\phi_C,\projector_C)$ is a catalytic
  embedding of $C$ in $\circuitsalt'$.
\end{definition}  

A catalytic embedding $\s{(\phi_C,\projector_C)}[C\in\circuitsalt]$ as
in \cref{def:catembedcircsets} implicitly specifies two functions. The
function $\phi:C\mapsto \phi_C$, which assigns an embedding to every
circuit in $\circuitsalt$, and the function $\projector:C\to
\projector_C$, which assigns a catalytic projector to every circuit in
$\circuitsalt$. For brevity, we sometimes write $(\phi, \projector):
\circuitsalt\to\circuitsalt'$ to refer to the catalytic embedding
$\s{(\phi_C,\projector_C)}[C\in\circuitsalt]$ of $\circuitsalt$ in
$\circuitsalt'$.

\begin{definition}
  \label{def:hom}
  We say that a catalytic embedding $(\phi, \projector):
  \circuitsalt\to \circuitsalt'$ is \emph{homogeneous} when
  $\projector_C=\projector_D$ for every $C,D\in\circuitsalt$.
\end{definition}

If $(\phi, \projector):\circuitsalt \to \circuitsalt'$ is a
homogeneous catalytic embedding, then there exists an integer $k$ such
that, for every $C\in\circuitsalt$, $(\phi_C, \projector_C)$ is a
$k$-dimensional catalytic embedding. We then call $k$ the
\emph{dimension} of $(\phi,\projector)$. A catalytic embedding
$(\phi,\projector):\circuitsalt \to \circuitsalt'$ can always be
homogenized when $\circuitsalt$ is finite. This can be done, for
example, by replacing $\projector$ by $\bigotimes \projector_C$ and
amending $\phi$ appropriately.

\begin{example}
\label{ex:cat-exactsynth}
An exact synthesis function in the sense of \cref{def:exactsynth} is a
1-dimensional catalytic embedding of $\unitary(\circuitsalt)$ (viewed
as a gate set) in $\circuitsalt$. More generally, exact synthesis
results that rely on ancillas, such as the most general algorithms
introduced in \cite{shende03,Giles2013a}, can also be viewed as
catalytic embeddings: \cite{shende03} introduces a 2-dimensional
catalytic embedding of $S_{2^n}$ in $\circuits_{2^{n+1}}(\s{X, CX,
  CCX})$, and \cite{Giles2013a} introduces a 2-dimensional catalytic
embedding of $\unitary_{2^n}(\Z[1/2,e^{2\pi i/8}])$ in
$\circuits_{2^{n+1}}(\s{H, T, CX})$. In both cases, the embeddings are
homogeneous. The embedding of \cite{Giles2013a} uses the projector
$|0\rangle\langle 0|$, whereas that of \cite{shende03} utilizes the
projector $I_2$.
\end{example}

\begin{definition}
  \label{def:concat}
  The \emph{concatenation} of two catalytic embeddings $(\phi,
  \projector): \circuitsalt\to \circuitsalt'$ and $(\phi',
  \projector'): \circuitsalt'\to \circuitsalt''$ is the catalytic
  embedding $(\phi', \projector')\circ (\phi,\projector) :
  \circuitsalt \to \circuitsalt''$ defined by
  \[
  (\phi', \projector')\circ (\phi,\projector) = (\phi'\circ \phi, \projector \otimes \projector').
  \]
\end{definition}

It is straightforward to verify that the concatenation of two
catalytic embeddings is indeed a catalytic embedding. If
$(\phi,\projector)$ and $(\phi',\projector')$ are two catalytic
embeddings of dimension $k$ and $k'$, respectively, then $(\phi',
\projector')\circ (\phi,\projector)$ is a catalytic embedding of
dimension $kk'$. Moreover, the concatenation of embeddings is
associative and preserves homogeneity.

We will be especially interested in catalytic embeddings when
$\circuitsalt = \circuits(\gatesetG)$ and $\circuitsalt' =
\circuits(\gatesetH)$, for some gate sets $\gatesetG$ and
$\gatesetH$. One often thinks of the gates in a gate set $\gatesetG$
as generators for the circuits in $\circuits(\gatesetG)$. It is
therefore natural to expect that a catalytic embedding of $\gatesetG$
in some collection of circuits $\circuitsalt'$ might extend to a
catalytic embedding of $\circuits(\gatesetG)$ in $\circuitsalt'$. This
is indeed the case, as the following example illustrates.

\begin{example}
\label{ex:extension}
Consider the following circuit $F$ over some gate set $\gatesetG =
\s{A,B,C,D}$:
\begin{center}
\begin{tikzcd}[row sep=0.1cm]
\lstick{$a$} &\gate[3]{A} &\qw      &\gate[2]{D} &\gate[3]{A} &\qw\rstick{$a$}\\
\lstick{$b$} &            &\gate{B} &            &            &\qw\rstick{$b$}\\
\lstick{$c$} &            &\gate{C} &\qw         &            &\qw\rstick{$c$.}
\end{tikzcd}
\end{center}
Suppose that we have a catalytic embedding of $\gatesetG$ in
$\circuitsalt$ such that $(\phi_A,\projector_1)$,
$(\phi_B,\projector_1\otimes\projector_3)$,
$(\phi_C,\projector_1\otimes\projector_2)$, and
$(\phi_D,\projector_1\otimes\projector_3\otimes\projector_2)$ are
catalytic embeddings of $A,B,C$, and $D$ in $\circuitsalt$ with
$\projector_1:\hilbert_d\rightarrow\hilbert_d$,
$\projector_2:\hilbert_e\rightarrow\hilbert_e$, and
$\projector_3:\hilbert_f\rightarrow\hilbert_f$. We can then embed $F$
in $\circuitsalt$ as $(\phi,\projector)$ where $\projector =
\projector_1\otimes \projector_2 \otimes \projector_3$ and $\phi$ is
the circuit below.
\begin{center}
\begin{tikzcd}[row sep=0cm]
\lstick{$a$} &\gate[4]{\phi_A} &\qw &\qw    &\qw              &\qw &\qw   &\qw              &\qw &\myddl &\qw              &\qw &\mydl  &\gate[4]{\phi_A} &\qw\rstick{$a$}\\
\lstick{$b$} &                 &\qw &\mydl  &\qw              &\qw &\mydl &\qw              &\qw &\myul  &\gate[5]{\phi_D} &\qw &\mydl  &                 &\qw\rstick{$b$}\\
\lstick{$c$} &                 &\qw &\myul  &\gate[3]{\phi_B} &\qw &\myul &\gate[3]{\phi_C} &\qw &\myul  &                 &\qw &\myuul &                 &\qw\rstick{$c$}\\
\lstick{$d$} &                 &\qw &\qw    &                 &\qw &\qw   &                 &\qw &\qw    &                 &\qw &\qw    &                 &\qw\rstick{$d$}\\
\lstick{$e$} &\qw              &\qw &\mydl  &                 &\qw &\mydl &                 &\qw &\mydl  &                 &\qw &\mydl  &\qw              &\qw\rstick{$e$}\\
\lstick{$f$} &\qw              &\qw &\myul  &\qw              &\qw &\myul &\qw              &\qw &\myul  &                 &\qw &\myul  &\qw              &\qw\rstick{$f$}
\end{tikzcd}
\end{center}
\end{example}

A catalytic embedding $(\phi,\projector):\gatesetG \to \circuitsalt$
can always be extended to a catalytic embedding $\circuits(\gatesetG)
\to \circuitsalt$, e.g., as in \cref{ex:extension}. Such an extension
takes a particularly nice form when the catalytic embedding
$(\phi,\projector)$ is homogeneous. In this case, the same catalyst
can be used for any circuit over $\gatesetG$.

\begin{definition}
  \label{def:inducedcatembedcirc}
  Let $(\phi,\projector):\gatesetG \to \circuitsalt$ be a homogeneous
  catalytic embedding of dimension $k$ of a gate set $\gatesetG$ in a
  collection of circuits $\circuitsalt$ and let $C$ be a circuit over
  $\gatesetG$. Then the \emph{catalytic embedding of $C$ in
  $\circuitsalt$ induced by $(\phi,\projector)$} is the pair
  $(\overline{\phi}_C, \overline{\projector}_C)$ where
  $\overline{\projector}_C = \projector$ and $\overline{\phi}$ is
  defined by induction on $C$ as follows.
  \begin{itemize}
  \item If $C=I_n$ for some $n\in\N$, then $\overline{\phi}_C=I_n\otimes I_k$.
  \item If $C=\swapalt(n,m)$ for some $n,m\in\N$, then $\overline{\phi}_C=\swapalt(n,m)\otimes I_k$.
  \item If $C=G$ for some $G\in\gatesetG$, then $\overline{\phi}_C=\phi_G$.
  \item If $C=(C_1\circ C_2)$, then $\overline{\phi}_C=(\overline{\phi}_{C_1} \circ \overline{\phi}_{C_2})$.
  \item If $C= (C_1\otimes C_2)$, with $C_1$ of dimension $m$ and $C_2$ of dimension $n$, then 
  \[
  \overline{\phi}_C = 
  (((I_m\otimes \phi_{C_2}) 
  \circ (I_m\otimes \swapalt(k,n)))
  \circ (\phi_{C_1} \otimes I_n))
  \circ (I_m \circ \swapalt(n,k)).
  \]
  \end{itemize}
  We write $\overline{(\phi,\projector)}$ for the catalytic embedding
  $\s{(\overline{\phi}_C,\overline{\projector}_C)\mid C\in
    \circuits(\gatesetG)}:\circuits(\gatesetG)\to\circuitsalt$.
\end{definition}

It can be verified that the induced catalytic embedding introduced in
\cref{def:inducedcatembedcirc} is a well-defined homogeneous catalytic
embedding.

\begin{example}
\label{ex:extension2}
Consider the circuit $F$ from \cref{ex:extension} and assume that $F$
was given as
\[
F = A \circ ((D\otimes I_c) \circ ((I_a \otimes (B \otimes C)) \circ A)).
\]
We can define a homogeneous catalytic embedding of $\gatesetG$ in
$\circuitsalt$ of dimension $g = d\cdot e\cdot f$ via
\[
    (\phi',\projector) = \left\{(\phi_A',\projector),(\phi_B',\projector),(\phi_C',\projector),(\phi_D',\projector)\right\}
\]
where $\projector =
\projector_1\otimes\projector_2\otimes\projector_3$ and
\begin{align*}
    \phi_A' &= \phi_A\otimes I_{ef}\\
    \phi_B' &= (I_{b d}\otimes\swapalt(f,e))\circ((\phi_B\otimes I_e)\circ(I_{b d}\otimes\swapalt(e,f)))\\
    \phi_C' &= \phi_C\otimes I_f\\
    \phi_D' &= (I_{a b d}\otimes\swapalt(f,e))\circ(\phi_D\circ(I_{a b d}\otimes\swapalt(e,f))).
\end{align*}
The corresponding circuit diagram for the image of $F$ under the
action of the catalytic embedding induced by $(\phi',\projector)$ is
the circuit depicted below.
\begin{center}
\begin{tikzcd}[row sep=0cm]
\lstick{$a$}&\gate[4]{\phi_A'} &\qw &\qw     &\qw &\qw   &\qw     &\qw &\qw   &\qw              &\qw &\qw   &\qw              &\qw &\qw   &\gate[3]{\phi_D'} &\qw &\qw   &\gate[4]{\phi_A'} &\qwrs{$a$}\\
\lstick{$b$}&                 &\qw &\myddll &\qw &\mydl &\qw     &\qw &\qw   &\gate[2]{\phi_B'} &\qw &\qw   &\qw              &\qw &\qw   &                 &\qw &\qw   &                 &\qwrs{$b$}\\
\lstick{$c$}&                 &\qw &\myul   &\qw &\mydl &\qw     &\qw &\mydl &                 &\qw &\mydl &\gate[2]{\phi_C'} &\qw &\mydl &                 &\qw &\mydl &                 &\qwrs{$c$}\\
\lstick{$g$}&                 &    &\myul   &\qw &      &\myuull &\qw &\myul &\qw              &\qw &\myul &                 &\qw &\myul &\qw              &\qw &\myul &                 &\qwrs{$g$}
\end{tikzcd}
\end{center}
It is a straightforward exercise to show that this circuit is
equivalent to the na\"ive construction of \cref{ex:extension} by using
well-known properties of swap operations.
\end{example}

Induced catalytic embeddings can be used to concatenate embedings of
gate sets. Suppose, for example, that $\gatesetF$, $\gatesetG$, and
$\gatesetH$ are three gate sets and that $(\phi,\projector): \gatesetF
\to \circuits(\gatesetG)$ and $(\phi',\projector'):\gatesetG \to
\circuits(\gatesetH)$ are catalytic embeddings. We can then define a
catalytic embedding $(\phi'',\projector''):\gatesetF \to
\circuits(\gatesetH)$ by setting
\[
(\phi'',\projector'') = \overline{(\phi',\projector')}\circ (\phi, \projector).
\]
Of course, the catalytic embedding $(\phi'',\projector'')$ itself
induces a catalytic embedding
$\overline{(\phi'',\projector'')}:\circuits(\gatesetF) \to
\circuits(\gatesetH)$.

\section{Linear Catalytic Embeddings}
\label{sec:lincats}

Catalytic embeddings, as introduced in the previous section, are very
general. In principle, a catalytic embedding can be defined for
arbitrary collections of circuits $\circuitsalt$ and $\circuitsalt'$
and is not required to preserve any structure, beyond that imposed by
the catalytic condition. This generality can make it rather daunting
to construct catalytic embeddings that might prove useful in any
way. We thus turn our attention to catalytic embeddings that preserve
the structure that may be present in underlying the collections of
matrices $\unitary(\circuitsalt)$ and $\unitary(\circuitsalt')$.

\begin{definition}
    \label{def:strong}
     We say that a catalytic embedding $(\phi, \projector):
     \circuitsalt\to \circuitsalt'$ is \emph{strong} when,
    \[
        e\circ\phi(C)=e\circ\phi(D)\iff e(C)=e(D)
    \]
    for all $C,D\in\circuitsalt$.
\end{definition}

\begin{proposition}
    \label{prop:strongequiv}
    Let $(\phi, \projector): \circuitsalt \to \circuitsalt'$ be a
    catalytic embedding. Then $(\phi, \projector)$ is strong if, and
    only if, there exists an injective function $\mu
    :\unitary(\circuitsalt) \to \unitary(\circuitsalt')$ such that the
    following diagram commutes:
    \begin{center}
    \begin{tikzcd}[row sep=large,column sep=large,nodes={inner sep=5pt}]
        \circuitsalt \ar[r, "\phi"]\ar[d, swap, "e"]& \circuitsalt'  \ar[d, "e"] \\
        \unitary[\circuitsalt] \ar[r, swap, "\mu"]& \unitary[\circuitsalt'].
    \end{tikzcd}
    \end{center}
    Moreover, the function $\mu :\unitary(\circuitsalt) \to
    \unitary(\circuitsalt')$ is uniquely determined by $\phi$.
\end{proposition}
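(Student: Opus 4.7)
The plan is to establish the biconditional by using the standard universal property of a surjection: since the evaluation $e:\circuitsalt \to \unitary(\circuitsalt)$ is surjective, any function on $\circuitsalt$ that is constant on the fibres of $e$ descends uniquely to a function on $\unitary(\circuitsalt)$. The strongness condition is exactly what is needed to guarantee both that such a descent exists and that it is injective.

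For the forward direction, assume $(\phi,\projector)$ is strong. I would define $\mu:\unitary(\circuitsalt)\to\unitary(\circuitsalt')$ by $\mu(U) = e(\phi(C))$ for any choice of $C\in\circuitsalt$ with $e(C)=U$ (such a $C$ exists because $e$ is surjective onto $\unitary(\circuitsalt)$). The rightward implication in \cref{def:strong}, namely $e(C)=e(D)\Rightarrow e(\phi(C))=e(\phi(D))$, shows that $\mu$ does not depend on the representative $C$, so it is well-defined. The image lies in $\unitary(\circuitsalt')$ since $\phi(C)\in\circuitsalt'$. Injectivity is immediate from the leftward implication in \cref{def:strong}: if $\mu(e(C))=\mu(e(D))$ then $e(\phi(C))=e(\phi(D))$, hence $e(C)=e(D)$. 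Finally, $\mu\circ e = e\circ \phi$ by construction, so the diagram commutes.

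For the reverse direction, assume an injective $\mu$ making the diagram commute. Then for any $C,D\in\circuitsalt$, $e(\phi(C))=e(\phi(D))$ iff $\mu(e(C))=\mu(e(D))$ (by commutativity) iff $e(C)=e(D)$ (by injectivity of $\mu$), which is precisely the condition that $(\phi,\projector)$ is strong. For uniqueness, if $\mu$ and $\mu'$ both make the diagram commute, then $\mu(e(C)) = e(\phi(C)) = \mu'(e(C))$ for every $C\in\circuitsalt$; since $e$ surjects onto $\unitary(\circuitsalt)$, this forces $\mu=\mu'$.

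There is no substantive obstacle here; the statement is essentially a repackaging of \cref{def:strong} via the universal property of quotienting by the kernel of $e$. The only point requiring care is confirming that the codomain of $\mu$ can indeed be taken to be $\unitary(\circuitsalt')$ rather than all of $\unitary(\C)$, which follows because $\phi(C)\in\circuitsalt'$ and hence $e(\phi(C))\in e[\circuitsalt']=\unitary(\circuitsalt')$ by \cref{def:unitarygateset}.
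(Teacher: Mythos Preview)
Your proof is correct and takes essentially the same approach as the paper. The only cosmetic difference is that the paper phrases the choice of preimage via an exact synthesis function $s:\unitary(\circuitsalt)\to\circuitsalt$ and sets $\mu = e\circ\phi\circ s$, whereas you invoke the universal property of the surjection $e$ directly; these are the same construction, since an exact synthesis function is precisely a section of $e$.
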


\begin{proof}
  For the left-to-right direction, let
  $s:\unitary[\circuitsalt]\rightarrow\circuitsalt$ be any exact
  synthesis function and define $\mu=e\circ\phi\circ s$. Then, for any
  $C\in\circuitsalt$, we have $\mu\circ e(C) = e\circ\phi\circ s\circ
  e (C)$.  The function $s$ is an exact synthesis function, so that
  $e\circ s$ is trivial and therefore that $e\circ (s\circ e)(C) =
  (e\circ s)\circ e(C)= e(C)$. Because $(\phi,\projector)$ is strong,
  this implies that $e\circ\phi \circ s\circ e(C) = e\circ \phi (C)$
  and therefore that $\mu\circ e(C)= e\circ\phi(C)$, so that the
  diagram in \cref{prop:strongequiv} commutes. Reasoning similarly, we
  get
  \[
      \mu(U)=\mu(V)\implies e\circ\phi\circ s(U)=e\circ\phi\circ
      s(V)\implies e\circ s(U)=e\circ s(V)\implies U=V,
  \]
  so that $\mu$ is injective.
  
  For the right-to-left direction, assume that there exists an
  injective function $\mu :\unitary(\circuitsalt) \to
  \unitary(\circuitsalt')$ making the diagram in
  \cref{prop:strongequiv} commute. The equality $e(U)=e(V)$ implies
  $\mu\circ e (U) = \mu \circ e(V)$, which implies $e\circ \phi
  (U)=e\circ \phi(V)$ by commutativity of the diagram. Conversely, the
  equality $e\circ \phi (U) = e\circ \phi (V)$ implies $\mu\circ e(U)
  = \mu\circ e(V)$ by commutativity of the diagram, which implies
  $e(U)=e(V)$ by injectivity of $\mu$. Hence, $(\phi,\projector)$ is
  strong.

  To see that $\mu$ is uniquely defined, note that if
  $\mu':\unitary[\circuitsalt]\to\unitary[\circuitsalt']$ makes the
  diagram in \cref{prop:strongequiv} commute then, for any
  $U\in\unitary[\circuitsalt']$, we have $\mu'(U) = \mu' \circ (e\circ
  s)(U) = \mu' \circ e (s(U)) = e\circ \phi (s(U)) = \mu (U)$.
\end{proof}

Note that in defining the function $\mu$ in \cref{prop:strongequiv},
one can choose any exact synthesis function. Indeed, if $s$ and $s'$
are two such functions, then the uniqueness of $\mu$ implies that
$e\circ \phi \circ s = e\circ \phi \circ s'$.

In practical contexts, one typically studies collections of circuits
of the form $\circuits(\gatesetG)$ and $\circuits(\gatesetH)$, for
some gate sets $\gatesetG$ and $\gatesetH$ of interest. In these cases
the collections $\circuits(\gatesetG)$ and $\circuits(\gatesetH)$ are
closed under several circuit-building operations. Consequently, the
collections $\unitary[\circuits(\gatesetG)]$ and
$\unitary[\circuits(\gatesetH)]$ are endowed with some structure. The
next proposition shows that, for collections of this form, the map
$\mu$ associated with a strong induced catalytic embedding preserves
much of this structure.

\begin{proposition}
    \label{prop:strongembed}
    Let $\gatesetG$ and $\gatesetH$ be two gate sets, let $(\phi,
    \projector): \circuits(\gatesetG)\to \circuits(\gatesetH)$ be a
    strong induced $k$-dimensional catalytic embedding, and let $\mu
    :\unitary(\circuits(\gatesetG)) \to
    \unitary(\circuits(\gatesetH))$ be as in
    \cref{prop:strongequiv}. Then, for any
    $U,V\in\unitary[\circuits(\gatesetG)]$ with $\rank(U)=\ell$ and
    $\rank(V)=m$,
    \begin{enumerate}
        \item $ \mu(U)(I_\ell\otimes\projector)=U\otimes\projector$,
        \item $\mu(I_n)=I_{k n}~\text{for any positive integer}~n$,
        \item $\mu(VU)=\mu(V)\mu(U) ~\text{if}~\ell=m$, and
        \item $\mu(V\otimes U) =\left(\swapalt(\ell,m)\otimes
          I_k\right)\left(I_\ell\otimes\mu(V)\right)\left(\swapalt(m,\ell)\otimes
          I_k\right)\left(I_m\otimes\mu(U)\right)$.
    \end{enumerate}
\end{proposition}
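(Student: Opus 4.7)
The plan is to derive all four identities from the defining relation $\mu\circ e = e\circ\phi$ established in \cref{prop:strongequiv}, combined with the explicit inductive clauses of \cref{def:inducedcatembedcirc} that govern an induced catalytic embedding. Strongness of $(\phi,\projector)$ will be used implicitly throughout, since it guarantees that $\mu(W)$ is independent of the particular circuit I pick to represent $W$.

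For part 1, I pick any circuit $C$ with $e(C)=U$ so that $\mu(U)=e(\phi_C)$; the catalytic condition satisfied by $(\phi_C,\projector)$ then immediately yields $\mu(U)(I_\ell\otimes\projector)=e(C)\otimes\projector=U\otimes\projector$. Part 2 is immediate from the first inductive clause $\phi_{I_n}=I_n\otimes I_k$, giving $\mu(I_n)=e(\phi_{I_n})=I_{nk}$. For part 3, picking circuits $C,D$ with $e(C)=U$ and $e(D)=V$ gives $VU=e(D\circ C)$, and the composition clause of \cref{def:inducedcatembedcirc} yields $\phi_{D\circ C}=\phi_D\circ\phi_C$, whence $\mu(VU)=e(\phi_D)e(\phi_C)=\mu(V)\mu(U)$.

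Part 4 is the main challenge. My plan is to factor $V\otimes U=(V\otimes I_\ell)(I_m\otimes U)$ so that part 3 gives
\[
\mu(V\otimes U)=\mu(V\otimes I_\ell)\,\mu(I_m\otimes U).
\]
I then evaluate each factor using the tensor clause of \cref{def:inducedcatembedcirc} applied to circuits of the form $I_m\otimes C$ and $D\otimes I_\ell$. Substituting $\phi_{I_m}=I_{mk}$ and $\phi_{I_\ell}=I_{\ell k}$ and using that $\swapalt(k,\ell)\swapalt(\ell,k)=I_{k\ell}$, the identity factors in the induced formula collapse to give
\[
\mu(I_m\otimes U)=I_m\otimes\mu(U),\qquad \mu(V\otimes I_\ell)=(I_m\otimes\swapalt(k,\ell))(\mu(V)\otimes I_\ell)(I_m\otimes\swapalt(\ell,k)).
\]
It then remains to establish the swap identity
\[
(I_m\otimes\swapalt(k,\ell))(\mu(V)\otimes I_\ell)(I_m\otimes\swapalt(\ell,k))=(\swapalt(\ell,m)\otimes I_k)(I_\ell\otimes\mu(V))(\swapalt(m,\ell)\otimes I_k),
\]
which I would verify by computing both sides on a generic basis state $\ket{a}_m\ket{b}_\ell\ket{c}_k$ and observing that each implements $\mu(V)$ on registers $1$ and $3$ while leaving register $2$ untouched; alternatively, one may use naturality $\swapalt(x,y)(A\otimes B)=(B\otimes A)\swapalt(x,y)$ to rearrange one side into the other.

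The hardest part will be tracking registers and dimensions through the swap calculations in part 4. The conceptual content, however, is mild: once the factorization $V\otimes U=(V\otimes I_\ell)(I_m\otimes U)$ is in place, everything reduces to bookkeeping together with the observation that any two conjugations of $\mu(V)$ by swaps that place it on the same pair of nonadjacent registers must agree.
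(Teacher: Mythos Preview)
Your proposal is correct and follows essentially the same approach as the paper. The paper frames the argument as ``choose a convenient exact synthesis function $s$ for each assertion and use $\mu=e\circ\phi\circ s$,'' while you frame it as ``pick a convenient circuit representing each unitary and use $\mu\circ e=e\circ\phi$''; these are equivalent. If anything, your treatment of part 4 is slightly more explicit than the paper's: the paper simply says to ``reason analogously'' to obtain $\mu(V\otimes I_\ell)=(\swapalt(\ell,m)\otimes I_k)(I_\ell\otimes\mu(V))(\swapalt(m,\ell)\otimes I_k)$, whereas you correctly observe that the tensor clause of \cref{def:inducedcatembedcirc} first yields $(I_m\otimes\swapalt(k,\ell))(\mu(V)\otimes I_\ell)(I_m\otimes\swapalt(\ell,k))$ and that a swap identity is then needed to bring this into the stated form.
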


\begin{proof}
  Recall from \cref{prop:strongequiv} that for any exact synthesis
  function $s:\unitary(\circuits(\gatesetH)) \to
  \circuits(\gatesetH)$, we have $\mu=e\circ\phi\circ s$. We take
  advantage of this here by choosing a convenient exact synthesis
  function for each one of the assertions to be established. For the
  first assertion, let $s$ be arbitrary. By the catalytic condition
  and the fact that $e\circ s$ is trivial, we have
  \[
      \mu(U)(I_\ell\otimes\projector) = e\circ\phi\circ
      s(U)(I_\ell\otimes\projector) = e\circ s(U) \otimes\projector=
      U\otimes\projector
  \]
  so that the assertion follows. For the second assertion, take an
  exact synthesis map $s$ such that $s(I_n)$ is the circuit
  $I_n$. Then, $\mu(I_n) = e\circ \phi \circ s(I_n) = e\circ \phi
  (I_n) = e( I_n \otimes I_k) = I_{nk}$ by
  \cref{def:inducedcatembedcirc}, since $(\phi, \projector)$ is an
  induced catalytic embedding. For the third assertion, let $s$ be a
  synthesis map such that $s(VU)=s(V)\circ s(U)$. Then, writing $\mu$
  as $\mu=e\circ \phi \circ s$, we get
    \begin{align*}
      \mu(VU)=e\circ\phi\circ s(VU) = e\circ \phi (s(V)\circ s(U)) =
      e((\phi\circ s(V))\circ(\phi\circ s(U))) = (e\circ\phi\circ
      s(V)) \circ(e\circ\phi\circ s(U)) = \mu(V)\mu(U),
  \end{align*}
  using \cref{def:evaluation,def:inducedcatembedcirc}. For the final
  assertion, note that $V\otimes U = (V\otimes I_\ell)(I_m\otimes
  U)=(I_m\otimes U)(V\otimes I_\ell)$. Thus, the fourth assertion
  follows from the third, as long as we can show that $\mu(I_m\otimes
  U)=I_m\otimes \mu(U)$ and that
  \[
  \mu(V\otimes I_\ell) = \left(\swapalt(\ell,m)\otimes
  I_k\right)\left(I_\ell\otimes\mu(V)\right)\left(\swapalt(m,\ell)\otimes
  I_k\right).
  \]
  To show that $\mu(I_m\otimes U)=I_m\otimes \mu(U)$, let $s$ be such
  that $s(I_m\otimes U) = I_m \otimes s(U)$. We then have, using
  \cref{def:evaluation,def:inducedcatembedcirc}, and properties of the
  unitary $\swapalt(n,k)$,
  \begin{align*}
  \mu(I_m\otimes U) &= e\circ \phi \circ s(I_m\otimes U) \\
  &= e\circ \phi (I_m\otimes s(U)) \\
  &= e(((I_m\otimes \phi(s(U))) 
  \circ (I_m\otimes \swapalt(k,n)))
  \circ (I_{mk} \otimes I_n))
  \circ (I_m \circ \swapalt(n,k))\\
  &= e(I_m\otimes \phi(s(U))) \\
  &= I_m\otimes e(\phi(s(U))) \\
  &= I_m\otimes \mu(U).  
  \end{align*}
  To show that $ \mu(V\otimes I_\ell) = \left(\swapalt(\ell,m)\otimes
  I_k\right)\left(I_\ell\otimes\mu(V)\right)\left(\swapalt(m,\ell)\otimes
  I_k\right)$, one can reason analogously, choosing an exact synthesis
  function $s$ satisfying $s(V\otimes I_\ell)=s(V)\otimes I_\ell$.
\end{proof}

\begin{remark}
  Both $\unitary[\circuits(\gatesetG)]$ and
  $\unitary[\circuits(\gatesetH)]$ have the structure of a
  groupoid. \cref{prop:strongembed} shows that the map $\mu$ is a
  faithful groupoid functor from $\unitary[\circuits(\gatesetG)]$ to
  $\unitary[\circuits(\gatesetH)]$.
\end{remark}

It is natural to seek a converse statement to
\cref{prop:strongembed}. Indeed, this would provide conditions under
which a function $\unitary[\circuits(\gatesetG)] \to
\unitary[\circuits(\gatesetH)]$ can be used to define a catalytic
embedding $\circuits(\gatesetG) \to \circuits(\gatesetH)$. This is the
goal of the following proposition.

\begin{proposition}
  \label{prop:strongunitarymap}
  Let $\gatesetG$ and $\gatesetH$ be two gate sets and let
  $s:\unitary(\circuits(\gatesetH)) \to \circuits(\gatesetH)$ be an
  exact synthesis function. If
  $\projector:\hilbert_k\rightarrow\hilbert_k$ is an orthogonal
  projector of nonzero rank and $\mu:\unitary(\circuits(\gatesetG))
  \rightarrow \unitary(\circuits(\gatesetH))$ is a function such that,
  for any $U,V\in\unitary[\circuits(\gatesetG)]$ with $\rank(U)=\ell$
  and $\rank(V)=m$,
  \begin{enumerate}
    \item $\mu(U)(I_\ell\otimes\projector)=U\otimes\projector$,
    \item $\mu(I_n)=I_{k n}~\text{for any positive integer } n$,
    \item $\mu(VU)=\mu(V)\mu(U) ~\text{if}~\ell=m$, and
    \item $\mu(V\otimes U)=\left(\swapalt(\ell,m)\otimes
      I_k\right)\left(I_\ell\otimes\mu(V)\right)\left(\swapalt(m,\ell)\otimes
      I_k\right)\left(I_m\otimes\mu(U)\right)$,
  \end{enumerate}
  then $\overline{(\phi,\projector)}$, where $\phi=s\circ \mu \circ
  e$, is a strong and homogeneous $k$-dimensional catalytic embedding
  of $\circuits(\gatesetG)$ in $\circuits(\gatesetH)$.
\end{proposition}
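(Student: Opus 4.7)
The plan is to verify in three stages that $\overline{(\phi,\projector)}$ is a strong homogeneous $k$-dimensional catalytic embedding of $\circuits(\gatesetG)$ in $\circuits(\gatesetH)$: first, that the restriction of $\phi = s\circ\mu\circ e$ to $\gatesetG$ gives a homogeneous catalytic embedding of the gate set; second, that the induced construction then provides a well-defined homogeneous catalytic embedding of $\circuits(\gatesetG)$; and third, that this induced embedding is strong.

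For the first stage, fix $G\in\gatesetG$ and observe that $e(\phi_G) = e(s(\mu(e(G)))) = \mu(e(G))$ because $e\circ s$ is the identity on $\unitary(\circuits(\gatesetH))$ by definition of exact synthesis. Condition~(1) then yields $e(\phi_G)(I\otimes\projector) = e(G)\otimes\projector$, which is the catalytic condition at $G$; homogeneity is immediate since $\projector$ does not depend on $G$. The second stage follows directly from the fact, noted after \cref{def:inducedcatembedcirc}, that inducing from a homogeneous catalytic embedding of a gate set yields a homogeneous catalytic embedding of the corresponding circuits.

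For the third stage, I appeal to \cref{prop:strongequiv}: strongness reduces to exhibiting an injective function $\nu : \unitary(\circuits(\gatesetG)) \to \unitary(\circuits(\gatesetH))$ such that $e\circ\overline{\phi} = \nu\circ e$, and I take $\nu = \mu$. Injectivity of $\mu$ is immediate from condition~(1) and the nonzero rank of $\projector$: $\mu(U) = \mu(V)$ forces $U\otimes\projector = V\otimes\projector$ and hence $U = V$. It then remains to prove $e(\overline{\phi}_C) = \mu(e(C))$ for all $C\in\circuits(\gatesetG)$ by structural induction on $C$. The cases $C = I_n$ and $C = G$ follow from condition~(2) and the first stage; the composition case $C = C_1\circ C_2$ follows from condition~(3) combined with the inductive hypothesis.

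The main obstacle is the interplay between the induced embedding's treatment of swaps and tensor products and the corresponding algebraic conditions on $\mu$. For $C = \swapalt(n,m)$ I must establish $\mu(\swapalt(n,m)) = \swapalt(n,m)\otimes I_k$; I obtain this by applying condition~(3) to the intertwining identity $\swapalt(m,n)(V\otimes I_n) = (I_n\otimes V)\swapalt(m,n)$, combining with the specializations of condition~(4) giving $\mu(I_n\otimes V) = I_n\otimes\mu(V)$ and the swap-conjugation formula $\mu(V\otimes I_n) = (\swapalt(n,m)\otimes I_k)(I_n\otimes\mu(V))(\swapalt(m,n)\otimes I_k)$, and using condition~(1) to pin down the action on the catalyst subspace. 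For $C = C_1\otimes C_2$, the induced formula for $e(\overline{\phi}_{C_1\otimes C_2})$ from \cref{def:inducedcatembedcirc} applies (after invoking the inductive hypothesis) $\mu(e(C_1))$ on the $(m,k)$ factors before $\mu(e(C_2))$ on the $(n,k)$ factors, whereas condition~(4) expresses $\mu(e(C_1)\otimes e(C_2))$ with the opposite ordering. I reconcile these by using the dual decomposition $V\otimes U = (V\otimes I_n)(I_m\otimes U) = (I_m\otimes U)(V\otimes I_n)$ and condition~(3) to deduce that $\mu(V\otimes I_n)$ and $\mu(I_m\otimes U) = I_m\otimes\mu(U)$ commute, which is precisely the identity needed to equate the two orderings. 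Once $e(\overline{\phi}_C) = \mu(e(C))$ is established, strongness follows: $e(C) = e(D)$ forces $\mu(e(C)) = \mu(e(D))$ and hence $e(\overline{\phi}_C) = e(\overline{\phi}_D)$, while the converse is the injectivity of $\mu$ just noted.
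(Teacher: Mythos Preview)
Your proof follows essentially the same approach as the paper's: verify the catalytic condition on gates to get a homogeneous embedding, induce to all of $\circuits(\gatesetG)$, and then invoke \cref{prop:strongequiv} with $\nu=\mu$, deriving injectivity of $\mu$ from condition~(1) and commutativity of the diagram by structural induction on circuits. The paper compresses the entire induction into the single sentence ``Since $e$ is multiplicative and behaves trivially with respect to identities and swaps, conditions 2--4 ensure that $e\circ\overline{\phi}(U)=\mu\circ e(U)$,'' whereas you spell out the individual cases; in particular, your commutation argument for the tensor case (deducing that $\mu(V\otimes I_n)$ and $I_m\otimes\mu(U)$ commute via condition~(3)) and your sketch for the swap case make explicit what the paper leaves implicit.
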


\begin{proof}
    Write $\phi=s\circ \mu \circ e$. For $G\in\gatesetG$, we have
    \[
        e(\phi (G))(I\otimes \projector) = e\circ s\circ\mu\circ
        e(G)(I\otimes\projector) = \mu(e(G))(I\otimes\projector) =
        e(G)\otimes\projector
    \]
    so that $(\phi,\projector):\gatesetG \to \circuits(\gatesetH)$ is
    a homogeneous catalytic embedding of dimension $k$. Write
    $\overline{(\phi,\projector)}$ for the induced catalytic
    embedding. To show that $\overline{(\phi, \projector)}$ is strong,
    we use \cref{prop:strongequiv}. By the condition 1, $\mu$ is
    injective so that we only need to show that the diagram of
    \cref{prop:strongequiv} commutes. Let
    $U\in\circuits(\gatesetG)$. By \cref{def:circs}, $U$ is a
    well-formed word over $\gatesetG\cup \s{I_n \mid n\in\N^*} \cup
    \s{\swapalt(m,n) \mid m,n\in\N^*} \cup \s{\circ, \otimes, (,)}$
    and $\overline{\phi}(U)$ is the word obtained from $U$ by
    \cref{def:inducedcatembedcirc}. Note that, for $G\in\gatesetG$, we
    have
    \[
        e\circ\overline{\phi}(G) = e\circ \phi(G) = e\circ s\circ\mu\circ e(G) =
        \mu\circ e(G),
    \]
    so that $e\circ\overline{\phi}$ and $\mu\circ e$ agree on
    $\gatesetG$. Since $e$ is multiplicative and behaves trivially
    with respect to identities and swaps, conditions 2--4 ensure that
    \[
        e\circ\overline{\phi}(U) = \mu\circ e(U).
    \]
    Hence, the diagram commutes as desired.
\end{proof}

\begin{definition}[Lifting]
    \label{def:lifting}
    Let $\mu$ and $\phi$ be as in \cref{prop:strongunitarymap}. The
    catalytic embedding $\phi$ is called the \emph{lifting} of $\mu$.
\end{definition}

\cref{prop:strongunitarymap} suggests that to find a catalytic
embedding between $\circuits(\gatesetG)$ and $\circuits(\gatesetH)$,
we can turn to the corresponding collection of matrices.  Note that
for any gate set $\gatesetG$, we can always find a number ring
$\ringS$ such that $\unitary[\gatesetG]\subseteq\unitary[\ringS]$. In
fact, as discussed earlier, some important gate sets in quantum
computing can be identified as exactly the set of unitary matrices of
appropriate size which have entries in a particular number ring. These
gate sets, such as the Clifford+$T$ gate set, are often used for the
fault-tolerant implementation of unitary operations. These properties
naturally motivate the study of maps between unitary matrices over
number rings, paying particular attention to maps into number rings
associated with fault-tolerant gate sets. We thus consider what
additional structure can be added to facilitate studying such
maps. Unitary groups over rings are only closed under multiplication.
However, when representing the elements of these groups in the
computational basis as matrices over number rings, it is often
convenient to use the additive structure of the underlying algebra of
matrices.  We thus focus on maps which respect this additive structure
on matrices whilst also preserving unitarity.

\begin{proposition}
    \label{prop:embedformal}
    Let $\ringR$ and $\ringS$ be Kroneckerian number rings with
    $\fieldfrac{\ringR},\fieldfrac{\ringS}\subseteq \fieldK$ for a
    field $\fieldK$, $\ringT=\ringR\bigcap\ringS$,
    $A,B\in\matrices[\ringS]$, and $C\in\matrices[\ringT]$. Suppose
    that $\Phi:\matrices[\ringS]\rightarrow\matrices[\ringR]$
    satisfies the following conditions:
    \begin{enumerate}
        \item $\Phi(AB) = \Phi(A)\Phi(B)$ when $AB$ is defined
        \item $\Phi(A+B)=\Phi(A)+\Phi(B)$ when $A+B$ is defined
        \item $\Phi(I_n)=I_{kn}$ for some fixed $k$
        \item $\Phi(A^\dagger)=\Phi(A)^\dagger$
        \item $\Phi(C\otimes A)= C\otimes\Phi(A)$
        \item There exists an orthogonal projector
          $\projector:\hilbert_k\rightarrow\hilbert_k$ with
          $\rank(\projector)>0$ such that
          $\Phi(M)(I\otimes\projector)=M\otimes\projector$ where
          $\dim(M)=\dim(I)$ for all $M\in\matrices[\ringS]$
    \end{enumerate}
    Then $\Phi(\unitary[\ringS])\subset\unitary[\ringR]$ and the map
    $\mu:\unitary[\ringS]\rightarrow\unitary[\ringR]$ with
    $\mu(U):=\Phi(U)$ can be lifted to a homogeneous, strong catalytic
    embedding of $\circuits(\ringS)$ in $\circuits(\ringR)$ of
    dimension $k$.
\end{proposition}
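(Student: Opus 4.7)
The plan is to first check that $\Phi$ restricts to a well-defined map $\mu:\unitary[\ringS]\to\unitary[\ringR]$, and then verify the four hypotheses of \cref{prop:strongunitarymap} for this $\mu$ so as to obtain the desired lifting.

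For the first step, given $U\in\unitary[\ringS]$, I would combine conditions 1, 3, and 4 to compute $\Phi(U)\Phi(U)^\dagger=\Phi(U)\Phi(U^\dagger)=\Phi(UU^\dagger)=\Phi(I)=I$, and symmetrically $\Phi(U)^\dagger\Phi(U)=I$. Hence $\Phi(U)\in\unitary[\ringR]$, and setting $\mu:=\Phi|_{\unitary[\ringS]}$ gives a function $\unitary[\ringS]\to\unitary[\ringR]$.

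Next I would verify the four numbered hypotheses in \cref{prop:strongunitarymap} for $\mu$. Hypothesis (1) there is exactly condition 6; hypothesis (2) is condition 3; hypothesis (3) (multiplicativity on same-dimension products) follows from condition 1. The real work is in hypothesis (4), which involves swap matrices. The main observation is that $\swapalt(\ell,m)$ has integer entries and therefore lies in $\matrices[\ringT]$. Applying condition 5 with $C=\swapalt(\ell,m)$ and $A=I_1$, and using condition 3 to get $\Phi(I_1)=I_k$, yields $\Phi(\swapalt(\ell,m))=\swapalt(\ell,m)\otimes I_k$. Likewise, since $I_m,I_\ell\in\matrices[\ringT]$, condition 5 gives $\Phi(I_m\otimes U)=I_m\otimes \mu(U)$ and $\Phi(I_\ell\otimes V)=I_\ell\otimes \mu(V)$. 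Starting from the standard identity
\[
V\otimes U=\swapalt(\ell,m)\bigl(I_\ell\otimes V\bigr)\swapalt(m,\ell)\bigl(I_m\otimes U\bigr),
\]
and applying condition 1 four times, I obtain exactly the formula required by hypothesis (4) of \cref{prop:strongunitarymap}.

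With all four hypotheses verified, I would invoke \cref{prop:strongunitarymap} with this $\mu$, with any fixed exact synthesis function $s$ for circuits over $\unitary[\ringR]$, and with the projector $\projector$ of condition 6. This produces $\phi=s\circ\mu\circ e$ and shows that $\overline{(\phi,\projector)}$ is a strong, homogeneous, $k$-dimensional catalytic embedding of $\circuits(\ringS)$ in $\circuits(\ringR)$, as claimed. I expect the bookkeeping for hypothesis (4) to be the main obstacle; once one notices that all swap and identity matrices lie in $\matrices[\ringT]$ and can therefore be passed through $\Phi$ via condition 5, the remaining steps are mechanical combinations of the algebraic hypotheses on $\Phi$.
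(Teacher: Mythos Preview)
Your proposal is correct and follows essentially the same approach as the paper's proof: both verify unitarity via conditions 1, 3, 4 and then check the hypotheses of \cref{prop:strongunitarymap}, with the only nontrivial step being the tensor-product formula, handled by rewriting $V\otimes U$ via swaps and using condition 5 (with $A=I_1$) to compute $\Phi(\swapalt(\ell,m))=\swapalt(\ell,m)\otimes I_k$ and $\Phi(I_m\otimes U)=I_m\otimes\mu(U)$.
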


\begin{proof}
    Let $U\in\unitary[\ringS]$. By conditions 3 and 4 we have
    \[
    \Phi(U)\Phi(U)^\dagger=\Phi(U)\Phi(U^\dagger)=\Phi(UU^\dagger)=\Phi(I)=I_k,
    \]
    and so $\Phi$ maps unitary matrices to unitary matrices and $\mu$
    is well-defined. Next, we would like to show that $\mu$ satisfies
    the conditions presented in \cref{prop:strongunitarymap} so that
    we can lift it to a strong and homogeneous catalytic embedding. By
    inspection, we see that the only condition not immediately
    satisfied by our assumptions is the action of $\mu$ on $V\otimes
    U$ for $U,V\in\unitary[\ringS]$. For $\rank(U)=\ell$ and
    $\rank(V)=m$, we calculate
    \begin{align*}
        \mu(V\otimes U) &= \mu(\swapalt(l,m)\circ (I_\ell\otimes V)\circ\swapalt(m,l)\circ (I_m\otimes U))\\
        &= \mu(\swapalt(l,m))\mu(I_\ell\otimes V)\mu(\swapalt(m,l))\mu(I_m\otimes U)\\
        &= \mu(\swapalt(l,m))\otimes I_1)\mu(I_\ell\otimes V)\mu(\swapalt(m,l)\otimes I_1)\mu(I_m\otimes U)\\
        &= (\swapalt(l,m))\otimes \mu(I_1))(I_\ell\otimes \mu(V))(\swapalt(m,l)\otimes \mu(I_1))(I_m\otimes \mu(U))\\
        &= (\swapalt(l,m))\otimes I_k)(I_\ell\otimes \mu(V))(\swapalt(m,l)\otimes I_k)(I_m\otimes \mu(U))
    \end{align*}
    and thus all conditions of \cref{prop:strongunitarymap}
    hold. Therefore, we can lift $\mu$ to a homogeneous and strong
    catalytic embedding of $\circuits(\ringS)$ over
    $\circuits(\ringR)$ of dimension $k$.
\end{proof}

As it might not be immediately obvious from the proof of
\cref{prop:embedformal}, it is worth considering one consequence of
condition 5. Let $U,V\in\circuits(\ringS)$ be such that
$e(V)\in\unitary[\ringR]$ with $\dim(V)=\ell$ and $\dim(U)=m$. Lifting
$\mu$ as in \cref{prop:strongunitarymap} to $\phi$, we get
    \begin{align*}
        e\circ\phi(V\otimes U)&=\left(\swapalt(\ell,m)\otimes I_k\right)\left(I_\ell\otimes e\circ\phi(V)\right)\left(\swapalt(m,\ell)\otimes I_k\right)\left(I_m\otimes e\circ\phi(U)\right)\\
        &=\left(\swapalt(\ell,m)\otimes I_k\right)\left(I_\ell\otimes \mu\circ e(V)\right)\left(\swapalt(m,\ell)\otimes I_k\right)\left(I_m\otimes \mu\circ e(U)\right)\\
        &=\mu(e(V)\otimes e(U))\\
        &=\Phi(e(V)\otimes e(U))\\
        &=e(V)\otimes\Phi(e(U))\\
        &=e(V)\otimes \mu\circ e(U)\\
        &=e(V)\otimes e\circ\phi(U).
    \end{align*}
The above derivation shows that any lifting of $\mu$ acts trivially on
the elements of $\circuits(\ringR)$ which are equivalent to an element
of $\circuits(\ringS)$. This is consistent with the type of action we
might desire in practice: we have no obvious need to embed gates or
circuits which we already have direct access to.

We focus on studying maps that satisfy the conditions in
\cref{prop:embedformal} for the remainder of this document and so we
present the following definitions.

\begin{definition}[Linear catalytic embedding, Pre-embedding]
  \label{def:embedformal}
  Let $\Phi$ and $\mu$ be defined as in \cref{prop:embedformal} and
  let $\phi$ be the catalytic embedding that results from lifting
  $\mu$. We say that $\phi$ is a \emph{linear} catalytic embedding and
  that $\Phi$ is the \emph{pre-embedding} of $\phi$.
\end{definition}

While we have introduced embeddings with varying amounts of structure,
we might wonder whether such constructions exist in practice. In fact,
there are distinct instances of each type of embedding, as the
following example highlights.

\begin{example}
    \label{ex:linembed1}
    Let $\alpha$ be a primitive third root of unity. Consider the gate set
    \[
        \gatesetG=\s{R,X}\quad\text{with}\quad e(R)=\begin{bmatrix} 1 & 0 \\ 0 & \alpha\end{bmatrix},~ e(X)=\begin{bmatrix} 0 & 1 \\ 1 & 0\end{bmatrix}.
    \]
    Below are different examples of catalytic embeddings of $R$ and $X$ in $\circuits(\Q)$.
    \begin{enumerate}
        \item $e(\phi_R)=\begin{bmatrix}
            I_3 & 0\\
            0 & \Lambda
        \end{bmatrix}$, $e(\phi_X)=\begin{bmatrix}
            0 & I_3\\
            I_3 & 0
        \end{bmatrix}$, $\projector= \frac{1}{3}\begin{bmatrix}
            1 & \alpha & \alpha^2\\
            \alpha^2 & 1 & \alpha\\
            \alpha & \alpha^2 & 1
        \end{bmatrix}$,  for $\Lambda=\frac{1}{3}\begin{bmatrix}
            -2 & -2 & 1\\
            1 & -2 & -2\\
            -2 & 1 & -2
        \end{bmatrix}$
        
        \item $e(\phi_R)=\begin{bmatrix}
            I_3 & 0\\
            0 & \Lambda
        \end{bmatrix}$, $e(\phi_X)=\begin{bmatrix}
            0 & I_3\\
            I_3 & 0
        \end{bmatrix}$, $\projector= \frac{1}{3}\begin{bmatrix}
            1 & \alpha & \alpha^2\\
            \alpha^2 & 1 & \alpha\\
            \alpha & \alpha^2 & 1
        \end{bmatrix}$,  for $\Lambda=\begin{bmatrix}
            0 & 0 & 1\\
            1 & 0 & 0\\
            0 & 1 & 0
        \end{bmatrix}$
        
        \item $e(\phi_R)=\begin{bmatrix}
            I_4 & 0\\
            0 & \Lambda
        \end{bmatrix}$, $e(\phi_X) = \begin{bmatrix}
            0 & I_4 \\
            I_4 & 0
        \end{bmatrix}$, $\projector = \frac{1}{6}\begin{bmatrix}
            3 & 1+2\alpha & 1+2\alpha & 1+2\alpha\\
            1+2\alpha^2 & 3 & 1+2\alpha & 1+2\alpha^2\\
            1+2\alpha^2 & 1+2\alpha^2 & 3 & 1+2\alpha\\
            1+2\alpha^2 & 1+2\alpha & 1+2\alpha^2 & 3
        \end{bmatrix}$,  for $\Lambda=\frac{1}{2}\begin{bmatrix}
            -1 & -1 & -1 & -1\\
            1 & -1 & -1 & 1\\
            1 & 1 & -1 & -1\\
            1 & -1 & 1 & -1
        \end{bmatrix}$
    \end{enumerate}
    Each of these embeddings is homogeneous, and induces a catalytic
    embedding $\overline{(\phi,\projector)}$ from
    $\circuits(\gatesetG)$ to $\circuits(\Q)$. These induced
    embeddings highlight the different properties a catayltic
    embedding may have:
    \begin{itemize}
        \item Embedding 1 is not strong: $e(R\circ R\circ R) = e(X\circ X)$ while $e(\phi_R\circ\phi_R\circ\phi_R)\neq e(\phi_X\circ\phi_X)$.
        \item Embedding 2 is strong, but not linear: $e(R\circ X\circ R)+e(R\circ R\circ X\circ R\circ R ) = -e(X)$ while \\
        $e(\phi_R\circ \phi_X\circ \phi_R)+e(\phi_R\circ\phi_R\circ \phi_X\circ \phi_R\circ \phi_R) \neq -e(\phi_X)$
        \item Embedding 3 is linear.
    \end{itemize}
    These statements can be verified by direct computation.
\end{example}

We conclude this section with a number of simple properties of
pre-embeddings and linear catalytic embeddings. First, we start with
two lemmas that aid in characterizing possible constructions for
linear catalytic embeddings.

\begin{lemma}
  \label{lem:middle}
  Let $\ringR$ and $\ringS\subset\ringT$ be Kroneckerian subrings of a
  number field. If
  $\Phi:\matrices[\ringT]\rightarrow\matrices[\ringR]$ is the
  pre-embedding for a linear catalytic embedding, then
  $\Phi|_{\ringS}:\matrices[\ringS]\rightarrow\matrices[\ringR]$ is
  the pre-embedding for a linear catalytic embedding.
\end{lemma}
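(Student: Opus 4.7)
The plan is to verify that $\Phi|_{\ringS}$ satisfies each of the six defining conditions of \cref{prop:embedformal}, so that $\Phi|_{\ringS}$ qualifies as the pre-embedding of a linear catalytic embedding in the sense of \cref{def:embedformal}. The starting observation is that $\ringS \subset \ringT$ implies $\matrices[\ringS] \subseteq \matrices[\ringT]$, which makes the restriction $\Phi|_{\ringS}:\matrices[\ringS]\to\matrices[\ringR]$ well-defined as a function with codomain $\matrices[\ringR]$.

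I would then note that conditions 1--4 and condition 6 of \cref{prop:embedformal} transfer immediately. For any $A, B \in \matrices[\ringS] \subseteq \matrices[\ringT]$, multiplicativity, additivity, the value on identities (with the same constant $k$), compatibility with the dagger (which is meaningful because $\ringS$ is Kroneckerian, so $A^\dagger \in \matrices[\ringS]$), and the catalytic projector condition (using the same orthogonal projector $\projector$) all hold for $\Phi|_{\ringS}$ simply because they already hold for $\Phi$ on the enclosing class $\matrices[\ringT]$.

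The only point requiring genuine thought is condition 5, since the intersection ring that governs it depends on the domain: for $\Phi$ it is $\ringR \cap \ringT$, whereas for $\Phi|_{\ringS}$ the statement naturally asks about $\ringR \cap \ringS$. However, $\ringS \subset \ringT$ forces the inclusion $\ringR \cap \ringS \subseteq \ringR \cap \ringT$, so any $C \in \matrices[\ringR \cap \ringS]$ and $A \in \matrices[\ringS]$ admissible for $\Phi|_{\ringS}$ are automatically admissible for $\Phi$ when viewed inside the larger matrix rings, and the identity $\Phi(C \otimes A) = C \otimes \Phi(A)$ transfers verbatim. With all six conditions established, a direct application of \cref{prop:embedformal} yields that $\Phi|_{\ringS}$ is the pre-embedding of a homogeneous, strong linear catalytic embedding of $\circuits(\ringS)$ in $\circuits(\ringR)$ of dimension $k$. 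The lemma is essentially a compatibility check between the pre-embedding property and restriction of the domain ring, and the only bookkeeping obstacle is tracking that the intersection ring controlling condition 5 shrinks in the favourable direction rather than the unfavourable one.
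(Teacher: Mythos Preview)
Your proposal is correct and complete. The paper actually omits the proof entirely, remarking only that the lemma ``may appear rather trivial''; your explicit verification of the six conditions of \cref{prop:embedformal}, including the observation that $\ringR\cap\ringS\subseteq\ringR\cap\ringT$ handles condition~5, is exactly the routine check the paper leaves implicit.
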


While \cref{lem:middle} may appear rather trivial, it is nonetheless
quite useful for proving the (non-)existence of certain linear
catalytic embeddings. For example, the contrapositive of
\cref{lem:middle} implies that if no such $\Phi|_{\ringS}$ can be
constructed, then $\Phi$ cannot exist. This is especially powerful in
instances where the elements of $\ringT$ are simpler to describe than
those of $\ringS$ (or vice-versa).

\begin{lemma}
    \label{lem:nonewndenoms}
    Let $\ringR$ and $\ringS$ be Kroneckerian subrings of a number
    field. If a pre-embedding
    $\Phi:\matrices[\ringS]\rightarrow\matrices[\ringR]$ for a linear
    catalytic embedding exists, then $\ringS\cap\ringR = \ringS\cap
    \fieldfrac{\ringR}$.
\end{lemma}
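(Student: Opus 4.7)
The plan is to prove the nontrivial inclusion $\ringS \cap \fieldfrac{\ringR} \subseteq \ringR$: given $x \in \ringS \cap \fieldfrac{\ringR}$, show $x \in \ringR$. The strategy is to establish $\Phi([x]) = x\,I_k$ as an equation in $\matrices[\ringR][k]$, from which $x \in \ringR$ follows by reading off any diagonal entry.

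First I would find a nonzero rational integer $c$ with $cx \in \ringR$. Writing $x = a/b$ with $a, b \in \ringR$ and $b \neq 0$, the element $b$ is algebraic over $\Q$, so clearing denominators in its minimal polynomial gives an integer relation
\[
d\,b^n + s_{n-1}\,b^{n-1} + \cdots + s_1 b + s_0 = 0
\]
with $d, s_i \in \Z$ and $s_0 \neq 0$ (the minimal polynomial of a nonzero element is not divisible by $t$). Setting $z := d\,b^{n-1} + s_{n-1}\,b^{n-2} + \cdots + s_1 \in \ringR$, this rearranges to $-s_0 = b z$, so the nonzero integer $c := -s_0$ satisfies $cx = z a \in \ringR$.

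Next I would invoke the axioms of \cref{prop:embedformal}. Taking $C = [t] \in \matrices[\ringT][1]$ and $A = I_1$ in Condition~5, combined with Condition~3, gives $\Phi([t]) = t\,I_k$ for every $t \in \ringT$. Because $c \in \Z \subseteq \ringT$ and $cx \in \ringR \cap \ringS = \ringT$, this yields both $\Phi([c]) = c\,I_k$ and $\Phi([cx]) = (cx)\,I_k$. Meanwhile Condition~1 (multiplicativity) gives $\Phi([cx]) = \Phi([c])\,\Phi([x]) = c\,\Phi([x])$. Equating the two expressions for $\Phi([cx])$ produces the identity $c\,\Phi([x]) = (cx)\,I_k$ in $\matrices[\ringR][k]$.

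Since $c$ is a nonzero element of the field $\fieldfrac{\ringR}$, we may divide to conclude $\Phi([x]) = x\,I_k$, and since the entries of $\Phi([x])$ lie in $\ringR$, this forces $x \in \ringR$. The main subtlety is the first step, that of extracting a rational integer denominator rather than merely an element of $\ringR$; however this is really a restatement of the elementary fact that $\fieldfrac{\ringR} = \Q \cdot \ringR$ for any subring $\ringR$ of a number field. Once this integer is in hand, the rest of the argument is a routine application of the multiplicative and $\ringT$-scalar properties of a pre-embedding.
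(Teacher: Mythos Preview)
Your proof is correct and follows essentially the same route as the paper's: clear denominators to land in $\ringT=\ringS\cap\ringR$, then use Conditions~1, 3, and 5 of a pre-embedding to force $\Phi([x])=xI_k$, contradicting (or precluding) $x\notin\ringR$. The only real difference is presentational: the paper argues by contradiction and simply asserts the existence of a nonzero $t\in\ringS\cap\ringR$ with $ts\in\ringS\cap\ringR$, whereas you give a direct proof and explicitly manufacture the clearing element as a rational integer via the minimal polynomial of the denominator, which incidentally justifies the step the paper leaves implicit (namely, why such a $t$ can be taken in $\ringS$ as well as $\ringR$).
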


\begin{proof}
    We have $\ringS\cap\ringR\subseteq\ringS\cap\fieldfrac{\ringR}$,
    and so we just need to show that there is no
    $s\in\ringS\cap\fieldfrac{\ringR}$ with
    $s\not\in\ringS\cap\ringR$. Suppose that such an $s$
    existed. There always exists some nonzero $t\in\ringS\cap\ringR$
    such that $ts\in\ringS\cap\ringR$, which simultaneously implies
    \[
        \Phi(ts) = t\Phi(s)\quad\text{and}\quad \Phi(ts) = ts I.
    \]
    As $\ringR$ is an integral domain, by assumption we would conclude
    that $\Phi(s)=s I$, which is absurd as
    $s\in\ringS\cap\fieldfrac{\ringR}$ but $s\not\in\ringS\cap\ringR$
    implies $s\not\in\ringR$. Thus, there can be no such $s$.
\end{proof}

\cref{lem:nonewndenoms} precludes ``reducing'' the set of denominators
that appear in number rings via embedding. This has implications for
embedding to restricted gate sets such as Clifford+T, whose
corresponding unitaries only permit particular denominators. We end
this section with two important results for the implementation of
linear catalytic embeddings. First, we show that linear catalytic
embeddings respect the direct sum operation.

\begin{proposition}
    \label{prop:directsum}
    Let $\ringR$ and $\ringS$ be Kroneckarian number rings and
    $\Phi:\matrices[\ringS]\rightarrow\matrices[\ringR]$ be the
    pre-embeddding of a linear catalytic embedding. Then, for
    $A,B\in\matrices[\ringS]$,
    \[
        \Phi(A\oplus B) = \Phi(A)\oplus\Phi(B).
    \]
\end{proposition}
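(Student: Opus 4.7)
The plan is to reduce the statement to two tractable subcases: (i) the equal-dimensional case $\dim A = \dim B$, which is immediate from condition 5, and (ii) a ``padding identity'' $\Phi(A \oplus 0_m) = \Phi(A) \oplus 0_{km}$. The main obstacle will be proving the padding identity when $m$ is not a multiple of $\dim A$, since the obvious tensor-product factorization $A \oplus 0_m = C \otimes A$ with $C \in \matrices[\ringT]$ forces $\dim A \mid m$.

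For the equal-dimensional case, where $A, B \in \matrices[\ringS][n]$, I would observe that $A \oplus B = E_1 \otimes A + E_2 \otimes B$ with $E_1 = \diag(1,0)$ and $E_2 = \diag(0,1)$ in $\matrices[\Z][2] \subseteq \matrices[\ringT]$. Conditions 2 and 5 of \cref{prop:embedformal} then give $\Phi(A \oplus B) = E_1 \otimes \Phi(A) + E_2 \otimes \Phi(B) = \Phi(A) \oplus \Phi(B)$ in a single line.

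For general $A \in \matrices[\ringS][n]$ and $B \in \matrices[\ringS][m]$, I would write $A \oplus B = (A \oplus 0_m) + (0_n \oplus B)$ and apply additivity, reducing the problem to the padding identity (and its symmetric counterpart). To show that $\Phi(A \oplus 0_m)$ has the block form $X \oplus 0_{km}$ for some $X \in \matrices[\ringR][kn]$, I would set $D = I_n \oplus 0_m \in \matrices[\ringT]$ and note that $\Phi(D) = D \otimes I_k = I_{kn} \oplus 0_{km}$ (obtained by applying condition 5 to the trivial factorization $D = D \otimes I_1$); the identities $D(A\oplus 0_m) = (A\oplus 0_m)D = A \oplus 0_m$ combined with condition 1 then force the desired block structure.

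The hard step is to identify $X$ with $\Phi(A)$. When $m = cn$, this is immediate from $A \oplus 0_{cn} = E_{11}^{(c+1)} \otimes A$ and condition 5, where $E_{11}^{(c+1)} = \diag(1,0,\dots,0) \in \matrices[\Z][c+1]$. For general $m$, I would exploit the fact that $(A \oplus 0_m) \oplus 0_{\ell(n+m)} = A \oplus 0_{m + \ell(n+m)}$ on one hand, and equals $E_{11}^{(\ell+1)} \otimes (A \oplus 0_m)$ on the other; applying $\Phi$ and condition 5 to both expressions forces $X$ to coincide with the top-left block of $\Phi(A \oplus 0_{m + \ell(n+m)})$. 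A short number-theoretic computation shows that $\ell = n/\gcd(n,m) - 1 \geq 0$ makes $m + \ell(n+m) = n\bigl((n+m)/\gcd(n,m) - 1\bigr)$ a nonnegative multiple of $n$, reducing to the case already handled and forcing $X = \Phi(A)$.
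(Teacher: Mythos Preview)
Your argument is correct, but it is considerably more intricate than the paper's. The paper goes one step finer than your $2\times 2$ block decomposition: it decomposes $A\oplus B$ entrywise as $\sum_{i,j} (A\oplus B)_{ij}\,E_{ij}$, where the $E_{ij}$ are the $(n+m)\times(n+m)$ standard basis matrices, which lie in $\matrices[\Z]\subseteq\matrices[\ringT]$. Condition~5 (with $C=E_{ij}$ and the scalar entry playing the role of $A$) together with additivity then gives $\Phi(M)=\sum_{i,j} E_{ij}\otimes\Phi(M_{ij})$ for any $M$, and the direct sum identity is immediate because the nonzero entries of $A\oplus B$ already sit in the two diagonal blocks. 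In other words, by pushing the $\ringT$-coefficient matrix all the way down to rank-one integer matrices, the paper handles unequal dimensions for free and never needs your padding identity or the number-theoretic choice of $\ell$. Your route does work, and the padding/embedding trick is a nice idea, but it is strictly more work for the same conclusion; the entrywise decomposition is the shortcut you missed.
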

\begin{proof}
    Let $E_{ij}$ the standard basis matrices for
    $\matrices(\ringS)$. We can write
    \[
    A = \sum_{i,j=1}^k A_{ij}E_{ij} \text{ and } B = \sum_{i,j=1}^m B_{ij}E_{ij}
    \] for $k$ and $m$ the dimension of $A$ and $B$ respectively. Then 
    \[
    A\oplus B = \sum_{i,j=1}^k A_{ij}E_{ij} + \sum_{i,j=k+1}^{m+k} B_{ij}E_{ij}\] and \[\Phi(A\oplus B) = \sum_{i,j=1}^k E_{ij}\otimes\Phi(A_{ij}) + \sum_{i,j=k+1}^{m+k} E_{ij}\otimes \Phi(B_{ij}).
    \] We also have 
    \[\Phi(A) = \sum_{i,j=1}^k E_{ij}\otimes\Phi(A_{ij}) \text{ and } \Phi(B) = \sum_{i,j=1}^{m} E_{ij}\otimes \Phi(B_{ij}).
    \] and so
    \begin{equation*}
    \Phi(A)\oplus \Phi(B) = \sum_{i,j=1}^k E_{ij}\otimes\Phi(A_{ij}) + \sum_{i,j=k+1}^{m+k} E_{ij}\otimes \Phi(B_{ij}) = \Phi(A\oplus B).\qedhere
    \end{equation*}
\end{proof}

\cref{prop:directsum} is useful when considering the action of linear
catalytic embeddings with respect to controlled operations. When
lifting a pre-embedding to a linear catalytic embedding $\phi$, we can
always construct a lift that preserves the controlled unitary
structure between the control and target registers. In other words,
for a gate $G$ which applies $U$ to a target register conditioned on a
control register being in state $\ket{a}$, we can lift to the gate
$\phi(G)$ which applies $\phi(U)$ to the target and catalyst registers
conditioned on the control register being in state $\ket{a}$.

Finally, we show that the concatenation of two linear catalytic
embeddings is a linear catalytic embedding, under a minor assumption.

\begin{proposition}
    \label{prop:linearconcat}
    Let $\ringR$, $\ringS$, and $\ringT$ be Kroneckarian subrings of a
    number field with $\ringT\cap\ringR\subseteq\ringS$ and
    $\Phi_1:\matrices[\ringT]\rightarrow\matrices[\ringS]$ and
    $\Phi_2:\matrices[\ringS]\rightarrow\matrices[\ringR]$ be
    pre-embeddings for linear catalytic embeddings. Then the
    concatenation $\Phi_2\circ\Phi_1$ is a pre-embedding for a linear
    catalytic embedding.
\end{proposition}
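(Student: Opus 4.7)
The plan is to verify that the composition $\Phi_2 \circ \Phi_1$ satisfies each of the six conditions of \cref{prop:embedformal}, since by \cref{def:embedformal} this is exactly what it means to be a pre-embedding of a linear catalytic embedding. The resulting map goes $\matrices[\ringT] \to \matrices[\ringR]$, and the ambient containment $\ringT \cap \ringR \subseteq \ringS$ is precisely what will allow the two intermediate instances of condition 5 to compose.

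Conditions 1, 2, and 4 (multiplicativity, additivity, and compatibility with the dagger) are immediate from functorial composition: each $\Phi_i$ respects each operation, so the composition does too. Condition 3 follows analogously with the dimension parameter $k = k_1 k_2$, where $k_i$ is provided by condition 3 for $\Phi_i$. For condition 5, given $C \in \matrices[\ringT \cap \ringR]$, the hypothesis $\ringT \cap \ringR \subseteq \ringS$ places $C$ also in $\matrices[\ringS]$, so $C$ lies simultaneously in $\matrices[\ringT \cap \ringS]$ and $\matrices[\ringS \cap \ringR]$. One can then apply condition 5 first for $\Phi_1$ and then for $\Phi_2$, obtaining $\Phi_2(\Phi_1(C \otimes A)) = \Phi_2(C \otimes \Phi_1(A)) = C \otimes \Phi_2(\Phi_1(A))$.

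The main content is condition 6. I would let $\projector_i$ denote the catalytic projector of $\Phi_i$, of dimension $k_i$, and set $\projector = \projector_1 \otimes \projector_2$. The tensor product of orthogonal projectors is again an orthogonal projector, and since rank is multiplicative under tensor product, $\projector$ has nonzero rank. For $M \in \matrices[\ringT]$ of dimension $m$, applying condition 6 for $\Phi_2$ to the matrix $\Phi_1(M) \in \matrices[\ringS]$, which has dimension $mk_1$, gives
\[
\Phi_2(\Phi_1(M))(I_{mk_1} \otimes \projector_2) = \Phi_1(M) \otimes \projector_2.
\]
Right-multiplying both sides by $I_m \otimes \projector_1 \otimes I_{k_2}$ and using $I_{mk_1} = I_m \otimes I_{k_1}$ to collapse the identity factor on the left, then applying condition 6 for $\Phi_1$ on the right-hand side, yields $\Phi_2(\Phi_1(M))(I_m \otimes \projector) = M \otimes \projector$, as required.

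The main obstacle is simply careful bookkeeping of tensor factors and dimensions in the verification of condition 6; the rest is routine compositionality. The hypothesis $\ringT \cap \ringR \subseteq \ringS$, while mild, is essential: without it, one could not guarantee that a matrix $C$ with entries in both $\ringT$ and $\ringR$ also has entries in $\ringS$, so the intermediate application of condition 5 to $\Phi_1$ would not be justified.
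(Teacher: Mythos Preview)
Your proposal is correct and follows essentially the same approach as the paper: both verify conditions 1--6 of \cref{prop:embedformal} directly, use the hypothesis $\ringT\cap\ringR\subseteq\ringS$ in exactly the same way for condition 5, and take $\projector = \projector_1\otimes\projector_2$ for condition 6 with the same tensor-factor manipulation.
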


\begin{proof}
    We show that the composition $\Phi_2\circ\Phi_1$ satisfies the
    conditions of a pre-embedding.
    \begin{itemize}
        \item[1., 2.] Properties 1 and 2 follow from the composition of ring homomorphism being a homomorphism.
        
        \item[3.] Let $k_1$ and $k_2$ be such that $\Phi_1(I_n)=I_{k_1n}$ and $\Phi_2(I_n)=I_{k_2n}$ for all $n$. Then $\Phi_2\circ\Phi_1(I_n)=\Phi_2(I_{k_1n})=I_{k_1 k_2n}$ for all $n$.
        
        \item[4.] Let $A\in\matrices(\ringS)$. Then $\Phi_2\circ\Phi_1(A^\dagger)=\Phi_2(\Phi_1(A)^\dagger)=(\Phi_2\circ\Phi_1(A))^\dagger$.

        \item[5.] Let $C\in\matrices(\ringT\cap\ringR)$. Then $C\in\matrices(\ringT\cap\ringS)$ and $C\in\matrices(\ringS\cap\ringR)$ as $\ringT\cap\ringR\subset\ringS$ implies $C\in\matrices(\ringR),\matrices(\ringS),\matrices(\ringT)$. So 
        \[
        \Phi_2\circ\Phi_1(C\otimes A)=\Phi_2(C\otimes\Phi_1(A))=C\otimes \Phi_2\circ\Phi_1(A).
        \]

        \item[6.] Let $\projector_1$ and $\projector_2$ be the projectors that satisfy the catalytic condition for $\Phi_1$ and $\Phi_2$ with dimension $k_1$ and $k_2$ respectively. Then $\Pi_1\otimes\Pi_2:\hilbert_{k_1 k_2}\rightarrow\hilbert_{k_1 k_2}$ with $\rank(\Pi_1\otimes\Pi_2)=\rank(\Pi_1)\rank(\Pi_2)>0$, and
        \[
            (\Pi_1\otimes\Pi_2)^\dag(\Pi_1\otimes \Pi_2)=(\Pi_1^\dag\otimes\Pi_2^\dag)(\Pi_1\otimes \Pi_2) = (\Pi_1^\dag\Pi_1)\otimes(\Pi_2^\dag\Pi_2) = \Pi_1\otimes\Pi_2
        \]
        so that $\Pi_1\otimes\Pi_2$ is a nonzero orthogonal projector on $\hilbert_{k_1 k_2}$. Checking that it satisfies the catalytic condition for $A\in\matrices(\ringT)$, we have
        \begin{align*}
            \Phi_2\circ\Phi_1(A)(I\otimes(\Pi_1\otimes \Pi_2))
            &=\Phi_2(\Phi_1(A))((I\otimes I_{k_1})\otimes\Pi_2)((I\otimes\Pi_1)\otimes I_{k_2})\\
            &=(\Phi_1(A)\otimes\Pi_2)((I\otimes\Pi_1)\otimes I_{k_2})\\
            &= (A\otimes\Pi_1)\otimes\Pi_2\\
            &= A\otimes(\Pi_1\otimes\Pi_2).\qedhere
        \end{align*}
    \end{itemize}
\end{proof}

\cref{prop:linearconcat} allows us to concatenate sequences of linear
catalytic embeddings together to simplify circuits. We will give an
example of this process in the following section.

\section{Properties of Linear Catalytic Embeddings}
\label{sec:props}

In this section, we establish several important properties of linear
catalytic embeddings. We show that linear catalytic embeddings can
often be thought of as linear representations. We also prove that a
linear catalytic embedding can be associated with a family of
catalysts which can be used to perform non-trivial transformations of
circuits. Finally, we provide a few simple examples to highlight the
construction of linear catalytic embeddings.

Recall that if $\ringT$ is a subring of $\ringS$, then $\ringS$ is a
$\ringT$-module, and that a generating set for $\ringS$ over $\ringT$
is a set $\Gamma \subseteq\ringS$ such that every element of $\ringS$
can be written as a finite $\ringT$-linear combination of elements of
$\Gamma$. We begin this section by showing that a linear catalytic
embedding is determined by its action on a generating set.

\begin{proposition}
    \label{prop:onlyS}
    Let $\ringR$ and $\ringS$ be Kroneckarian number rings with
    $\ringT = \ringR\cap\ringS$ and let
    $\Phi:\matrices[\ringS]\rightarrow \matrices[\ringR]$ be a
    pre-embedding. If $\Gamma$ is a generating set for $\ringS$ over
    $\ringT$, then the action of $\Phi$ on $\matrices[\ringS]$ is
    completely determined by its action on $\Gamma$.
\end{proposition}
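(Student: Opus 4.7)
The plan is to reduce an arbitrary matrix $A\in\matrices[\ringS]$ to a $\ringT$-linear combination of matrices $E_{ij}\otimes \gamma$ with $\gamma\in\Gamma$, and then apply the pre-embedding properties to push $\Phi$ inside, leaving only the values $\Phi(\gamma)$ for $\gamma\in\Gamma$ as free data.

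More concretely, given $A\in\matrices[\ringS]$ of dimension $n$, I would first decompose it in the standard basis as $A=\sum_{i,j=1}^{n} A_{ij} E_{ij}$ with $A_{ij}\in\ringS$ and $E_{ij}\in\matrices[\ringT]$, so that additivity (condition 2) yields $\Phi(A)=\sum_{i,j}\Phi(A_{ij} E_{ij})$. Identifying $A_{ij}$ with a $1\times 1$ matrix, the scalar product $A_{ij}E_{ij}$ coincides with the Kronecker product $E_{ij}\otimes A_{ij}$, so condition 5 applies and gives $\Phi(A_{ij}E_{ij})=E_{ij}\otimes\Phi(A_{ij})$. Hence
\[
\Phi(A)=\sum_{i,j=1}^{n} E_{ij}\otimes \Phi(A_{ij}),
\]
and the problem reduces to showing that $\Phi$ is determined on scalars in $\ringS$.

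For this, I would use the hypothesis that $\Gamma$ generates $\ringS$ as a $\ringT$-module: write $A_{ij}=\sum_{k} t_{ij,k}\gamma_{k}$ with $t_{ij,k}\in\ringT$ and $\gamma_{k}\in\Gamma$. Applying additivity and then condition 5 once more, with the scalar $t_{ij,k}\in\ringT$ playing the role of $C$, gives $\Phi(t_{ij,k}\gamma_{k})=t_{ij,k}\,\Phi(\gamma_{k})$. Substituting back yields
\[
\Phi(A)=\sum_{i,j,k} t_{ij,k}\bigl(E_{ij}\otimes\Phi(\gamma_{k})\bigr),
\]
which expresses $\Phi(A)$ purely in terms of the values $\{\Phi(\gamma):\gamma\in\Gamma\}$ together with data intrinsic to $A$ (the basis coefficients $A_{ij}$ and the chosen $\ringT$-expansions). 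In particular, $\Phi$ is completely determined on $\matrices[\ringS]$ by its restriction to $\Gamma$.

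There is essentially no hard step: the only points worth checking carefully are dimensional bookkeeping of the tensor identification $A_{ij}E_{ij}=E_{ij}\otimes A_{ij}$, and the fact that although the $\ringT$-expansion of $A_{ij}$ over $\Gamma$ need not be unique when $\Gamma$ is merely a generating set, we are asserting a uniqueness-of-$\Phi$ statement rather than a well-definedness one, so any choice of expansion will recover the same $\Phi(A)$ precisely because $\Phi$ is assumed to exist.
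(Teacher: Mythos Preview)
Your proposal is correct and follows essentially the same approach as the paper: decompose an arbitrary matrix via the standard basis $E_{ij}$, identify scalar multiplication with a tensor by a $1\times 1$ matrix so that condition~5 applies, and then expand each scalar entry over the generating set $\Gamma$. The paper compresses your two applications of condition~5 into a single step by writing $t_{i,j}^{(g)} E_{ij}\otimes g$ directly, but the argument is otherwise identical; your closing remark about non-uniqueness of the $\ringT$-expansion is a nice clarification that the paper leaves implicit.
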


\begin{proof}
    Let $\Gamma$ be a generating set for $\ringS$ over $\ringT$. Since
    $\Gamma$ generates $\ringS$ as a $\ringT$-module, for $s\in\ringS$
    we can write
    \[
        s = \sum_{g\in\Gamma}t^{(g)}g,
    \]
    for $t^{(g)}\in\ringT$. For each $i,j$, let
    $E_{ij}\in\matrices[\ringS]$ be the standard basis matrix, i.e.,
    the matrix whose $(i,j)$ entry is 1 and whose other entries are
    all 0. Suppose that $M\in\matrices[\ringS]$. Then, for each $i,j$,
    we can write the $i,j$-th entry of $M$ as
    \[
    M_{i,j}=\sum_{g\in\Gamma} t_{i,j}^{(g)}g.
    \] 
    By applying the properties of a pre-embedding and the fact that
    scalar multiplication is equivalent to taking the tensor product
    with a $1\times1$ matrix, we then have
    \[
        \Phi(M)  = \Phi\left(\sum_{i,j} M_{i,j} E_{i,j} \right) = \Phi\left(\sum_{i,j}\sum_{g\in\Gamma} t_{i,j}^{(g)}g\cdot E_{i,j}\right) = \sum_{i,j}\sum_{g\in\Gamma}\Phi\left(t_{i,j}^{(g)}E_{i,j}\otimes g\right) = \sum_{i,j}\sum_{g\in\Gamma} t_{i,j}^{(g)}E_{i,j}\otimes \Phi(g).
    \]
    Hence, for all $M\in\matrices[\ringS]$, $\Phi(M)$ is completely
    determined by the action of $\Phi$ on $\Gamma$.
\end{proof}

In much the same way that linear operators are determined by their
action on a basis, linear catalytic embeddings are determined by their
action on a small set that generates $\matrices(\ringS)$. However,
unlike linear operators, linear catalytic embeddings are not free to
act in any way on a generating set. The set of matrices
$\matrices(\ringS)$ has more structure than than that of a module and
linear catalytic embeddings have to preserve this additional
structure. Our next theorem addresses this additional structure and
shows that, in many important cases, a linear catalytic embedding is
completely determined by a single element. In proving this result we
use the \emph{the Galois closure} of a field extension
$\fieldF\backslash \fieldE$, which is the smallest field in which
irreducible polynomials in $\fieldE$ with a linear factor in $\fieldF$
can be factored completely. For further details, we direct the
interested reader to \cite{dummit2004abstract}.

\begin{theorem}
    \label{thm:problem}
    Let $\ringR$ and $\ringS$ be Kroneckarian number rings with
    $\ringT=\ringR\cap\ringS$. Let $\alpha\in\ringS$ such that
    $\fieldfrac{\ringS}=\fieldfrac{\ringT}[\alpha]$ with minimal
    polynomial $p\in\fieldfrac{\ringT}[x]$ over
    $\fieldfrac{\ringT}$. We can construct the pre-embedding of a
    linear catalytic embedding of $\circuits(\ringS)$ in
    $\circuits(\ringR)$ if and only if we can construct
    $\Lambda\in\matrices(\ringR)$ satisfying the following properties:
    \begin{enumerate}
        \item $\Lambda$ is normal
        \item $p(\Lambda)=0$
        \item $\Lambda$ has $\alpha$ as one of its eigenvalues
        \item there exists a generating set $\Gamma$ for $\ringS$ over
          $\ringT$ such that for every $g\in\Gamma$, written uniquely
          as the sum over powers of $\alpha$ as
        \[
            g = \sum_{j=0}^{d-1} c_j \alpha^j
        \]
        for some $c_j\in\fieldfrac{\ringT}$, the matrix
        \[
            \sum_{j=0}^{d-1} c_j \Lambda^j
        \]
        is a matrix over $\ringR$.
    \end{enumerate}
\end{theorem}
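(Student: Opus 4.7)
The plan is to prove both directions of the equivalence by identifying the matrix $\Lambda$ with $\Phi(\alpha)$, where $\alpha$ is viewed as a $1\times 1$ matrix over $\ringS$. For the forward direction, given a pre-embedding $\Phi:\matrices[\ringS]\to\matrices[\ringR]$ of dimension $k$, I would set $\Lambda := \Phi(\alpha)$ and verify the four conditions in turn. Normality follows from a short calculation using conditions 1 and 4 of \cref{prop:embedformal}: $\Lambda\Lambda^\dagger = \Phi(\alpha)\Phi(\alpha^\dagger) = \Phi(\alpha\alpha^\dagger) = \Phi(\alpha^\dagger\alpha) = \Lambda^\dagger\Lambda$, since $\alpha\alpha^\dagger = \alpha^\dagger\alpha$ in the commutative ring $\ringS$. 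To obtain $p(\Lambda)=0$, write $p = q/t$ with $q\in\ringT[x]$ and $t\in\ringT$; condition 5 of \cref{prop:embedformal} lets one push $q$ through $\Phi$, giving $q(\Lambda) = \Phi(q(\alpha)) = \Phi(0) = 0$, and thus $p(\Lambda)=0$ in $\matrices[\fieldfrac{\ringR}]$. The eigenvalue condition is immediate from the catalytic condition: $\Lambda\projector = \Phi(\alpha)(I_1\otimes\projector) = \alpha\otimes\projector = \alpha\projector$, so any nonzero vector in $\operatorname{Im}(\projector)$ is an $\alpha$-eigenvector. For condition 4, choose any generating set $\Gamma$ for $\ringS$ over $\ringT$; writing $g = \sum_j c_j\alpha^j$ with $c_j\in\fieldfrac{\ringT}$ and clearing denominators again yields $\sum_j c_j\Lambda^j = \Phi(g) \in \matrices[\ringR]$.

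For the backward direction, given $\Lambda$ satisfying 1--4, I would construct $\Phi$ as follows. For $s\in\ringS$, use the isomorphism $\fieldfrac{\ringS}\cong \fieldfrac{\ringT}[x]/(p)$ to write $s = f(\alpha)$ uniquely with $f\in\fieldfrac{\ringT}[x]$ of degree less than $d = \deg p$, and set $\Phi(s) := f(\Lambda)$. Since $p(\Lambda)=0$, this is a well-defined ring homomorphism $\fieldfrac{\ringS}\to \matrices_k[\fieldfrac{\ringR}]$ by the universal property of quotient rings. Extend $\Phi$ to $\matrices[\ringS]$ block-wise, sending $M = (m_{ij})$ to the block matrix with $(i,j)$-block $\Phi(m_{ij})$, as suggested by \cref{prop:onlyS}. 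The projector will be the orthogonal projector $\projector$ onto (any nonzero subspace of) the $\alpha$-eigenspace of $\Lambda$, which exists because condition 3 guarantees the eigenspace is nonzero and condition 1 (normality) gives a genuine orthogonal decomposition.

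The verification of the six conditions of \cref{prop:embedformal} is mostly routine: multiplicativity and additivity follow from the ring-homomorphism nature of polynomial evaluation combined with block-matrix arithmetic; $\Phi(I_n) = I_{nk}$ since $\Phi(1) = I_k$; condition 5 is immediate from the construction; and the catalytic condition follows because $\Lambda^j\projector = \alpha^j\projector$ for all $j$, which extends scalar-wise and then block-wise to give $\Phi(M)(I\otimes\projector) = M\otimes\projector$. That $\Phi$ actually lands in $\matrices[\ringR]$ (rather than merely $\matrices[\fieldfrac{\ringR}]$) is precisely where condition 4 is needed: for $g\in\Gamma$ the image $\Phi(g)$ has entries in $\ringR$ by assumption, and any $s\in\ringS$ is a $\ringT$-linear combination of elements of $\Gamma$, so $\Phi(s)$ is an $\ringR$-linear combination of matrices over $\ringR$.

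The main obstacle is verifying the dagger condition $\Phi(s^\dagger) = \Phi(s)^\dagger$. Writing $\alpha^\dagger = h(\alpha)$ for some $h\in\fieldfrac{\ringT}[x]$ of degree less than $d$, the claim reduces to $h(\Lambda) = \Lambda^\dagger$. Using normality, take a spectral decomposition $\Lambda = \sum_i\lambda_i P_i$ with orthogonal projectors $P_i$, so that $\Lambda^\dagger = \sum_i \lambda_i^* P_i$ and $h(\Lambda) = \sum_i h(\lambda_i)P_i$. Because $p(\Lambda)=0$, each eigenvalue $\lambda_i$ is a root of $p$ and therefore arises as $\sigma_i(\alpha)$ for some embedding $\sigma_i$ of $\fieldfrac{\ringS}$ into the ambient field. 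Applying \cref{def:kfield} to $\sigma_i$ gives $\sigma_i(\alpha^\dagger) = \sigma_i(\alpha)^\dagger = \lambda_i^*$; but also $\sigma_i(\alpha^\dagger) = \sigma_i(h(\alpha)) = h(\lambda_i)$ because $h$ has coefficients in $\fieldfrac{\ringT}$ and must be fixed by the Galois action (after taking Galois closure if needed). Hence $h(\lambda_i) = \lambda_i^*$ for every $i$, proving $h(\Lambda) = \Lambda^\dagger$. This is the place where all three of the structural hypotheses on $\Lambda$---normality, annihilation by $p$, and compatibility with Kroneckerian conjugation---conspire, and it is where the proof will require the most care.
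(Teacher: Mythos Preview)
Your proposal is correct and follows essentially the same approach as the paper: in both directions $\Lambda$ is identified with $\Phi(\alpha)$, the forward direction verifies the four conditions via the same denominator-clearing and catalytic-condition calculations, and the backward direction constructs $\Phi$ by polynomial evaluation in $\Lambda$ extended block-wise, then checks the pre-embedding axioms using the spectral decomposition of $\Lambda$ together with the Kroneckerian property for the dagger condition. Your verification of multiplicativity via the universal property of $\fieldfrac{\ringT}[x]/(p)$ is slightly more direct than the paper's eigenvector-by-eigenvector check, but the overall structure and the handling of the delicate axiom $\Phi(s^\dagger)=\Phi(s)^\dagger$ are the same.
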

\begin{proof}    
    $(\Rightarrow)$ Let $\Phi:\matrices(\ringS) \rightarrow
  \matrices(\ringR)$ be a pre-embedding of a linear catalytic
  embedding. Let $\Lambda=\Phi(\alpha)$.
    \begin{enumerate}
    \item By the properties of linear catalytic embeddings and
      commutativity of $\ringS$,
    \[
        \Phi(\alpha)\Phi(\alpha)^\dagger = \Phi(\alpha)\Phi(\alpha^\dagger)=\Phi(\alpha \alpha^\dagger)=\Phi(\alpha^\dagger \alpha)=\Phi(\alpha^\dagger)\Phi(\alpha)=\Phi(\alpha)^\dagger\Phi(\alpha)
    \]
    and so $\Lambda$ is a normal matrix. 
    
    \item Let $p\in\fieldfrac{\ringT}[x]$ be the minimal (monic)
      polynomial of $\alpha$ over $\fieldfrac{\ringT}$. There exists
      nonzero $t\in\ringT$ such that $t\cdot p = q\in\ringT[x]$ by
      clearing denominators. By the properties of linear catalytic
      embeddings
    \[
    0 = \Phi(0) = \Phi(q(\alpha)) = q(\Phi(\alpha)) = t\cdot p(\Phi(\alpha)).
    \] As $\fieldfrac{\ringT}$ is an integral domain, $p(\Lambda)=0$. 
    
    \item Because $\Phi$ is the pre-embedding of a linear catalytic
      embedding, there exists a nonzero projector $\Pi$ such that
      $\Phi(\alpha)(I\otimes\Pi)=\alpha\otimes\Pi$. Let $v\neq 0$ be
      in the image of $\Pi$. We have
    \begin{align*}
        \Phi(\alpha)(I\otimes\Pi)v &= (\alpha\otimes\Pi)v\\
        \Phi(\alpha)v & =  \alpha v\\ 
        \Lambda v &= \alpha v.
    \end{align*} Thus, $\alpha$ is an eigenvalue of $\Lambda$.

    \item Let $\Gamma$ be a generating set for $\ringS$ over $\ringT$
      and $g\in\Gamma$. There exist $c_j\in\fieldfrac{\ringT}$ such
      that
      \[ g = \sum_{j=0}^{d-1} c_j \alpha^j.
      \]
      There exists nonzero $t\in\ringT$ such that $t\cdot c_j\in
      \ringT$ for all $c_j$ by clearing denominators. By the
      properties of linear catalytic embeddings and by
      $\fieldfrac{\ringT}$ an integral domain, we have \begin{align*}
        tg &= \sum_{j=0}^{d-1} tc_j \alpha^j\\ \Phi(tg) &=
        \Phi\left(\sum_{j=0}^{d-1} tc_j \alpha^j\right)\\ t\Phi(g) &=
        \sum_{j=0}^{d-1} tc_j \Phi(\alpha)^j\\ \Phi(g) &=
        \sum_{j=0}^{d-1} c_j \Phi(\alpha)^j\\ &= \sum_{j=0}^{d-1} c_j
        \Lambda^j\in \matrices(\ringR)
        \end{align*} because $\Phi(g)\in\matrices(\ringR)$.
    \end{enumerate}

    $(\Leftarrow)$ Suppose there exists $\Lambda$ that satisfies the
    properties listed above and let $\Gamma$ be a generating set for
    $\ringS$ over $\ringT$ for which property 4 is satisfied. As
    $\Lambda$ is such that $p(\Lambda)=0$, all eigenvalues of
    $\Lambda$ are roots of $p$. We define the function
    $\Phi:\matrices[\ringS]\rightarrow\matrices[\ringR]$ as follows:
    \begin{align*}
    \Phi(g) &= \sum_{j=0}^{d-1} c_j^{(g)} \Lambda^j \text{ where } g = \sum_{j=0}^{d-1} c_j^{(g)} \alpha^j\text{ for all }g\in\Gamma\text{ and}\\
    \Phi(M) &= \sum_{i,j}\sum_{g\in\Gamma} t_{i,j}^{(g)}E_{i,j}\otimes \Phi(g) \text{ where } M = \sum_{i,j}\sum_{g\in\Gamma} t_{i,j}^{(g)}E_{i,j}\otimes g.
    \end{align*}
    The function $\Phi$ is well-defined as each element of $\ringS$
    can be written uniquely as a linear combination of powers of
    $\alpha$ over $\fieldfrac{\ringT}$. By extension, we can also
    uniquely write each element of $\matrices[\ringS][n]$ as a linear
    combination of powers of $\alpha$ over
    $\matrices[\fieldfrac{\ringT}][n]$. Thus for arbitrary
    $M\in\matrices[\ringS][n]$, we can always write
        \[
            \Phi:M = \sum_{j=0}^{d-1}M_j \alpha^j \mapsto
            \sum_{j=0}^{d-1}M_j \otimes \Lambda^j
        \]
    for some unique $M_j\in\matrices[\fieldfrac{\ringT}][n]$. We show
    that $\Phi$ satisfies the properties of the pre-embedding of a
    linear catalytic embedding. Let $A,B\in\matrices(\ringS)$, and
    $C\in\matrices(\ringT)$.

    \begin{enumerate}
        \item The space $\hilbert_k$ is spanned by any linearly
          independent choice of eigenvectors of $\Lambda$. Since
          $p(\Lambda)=0$, each such eigenvector has a corresponding
          eigenvalue $\sigma(\alpha)$ for $\sigma$ a
          $\fieldfrac{\ringT}$-fixing automorphism of the Galois
          closure of $\fieldfrac{\ringS}$ as an extension of
          $\fieldfrac{\ringT}$. For arbitrary $\ket{v}\in\hilbert_n$
          and eigenvector $\ket{\sigma(\alpha)}$ of $\Lambda$, we have
        \begin{align*}
            \Phi(M) (\ket{v}\otimes\ket{\sigma(\alpha)}) = \sum_{j=0}^{d-1}M_j\ket{v} \otimes \Lambda^j\ket{\sigma(\alpha)}
            =\sum_{j=0}^{d-1}M_j\ket{v} \otimes \sigma(\alpha)^j\ket{\sigma(\alpha)}
            =\sigma(M)\ket{v}\otimes\ket{\sigma(\alpha)}.
        \end{align*}
        Because $\sigma$ is a homomorphism and vectors of the form
        $\ket{v}\otimes\ket{\sigma(\alpha)}$ span $\hilbert_{kn}$, we
        conclude $\Phi(AB)=\Phi(A)\Phi(B)$ when $AB$ is defined.
        
        \item By definition of $\Phi$ and linearity of the tensor
          product, $\Phi(A+B)=\Phi(A)+\Phi(B)$ when $A+B$ is defined.
        
        \item $I_n$ has the unique decomposition as a linear
          combination of powers of $\alpha$ over $\fieldfrac{\ringT}$
          of $I_n\alpha^0$. Thus
        \[
            \Phi(I_n) = I_n\otimes\Lambda^0 = I_n\otimes I_k = I_{kn}.
        \]
        
        \item For any $M\in\matrices[\ringS][n]$ we write
          $M=\sum_{j=0}^{d-1}M_j\alpha^j$ for unique
          $M_j\in\matrices[\fieldfrac{\ringT}][n]$. This implies
        \[
            M^\dagger = \sum_{j=0}^{d-1}M_j^\dagger(\alpha^\dagger)^j = \sum_{j=0}^{d-1}N_j \alpha^j
        \]
        again for some unique $N_j\in\matrices[\fieldfrac{\ringT}][n]$
        since $M^\dagger\in\matrices[\ringS][n]$ by $\ringS$
        Kroneckerian. Consider eigenvector $\ket{\sigma(\alpha)}$ of
        $\Lambda$ for $\sigma$ a $\fieldfrac{\ringT}$-fixing
        automorphism of the Galois closure of $\fieldfrac{\ringS}$ as
        an extension of $\fieldfrac{\ringT}$. As $\ringS$ and $\ringT$
        are Kroneckerian, the Galois closure of $\fieldfrac{\ringS}$
        as an extension of $\fieldfrac{\ringT}$ is necessarily
        Kroneckerian. As $\Lambda$ is normal, $\Lambda^\dagger$ has
        the same eigenvectors as $\Lambda$ with
        $\Lambda^\dagger\ket{\sigma(\alpha)} =
        \sigma(\alpha)^\dagger\ket{\sigma(\alpha)}=
        \sigma(\alpha^\dagger)\ket{\sigma(\alpha)}$ where we have used
        the defining property of Kroneckerian fields. For arbitrary
        $\ket{v}\in\hilbert_n$ we have
        \begin{align*}
            \Phi(M)^\dagger(\ket{v}\otimes\ket{\sigma(\alpha)}) = \sum_{j=0}^{d-1}M_j^\dagger\ket{v} \otimes (\Lambda^\dagger)^j\ket{\sigma(\alpha)}
            =\sum_{j=0}^{d-1}M_j^\dagger\ket{v} \otimes \sigma(\alpha^\dagger)^j\ket{\sigma(\alpha)}
            =\sigma(M^\dagger)\ket{v}\otimes\ket{\sigma(\alpha)}.
        \end{align*}
        On the other hand,
        \begin{align*}
            \Phi(M^\dagger)(\ket{v}\otimes\ket{\sigma(\alpha)}) = \sum_{j=0}^{d-1}N_j\ket{v} \otimes \Lambda^j\ket{\sigma(\alpha)}
            =\sum_{j=0}^{d-1}N_j\ket{v} \otimes \sigma(\alpha)^j\ket{\sigma(\alpha)}
            =\sigma(M^\dagger)\ket{v}\otimes\ket{\sigma(\alpha)}.
        \end{align*}
        We conclude $\Phi(M)^\dagger=\Phi(M^\dagger)$ since vectors of the form $\ket{v}\otimes\ket{\sigma(\alpha)}$ span $\hilbert_{kn}$.

      \item By definition of $\Phi$, $\Phi(C\otimes
        A)=C\otimes\Phi(A)$.
        
        \item We have $\alpha$ is an eigenvalue of $\Lambda$. Let
          $\Pi$ be the projector onto the eigenspace of $\Lambda$
          corresponding to $\alpha$. Then
        \[
            \Phi(M)(I\otimes\Pi) = \sum_{j=0}^{d-1} M_j\otimes \Lambda^j\Pi = \sum_{j=0}^{d-1} M_j\otimes \alpha^j\Pi = M\otimes\Pi.\qedhere
        \]
    \end{enumerate}
\end{proof}
 
\cref{thm:problem} provides necessary and sufficient conditions for
producing a linear catalytic embedding. More pertinently, it gives us
a road map for constructing linear catalytic embeddings. If we are
able to find a matrix $\Lambda$ with the properties listed above, we
can extend the map $\alpha\mapsto\Lambda$ to a linear catalytic
embedding on all of $\matrices(\ringS)$. Alternatively, if we can show
that no such $\Lambda$ exists, then this likewise precludes the
existence of a linear catalytic embeddings. While \cref{thm:problem}
shows how a linear catalytic embedding is determined by its action on
a single element, the following theorem gives a complete description
of the image of a catalytic embedding.

\begin{theorem}
    \label{thm:struc}
    Let $\ringR$ and $\ringS$ be Kroneckarian number rings, $\fieldK$
    be the Galois closure of
    $\fieldfrac{\ringS}/\fieldfrac{\ringS\cap\ringR}$,
    $\Phi:\matrices[\ringS]\rightarrow\matrices[\ringR]$ be the
    pre-embedding of a $k$-dimensional linear catalytic embedding, and
    $\{\tau_m\}_{m=1}^n$ be the set of automorphisms on
    $\fieldK\fieldfrac{\ringR}$ fixing $\fieldfrac{\ringR}$. There
    exists a set of automorphisms $\{\sigma_\ell\}_{\ell=1}^j$ on
    $\fieldK$ fixing $\fieldfrac{\ringS\cap\ringR}$ and corresponding
    projectors $\{\Pi_\ell\}_{\ell=1}^j\subset
    \matrices_k(\fieldK\fieldfrac{\ringR})$ with $1\leq j\leq
             [\fieldK\cap\fieldfrac{\ringR}:\fieldfrac{\ringS\cap\ringR}]$
             such that
    \begin{enumerate}
        \item $\{\tau_m(\Pi_\ell)~|~1\leq\ell\leq j,1\leq m\leq n\}$ is a complete set of mutually-orthogonal projectors,
        \item $\Phi(M)\left(I\otimes\tau_m(\projector_\ell)\right) = \left(\tau_m\circ\sigma_\ell(M)\right)\otimes \tau_m(\projector_\ell)$, and
        \item ${\Phi(M) = \sum_{\ell=1}^j \tr_{\fieldfrac{\sigma_{\ell}(\ringS)\ringR} / \fieldfrac{\ringR}}(\sigma_{\ell}(M)\otimes\Pi_\ell)}$.
    \end{enumerate}
    Moreover, the action of the circuit $\Phi(M)$ is completely
    determined by its action on the set $\{\Pi_\ell\}_{\ell=1}^j$.
\end{theorem}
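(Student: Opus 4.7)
My plan is to exploit the spectral decomposition of the single matrix $\Lambda = \Phi(\alpha)$ provided by \cref{thm:problem}, and then to carefully track how the Galois group of $\fieldK\fieldfrac{\ringR}/\fieldfrac{\ringR}$ acts on its eigenspaces. The key observation is that, by \cref{thm:problem}, the eigenvalues of $\Lambda$ are exactly the roots in $\fieldK$ of the minimal polynomial $p$ of $\alpha$ over $\fieldfrac{\ringS\cap\ringR}$, and hence are of the form $\sigma(\alpha)$ for automorphisms $\sigma$ of $\fieldK$ fixing $\fieldfrac{\ringS\cap\ringR}$. The associated eigenspace projectors will furnish the fundamental building blocks of $\Phi$.

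First, since $\Lambda$ is normal with $p(\Lambda)=0$, I would write its spectral decomposition as $\Lambda = \sum_s \sigma_s(\alpha) P_s$, where the $P_s \in \matrices_k(\fieldK\fieldfrac{\ringR})$ are mutually orthogonal projectors satisfying $\sum_s P_s = I$. Repeating the eigenvector calculation from the $(\Leftarrow)$ direction of \cref{thm:problem}, and extending $\sigma_s$ componentwise to $\matrices[\ringS]$, yields the prototype identity
\[
\Phi(M)(I\otimes P_s) = \sigma_s(M)\otimes P_s,
\]
which is already the germ of item 2.

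Next, I would analyze the $\{\tau_m\}$-action on $\{P_s\}$. Because $\Lambda \in \matrices[\ringR]$ is fixed by every $\tau_m$, each $\tau_m$ permutes the eigenspaces, sending $P_s$ to the projector onto the $\tau_m(\sigma_s(\alpha))$-eigenspace. Partitioning the $P_s$ into orbits under this action and choosing one representative $\Pi_\ell$ per orbit, with associated automorphism $\sigma_\ell$, gives the required lists. Item 1 then follows from $\sum_s P_s = I$ together with the fact that the $\tau_m$ permute mutually orthogonal projectors; item 2 follows by applying $\tau_m$ to both sides of the prototype identity and using that $\Phi(M)$ is $\tau_m$-fixed. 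For the bound $j \leq [\fieldK\cap\fieldfrac{\ringR}:\fieldfrac{\ringS\cap\ringR}]$, I would invoke the Galois correspondence: the fixed field of $\operatorname{Gal}(\fieldK\fieldfrac{\ringR}/\fieldfrac{\ringR})$ inside $\fieldK$ is $\fieldK \cap \fieldfrac{\ringR}$, and a standard orbit-counting argument on the $\fieldfrac{\ringS\cap\ringR}$-embeddings of $\fieldfrac{\ringS}$ into $\fieldK$ gives the stated bound on the number of orbits.

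For item 3, I would start from $\Phi(M) = \Phi(M)\bigl(I\otimes \sum_{\ell,m} \tau_m(\Pi_\ell)\bigr)$ and apply item 2 termwise to obtain
\[
\Phi(M) = \sum_{\ell}\sum_m (\tau_m\circ\sigma_\ell)(M)\otimes \tau_m(\Pi_\ell),
\]
and then identify the inner sum over the $\tau_m$-orbit of $\Pi_\ell$ with the relative trace $\tr_{\fieldfrac{\sigma_\ell(\ringS)\ringR}/\fieldfrac{\ringR}}$ applied to $\sigma_\ell(M)\otimes \Pi_\ell$, since by Galois correspondence the stabilizer of the orbit in $\operatorname{Gal}(\fieldK\fieldfrac{\ringR}/\fieldfrac{\ringR})$ equals $\operatorname{Gal}(\fieldK\fieldfrac{\ringR}/\fieldfrac{\sigma_\ell(\ringS)\ringR})$. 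The final claim, that $\Phi(M)$ is determined by its values on the $\Pi_\ell$, is then immediate from item 1. I expect the main obstacle to lie in the Galois-theoretic bookkeeping for the bound on $j$ and for identifying the orbit-stabilizer as the correct relative Galois group: one must handle spectral multiplicities of $\Lambda$ carefully and verify that the intermediate field $\fieldfrac{\sigma_\ell(\ringS)\ringR}$ is fixed exactly by the stabilizer of $\Pi_\ell$, so that the inner sum genuinely matches the field trace rather than some other symmetrization.
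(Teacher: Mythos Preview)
Your proposal is correct and follows essentially the same route as the paper: both take $\Lambda=\Phi(\alpha)$ from \cref{thm:problem}, spectrally decompose it, observe that the $\tau_m$ permute the eigenprojectors (since $\Lambda\in\matrices[\ringR]$), choose one projector per orbit, and then unwind the trace formula via coset representatives. The only organizational difference is that the paper indexes the orbits by the irreducible factors $p_1,\dots,p_r$ of $p$ over $\fieldfrac{\ringR}$, whereas you index them directly by the group-orbit decomposition; these are equivalent. One small imprecision: the eigenvalues of $\Lambda$ are \emph{among} the roots of $p$, not necessarily all of them (the characteristic polynomial of $\Lambda$ may omit some $p_\ell$), so the number $j$ of orbits you obtain can be strictly smaller than the total number of $\tau$-orbits on the root set---this is harmless for the bound but worth stating correctly.
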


\begin{proof}
    Let $\alpha\in\ringS$ such that
    $\fieldfrac{\ringS}=\fieldfrac{\ringS\cap\ringR}[\alpha]$ with
    minimal polynomial $p\in\fieldfrac{\ringS\cap\ringR}[x]$ with
    degree $d$. By the definition of Galois closure, $\fieldK$ is the
    splitting field of $p$. Suppose $p$ splits into irreducible
    factors $p=p_1p_2\dots p_r$ over $\fieldfrac{\ringR}$. We shall
    bound this $r$ value later. As $p$ cannot have repeated roots, no
    $p_\ell$ has repeated roots nor can two factors share any common
    roots. Let $\lambda_\ell$ be any single root of $p_\ell$ for
    $1\leq\ell\leq r$. Without loss of generality, we can assume
    $\lambda_1=\alpha$. For each $\lambda_\ell$, there exists an
    automorphism $\sigma_\ell: \fieldK\rightarrow\fieldK$ that fixes
    $\fieldfrac{\ringS}\cap\fieldfrac{\ringR}$ such that
    $\sigma_\ell(\alpha)=\lambda_\ell$. Without loss of generality, we
    take $\sigma_1$ to be identity.
    
    By \cref{thm:problem}, there exists a normal matrix
    $\Lambda\in\matrices(\ringR)$ such that
    \[
        p(\Lambda) = 0, \Phi(\alpha)=\Lambda,\text{ and }\Phi(M)=\sum_{i=0}^{d-1} M_i\otimes \Lambda^i \text{ where } M=\sum_{i=0}^{d-1} M_i\otimes \alpha^i\text{ and }M_i\in\matrices(\fieldfrac{\ringR}).
    \]
    Let $q\in\ringR[x]$ be the characteristic polynomial of $\Lambda$
    with degree $k$. Since $p(\Lambda)=0$, without loss of generality
    we can take $q=\pm p_1^{d_1}p_2^{d_2}\dots p_j^{d_j}$ with
    $d_\ell> 0$ and $j\leq r$. This is consistent with our choice to
    make $\alpha$ a root of $p_1$, as $\Phi(\alpha)=\Lambda$ and thus
    $\alpha$ is an eigenvalue of $\Lambda$. For $1\leq \ell\leq j$,
    observe that the eigenvector equation
    \[
        (\Lambda - \lambda_\ell I)v = 0
    \]
    has coefficients in the field $\fieldfrac{\ringR}[\lambda_\ell] =
    \fieldfrac{\sigma_\ell(\ringS)\ringR} \subseteq
    \fieldK\fieldfrac{\ringR}$. As
    $\fieldfrac{\sigma_\ell(\ringS)\ringR}$ is Kroneckerian, we can
    find a spanning set for all such $v$ over
    $\fieldfrac{\sigma_\ell(\ringS)\ringR}^k$ and apply the
    Gram-Schmidt procedure to obtain (unnormalized) eigenvectors
    $\{v^{(\ell)}_i\}_{i=1}^{d_\ell}$ with
    $v^{(\ell)}_i\in\fieldfrac{\sigma_\ell(\ringS)\ringR}^k$. This
    basis can be used to define the projector onto the eigenspace of
    $\lambda_\ell$,
    \[
        \Pi_{\ell}=\sum_{i=1}^{d_\ell}\frac{v^{(\ell)}_i\left(v^{(\ell)}_i\right)^\dagger}{\langle v^{(\ell)}_i,v^{(\ell)}_i\rangle}\in\matrices_k(\fieldfrac{\sigma_\ell(\ringS)\ringR})\subseteq\matrices_k(\fieldK\cap\fieldfrac{\ringR}),
    \]
    which is necessarily orthogonal. This immediately implies that
    application of any $\fieldfrac{\ringR}$-fixing
    $\fieldK\fieldfrac{\ringR}$-automorphism $\tau$ to any
    $\projector_\ell$ must also be an orthogonal projector, as
    \[
        \tau_m(\Pi_\ell)^\dagger\tau_m(\Pi_\ell) = \tau_m(\Pi_\ell^\dagger)\tau_m(\Pi_\ell) = \tau_m(\Pi_\ell^\dagger\Pi_\ell) = \tau_m(\Pi_\ell).
    \]
    
    As $\fieldK\fieldfrac{\ringR}/\fieldfrac{\ringR}$ is a Galois
    extension with intermediate field
    $\fieldfrac{\sigma_\ell(\ringS)\ringR}$,
    $\operatorname{Gal}(\fieldK\fieldfrac{\ringR}/\fieldfrac{\ringR})$
    has a subgroup $G_\ell$ which fixes
    $\fieldfrac{\sigma_\ell(\ringS)\ringR}$. By definition, $G_\ell$
    necessarily acts as identity on $\lambda_\ell$. Let
    $\tau_a,\tau_b\in\operatorname{Gal}(\fieldK\fieldfrac{\ringR}/\fieldfrac{\ringR})$. We
    necessarily have
    \[
        \tau_a G_\ell = \tau_b G_\ell\iff\tau_b^{-1}\tau_a\in G_\ell\iff (\tau_b^{-1}\tau_a)(\lambda_\ell)=\lambda_\ell \iff \tau_a(\lambda_\ell)=\tau_b(\lambda_\ell).
    \]
    By the fundamental theorem of Galois theory, we have
    \[
        |\operatorname{Gal}(\fieldK\fieldfrac{\ringR}/\fieldfrac{\ringR}):G_\ell| = [\fieldfrac{\sigma_\ell(\ringS)\ringR}:\fieldfrac{\ringR}]
    \]
    distinct cosets of $G_\ell$ in
    $\operatorname{Gal}(\fieldK\fieldfrac{\ringR}/\fieldfrac{\ringR})$,
    implying there are precisely
    $[\fieldfrac{\sigma_\ell(\ringS)\ringR}:\fieldfrac{\ringR}]$
    elements of $\{\tau_m(\lambda_\ell)\}_{m=1}^n$, each corresponding
    to exactly one root of $p_\ell$. By application of the
    automorphism $\tau_m$ to the eigenprojector equation for
    $\Lambda$, we find
    \[
        \Lambda \tau_m(\projector_\ell) = \tau_m(\Lambda \projector_\ell) = \tau_m(\lambda_\ell \projector_\ell) = \tau_m(\lambda_\ell) \tau_m(\projector_\ell)
    \]
    so that not only is $\tau_m(\projector_\ell)$ an orthogonal
    projector, it is an eigenprojector of $\Lambda$ with eigenvalue
    $\tau_m(\lambda_\ell)$. As $\tau_m$ is invertible,
    $\tau_m(\projector_\ell)$ must project onto the full eigenspace of
    $\tau_m(\sigma_\ell)$. Note that again there are only
    $[\fieldfrac{\sigma_\ell(\ringS)\ringR}:\fieldfrac{\ringR}]$
    distinct elements for the set $\{\tau_m(\Pi_\ell)~|~1\leq m\leq
    n\}$, one for each root of $p_\ell$.
    
    As $\Lambda$ is normal, its eigenspaces are mutually orthogonal,
    and hence we conclude that $\{\tau_m(\Pi_\ell)~|~1\leq\ell\leq
    j,1\leq m\leq n\}$ is a set of pairwise orthogonal
    projectors. Moreover, as there is a maximal projector for each
    root of $q$ and hence eigenvalue of $\Lambda$, this set must also
    be complete, which proves the first part of the theorem. For the
    second statement, direct computation yields
    \begin{align*}
        \Phi(M)\left(I\otimes\tau_m(\Pi_\ell)\right) &= \left(\sum_{i=0}^{d-1} M_i\otimes \Lambda^i\right)\left(I\otimes\tau_m(\Pi_\ell)\right) = \sum_{i=0}^{d-1} M_i\otimes (\Lambda^i\tau_m(\Pi_\ell))=\sum_{i=0}^{d-1} M_i\otimes \tau_m(\sigma_\ell(\alpha))^i\tau_m(\Pi_\ell)\\
        & =\sum_{i=0}^{d-1} M_i\tau_m(\sigma_\ell(\alpha))^i\otimes\tau_m(\Pi_\ell) =\sum_{i=0}^{d-1} \tau_m(\sigma_\ell(M_i\alpha^i))\otimes\tau_m(\Pi_\ell)=\tau_m\circ\sigma_\ell\left(\sum_{i=0}^{d-1} M_i\alpha^i\right)\otimes\tau_m(\Pi_\ell) \\
        &= \tau_m\circ\sigma_\ell(M)\otimes\tau_m(\Pi_\ell).
    \end{align*}
    Choosing one coset representative $\overline{\tau}_m^{(\ell)}$ for
    each coset of
    $\operatorname{Gal}(\fieldK\fieldfrac{\ringR}/\fieldfrac{\ringR})/G_\ell$,
    completeness of our projectors implies
    \[
        I = \sum_{\ell=1}^{j}\sum_{m=1}^{\deg(p_\ell)}\overline{\tau}_m^{(\ell)}(\Pi_\ell).
    \]
    Using this relation alongside the second statement, we conclude
    \begin{align*}
        \Phi(M) &= \Phi(M)(I\otimes I) = \Phi(M)\left(I\otimes \sum_{\ell=1}^{j}\sum_{m=1}^{\deg(p_\ell)}\overline{\tau}_m^{(\ell)}(\Pi_\ell)\right)\\
        &= \Phi(M)\left( \sum_{\ell=1}^{j}\sum_{m=1}^{\deg(p_\ell)}I\otimes\overline{\tau}_m^{(\ell)}(\Pi_\ell)\right) = \sum_{\ell=1}^{j}\sum_{m=1}^{\deg(p_\ell)}\Phi(M)\left(I\otimes\overline{\tau}_m^{(\ell)}(\Pi_\ell)\right)\\
        &=\sum_{\ell=1}^{j}\sum_{m=1}^{\deg(p_\ell)}\overline{\tau}_m^{(\ell)}(\sigma_\ell(M))\otimes\overline{\tau}_m^{(\ell)}(\Pi_\ell) = \sum_{\ell=1}^{j}\tr_{\fieldfrac{\sigma_{\ell}(\ringS)\ringR} / \fieldfrac{\ringR}}(\sigma_\ell(M)\otimes\Pi_\ell)
    \end{align*}
    by definition of the field trace, and hence we have proved the
    third major statement. Note that $\Phi(M)$ is completely
    decomposed into its action on orthogonal subspaces of the catalyst
    space by this statement, each determined by an automorphism of
    $\fieldK\fieldfrac{\ringR}$ and one projector from
    $\{\projector_\ell\}_{\ell=1}^j$. Again, completeness allows us to
    determine the action of $\Phi(M)$ for any input state for the
    catalyst space, proving the fourth statement.
    
    Finally, we bound $j\leq r$, the number of irreducible factors of
    $p$ over $\fieldfrac{\ringR}$. As $\fieldK$ is Galois over
    $\fieldfrac{\ringS\cap\ringR}$, there is a subgroup $H\leq
    \operatorname{Gal}(\fieldK/\fieldfrac{\ringS\cap\ringR})$ that
    fixes $\fieldK\cap\fieldfrac{\ringR}$. Because $H$ fixes
    $\fieldK\cap\fieldfrac{\ringR}$, elements of $H$ can \emph{only}
    permute the roots of irreducible polynomials over
    $\fieldK\cap\fieldfrac{\ringR}$. Therefore, because
    $\sigma_{\ell_1}(\alpha)$ and $\sigma_{\ell_2}(\alpha)$ are roots
    of different polynomials when ${\ell_1}\neq {\ell_2}$ by
    assumption, $\sigma_{\ell_1}H\neq \sigma_{\ell_2}H$ and each
    $\sigma_\ell H$ defines a different coset of $H$. By the
    fundamental theorem of Galois theory, there are precisely
    $[\fieldK\cap\fieldfrac{\ringR}:\fieldfrac{\ringS\cap\ringR}]$
    cosets, and hence $j\leq r =
    [\fieldK\cap\fieldfrac{\ringR}:\fieldfrac{\ringS\cap\ringR}]$.
\end{proof}

Breaking down \cref{thm:struc}, we see a complete characterization of
the structure of a linear catalytic embedding. Given a linear
catalytic embedding, statement one says that the catalyst subsystem
naturally decomposes into an orthonomal basis of ``catalyst-like''
states. The second statement shows that embedded circuits fix this
basis of the catalyst subsystem. In other words, the embedded circuit
is a controlled operator in the ``catalyst basis.'' But how does the
embedded circuit behave in the presence of the other basis states? The
second, third, and fourth statements tell us that an embedded circuit
acts like the original circuit composed with an automorphism of a
Galois field. These automorphisms map $\ringS$ to an isomorphic copy
of $\ringS,$ thereby changing the original unitary to some isomorphic
copy. These automorphisms also relate many of the projectors, so that
at most $[\fieldK\cap\fieldfrac{\ringR}:\fieldfrac{\ringS\cap\ringR}]$
need to be computed separately. This bound is a measure of how
polynomials factor over $\fieldfrac{\ringR}$ as opposed to
$\fieldfrac{\ringS\cap \ringR}$. If certain irreducible polynomials
over $\fieldfrac{\ringS\cap \ringR}$ factor over $\fieldfrac{\ringR}$,
more states will need to be computed. We can characterize precisely
when the construction simplifies, as the following corollary shows:

\begin{corollary}
    \label{cor:RSinterplay}
    Let $\ringR$ and $\ringS$ be Kroneckarian number rings, $\fieldK$
    be the Galois closure of
    $\fieldfrac{\ringS}/\fieldfrac{\ringS\cap\ringR}$,
    $\Phi:\matrices[\ringS]\rightarrow\matrices[\ringR]$ be the
    pre-embedding of a linear catalytic embedding, and
    $\{\tau_m\}_{m=1}^n$ be the set of automorphisms on
    $\fieldK\fieldfrac{\ringR}$ fixing $\fieldfrac{\ringR}$. If
    $\fieldfrac{\ringS\cap \ringR}=\fieldK\cap\fieldfrac{\ringR}$,
    then there exists a single orthogonal projector
    $\projector\in\matrices[\fieldfrac{\ringS\ringR}]$ such that
    \begin{enumerate}
    \item $\{\tau_m(\Pi)~|~1\leq m\leq n\}$ is a complete set of
      mutually-orthogonal projectors with
      $[\fieldfrac{\ringS}:\fieldfrac{\ringS\cap\ringR}]$ distinct
      elements,
    \item $\Phi(M)\left(I\otimes\tau_m(\projector)\right) =
      \tau_m(M)\otimes \tau_m(\projector)$ and,
    \item $\Phi(M)
      =\tr_{\fieldfrac{\ringS\ringR}/\fieldfrac{\ringR}}(M\otimes
      \projector)$.\label{eq:fieldtrace}
    \end{enumerate}
    Moreover, the action of the circuit $\Phi(M)$ is completely
    determined by its action on $\Pi$.
\end{corollary}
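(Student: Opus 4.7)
The plan is to deduce this corollary as a direct specialization of \cref{thm:struc}. The crucial observation is that the hypothesis $\fieldfrac{\ringS\cap\ringR} = \fieldK\cap\fieldfrac{\ringR}$ collapses the upper bound $j \leq [\fieldK\cap\fieldfrac{\ringR}:\fieldfrac{\ringS\cap\ringR}]$ established at the end of the proof of \cref{thm:struc} to $j \leq 1$, and combined with $j \geq 1$ (also forced there), we obtain $j = 1$. The family $\{\projector_\ell\}$ therefore reduces to a single element $\projector = \projector_1 \in \matrices[\fieldfrac{\ringS\ringR}]$, and we may take $\sigma_1$ to be the identity, as in the proof of \cref{thm:struc}. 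Statements 2 and 3 of the corollary are then obtained by substituting $j=1$ and $\sigma_1 = \mathrm{id}$ into the corresponding statements of \cref{thm:struc}, which directly yield $\Phi(M)(I\otimes\tau_m(\projector)) = \tau_m(M)\otimes \tau_m(\projector)$ and $\Phi(M) = \tr_{\fieldfrac{\ringS\ringR}/\fieldfrac{\ringR}}(M\otimes\projector)$.

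The only nontrivial step is matching the count of distinct elements of $\{\tau_m(\projector)\}$ in statement 1 with the stated value $[\fieldfrac{\ringS}:\fieldfrac{\ringS\cap\ringR}]$. From \cref{thm:struc}, this count equals $[\fieldfrac{\sigma_1(\ringS)\ringR}:\fieldfrac{\ringR}] = [\fieldfrac{\ringS\ringR}:\fieldfrac{\ringR}]$. To reconcile the two expressions, I would first use the chain of inclusions $\fieldfrac{\ringS\cap\ringR}\subseteq\fieldfrac{\ringS}\cap\fieldfrac{\ringR}\subseteq\fieldK\cap\fieldfrac{\ringR}$ together with the hypothesis to deduce that $\fieldfrac{\ringS}\cap\fieldfrac{\ringR}=\fieldfrac{\ringS\cap\ringR}$. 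Since $\fieldK/\fieldfrac{\ringS\cap\ringR}$ is Galois and $\fieldfrac{\ringS}$ is an intermediate field, the standard compositum-degree identity for Galois extensions gives $[\fieldfrac{\ringS\ringR}:\fieldfrac{\ringR}] = [\fieldfrac{\ringS}:\fieldfrac{\ringS}\cap\fieldfrac{\ringR}] = [\fieldfrac{\ringS}:\fieldfrac{\ringS\cap\ringR}]$, as required. Pairwise orthogonality and completeness of $\{\tau_m(\projector)\}$ are then inherited directly from \cref{thm:struc} since the index $\ell$ now ranges over a singleton, and the concluding claim that $\Phi(M)$ is determined by its action on $\projector$ is immediate from statement 3.

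Because this is a pure corollary of \cref{thm:struc}, I do not expect any serious obstacle beyond bookkeeping. The one step demanding care is the Galois-theoretic reconciliation: one must be mindful of the difference between the fraction field of the intersection $\fieldfrac{\ringS\cap\ringR}$ and the intersection of fraction fields $\fieldfrac{\ringS}\cap\fieldfrac{\ringR}$, which need not coincide in general but do so under the corollary's hypothesis. Once this identification is dispatched, every remaining claim is a routine specialization, requiring no new computation beyond those already performed in the proof of \cref{thm:struc}.
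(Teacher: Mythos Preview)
Your proposal is correct and follows essentially the same route as the paper's proof: both collapse the bound $j\leq[\fieldK\cap\fieldfrac{\ringR}:\fieldfrac{\ringS\cap\ringR}]$ from \cref{thm:struc} to $j=1$ under the hypothesis, take $\sigma_1=\mathrm{id}$ and $\projector=\projector_1$, and then invoke linear disjointness of $\fieldK$ and $\fieldfrac{\ringR}$ over their common subfield (equivalently, the compositum-degree identity) to match $[\fieldfrac{\ringS\ringR}:\fieldfrac{\ringR}]$ with $[\fieldfrac{\ringS}:\fieldfrac{\ringS\cap\ringR}]$. Your extra remark distinguishing $\fieldfrac{\ringS\cap\ringR}$ from $\fieldfrac{\ringS}\cap\fieldfrac{\ringR}$ is a nice clarification the paper leaves implicit; just be careful that the compositum-degree identity applies directly to $\fieldK$ (which is Galois over the base) and is then inherited by the intermediate field $\fieldfrac{\ringS}$ via linear disjointness, rather than applying to $\fieldfrac{\ringS}$ directly.
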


\begin{proof}
    As $\fieldK$ is a Galois extension of
    $\fieldfrac{\ringS\cap\ringR}=\fieldK\cap\fieldfrac{\ringR}$,
    $\fieldK$ and $\fieldfrac{\ringR}$ are linearly disjoint as
    extensions of their intersection
    $\fieldK\cap\fieldfrac{\ringR}$. Any subfield of $\fieldK$ is
    necessarily also linearly disjoint with $\fieldfrac{\ringR}$ as an
    extension of $\fieldK\cap\fieldfrac{\ringR}$, and thus
    $[\fieldfrac{\ringS\ringR}:\fieldfrac{\ringR}] =
    [\fieldfrac{\ringS}:\fieldfrac{\ringS\cap\ringR}]$. Applying
    \cref{thm:struc} with $\fieldfrac{\ringS\cap
      \ringR}=\fieldK\cap\fieldfrac{\ringR}$, we have $1\leq j\leq
         [\fieldK\cap\fieldfrac{\ringR}:\fieldfrac{\ringS\cap\ringR}]
         =1$, and so $j=1$ such that the only automorphism of
         $\fieldK\cap\fieldfrac{\ringR}$ fixing
         $\fieldfrac{\ringS\cap\ringR}$ is identity. We can thus write
         $\sigma_1(M) = M$ and $\projector_1 = \projector$, and the
         corollary follows by noting the number of distinct projectors
         is precisely $[\fieldfrac{\ringS\ringR}:\fieldfrac{\ringR}]$
         given the proof of \cref{thm:struc}.
\end{proof}

The hypothesis $\fieldfrac{\ringS\cap
  \ringR}=\fieldK\cap\fieldfrac{\ringR}$ is satisfied in most cases of
practical interest. For example, this condition holds trivially when
$\ringR\subset\ringS$, which coincides with the notion that our goal
with linear catalytic embeddings is to ``simplify'' the ring over
which we apply circuits. When $\fieldfrac{\ringS\cap
  \ringR}\neq\fieldK\cap\fieldfrac{\ringR}$, we are not so much
simplifying circuits as we are working (at least partially) over an
isomorphic copy of $\ringS$. For the remainder of this section, we
thus focus on illuminating the consequences of
\cref{thm:problem,cor:RSinterplay} with some examples to show linear
catalytic embeddings at work.

\begin{example}
\label{ex:linembed5}
    Let $\ringS = \Q[\sqrt{5}]$ and $\ringR=\Q$ so that
    $\ringR,\ringS$ are both their own field of fractions. The element
    $\sqrt{5}$ plays the role of $\alpha$ in \cref{thm:problem} with
    minimal polynomial $p=x^2-5$. We need to find a normal matrix
    $\Lambda$ such that $\Lambda^2 -5I=0$. The matrix
    \[\Lambda = \begin{bmatrix}
        1 & 2\\
        2 & -1
    \end{bmatrix}\]
    has characteristic polynomial $p$, and so $\Lambda^2-5I =
    0$. Additionally, $\Lambda$ has $\sqrt{5}$ as an eigenvalue and
    the set $\{1,\sqrt{5}\}$ is a generating set satisfying condition
    4 of \cref{thm:problem}. The pre-embedding
    $\Phi:\matrices(\Q[\sqrt{5}])\rightarrow\matrices(\Q)$ is
    constructed by extending the map $\sqrt{5}\mapsto \Lambda$ as:\[
    \Phi(M) = M_0\otimes\begin{bmatrix} 1 & 0 \\ 0 & 1
    \end{bmatrix} + M_1\otimes\begin{bmatrix}
        1 & 2\\
        2 & -1
    \end{bmatrix},
    \]
    where $M = M_0 + M_1\sqrt{5}$ with $M_0,M_1\in\matrices(\Q)$. As
    every quadratic number field extension is Galois, we have $\fieldK
    = \ringS = \Q[\sqrt{5}]$, whose automorphisms as an extension of
    $\Q$ are given by $\{\operatorname{id},\tau\}$ for
    $\tau:\sqrt{5}\mapsto-\sqrt{5}$ since $-\sqrt{5}$ is the only
    remaining root of $p$. Additionally, $\ringR\subset\ringS$ implies
    we can apply \cref{cor:RSinterplay} to determine the projectors
    and action of this embedding. We can easily compute the projector
    onto the eigenspace of $\Lambda$ corresponding to $\sqrt{5}$:
    \[ \Pi= 
    \begin{bmatrix}
        \frac{1}{2}\left(1+\frac{\sqrt{5}}{5}\right) & \frac{\sqrt{5}}{5}\\
        \frac{\sqrt{5}}{5} & \frac{1}{2}\left(1-\frac{\sqrt{5}}{5}\right)
    \end{bmatrix}.\]
    Checking the action of our embedding, we thus have
    \[
        \Phi(M)(I\otimes\Pi) = M_0\otimes (I\Pi) + M_1\otimes (\Lambda \Pi) = M_0\otimes \Pi + M_1\otimes (\sqrt{5} \Pi) = M\otimes \Pi
    \]
    as expected. Per \cref{cor:RSinterplay}, we should have that
    \[
    \Pi = \begin{bmatrix}
        \frac{1}{2}\left(1+\frac{\sqrt{5}}{5}\right) & \frac{\sqrt{5}}{5}\\
        \frac{\sqrt{5}}{5} & \frac{1}{2}\left(1-\frac{\sqrt{5}}{5}\right)
    \end{bmatrix}~~\xrightarrow{\tau}~~\begin{bmatrix}
        \frac{1}{2}\left(1-\frac{\sqrt{5}}{5}\right) & -\frac{\sqrt{5}}{5}\\
        -\frac{\sqrt{5}}{5} & \frac{1}{2}\left(1+\frac{\sqrt{5}}{5}\right)
    \end{bmatrix}=\tau(\Pi)
    \]
    is itself a projector so that $\{\projector,\tau(\projector)\}$ is
    a mutually orthogonal and complete set of projectors with
    cardinality $[\Q[\sqrt{5}]:\Q] = 2$. It is indeed clear that
    $\tau(\projector)\neq\projector$, and furthermore it is
    straightforward to verify that $\Pi^\dagger\Pi = \Pi$,
    $\tau(\Pi)^\dagger\tau(\Pi) = \tau(\Pi)$, $\tau(\Pi)^\dagger\Pi =
    0$, and $\projector+\tau(\projector) = I$ as expected.
    
    We can also check that our embedding suffices for extraction of
    the matrix $\tau(M) = M_0 - \sqrt{5} M_1$ via
    $\tau(\projector)$. Indeed, we have that $\Lambda \tau(\Pi) =
    -\sqrt{5}\tau(\Pi)$ so that
    \[
        \Phi(M)(I\otimes\tau(\Pi)) = M_0\otimes (I\tau(\Pi)) + M_1\otimes (\Lambda \tau(\Pi)) = M_0\otimes \tau(\Pi) + M_1\otimes (-\sqrt{5} \tau(\Pi)) = \tau(M)\otimes \tau(\Pi).
    \]
    Finally, by completeness of our projectors $\Pi$ and $\tau(\Pi)$,
    we have
    \[
        \Phi(M) = \Phi(M)(I\otimes(\Pi + \tau(\Pi))) = M\otimes \Pi + \tau(M)\otimes\tau(\Pi) = \tr_{\Q[\sqrt{5}]/\Q}(M\otimes \Pi)
    \]
    which confirms that the image of $M$ under $\Phi$ is given by the
    expected field trace.
\end{example}

\begin{example}
\label{ex:linembedomega}
    Let $\omega = e^{2\pi i/8}$ and $\ringR = \D = \Z[1/2]$, $\ringS =
    \D[i = \omega^2]$, and $\ringT = \D[\omega]$. We will construct
    the pre-embeddings of linear catalytic embeddings
    $\Phi_1:\matrices[\ringT]\rightarrow\matrices[\ringS]$ and
    $\Phi_2:\matrices[\ringS]\rightarrow\matrices[\ringR]$ to
    construct the pre-embedding
    $\Phi_2\circ\Phi_1:\matrices[\ringT]\rightarrow\matrices[\ringR]$. In
    light of number-theoretic characterizations of certain
    fault-tolerant gate sets \cite{Giles2013a,Amy2020}, this
    construction yields a structure-preserving map from the
    Clifford-$T$ gate set ($\D[\omega]$) to subsets $\D[i]$ and $\D$
    of the Clifford+$CS$ and Hadamard+Toffoli gate sets, respectively.
    
    We begin by finding the pre-embedding
    $\Phi_1:\matrices(\ringT)\rightarrow\matrices(\ringS)$. In this
    case, $\fieldfrac{\D[\omega]} = \Q[\omega]$ and
    $\fieldfrac{\D[i]}= \Q[i]$. The element $\omega$ plays the role of
    $\alpha$ from \cref{thm:problem} with minimal polynomial
    $p_1=x^2+i$, and the set $\{1,\omega\}$ plays the role of
    $\Gamma$. The normal matrix
    \[
        \Lambda_1 = \begin{bmatrix}
        0 & 1\\
        i & 0
    \end{bmatrix}
    \]
    has $p_1$ as its characteristic polynomial, and so satisfies $p_1$
    as required. In addition $\omega$ is an eigenvalue of $\Lambda_1$
    and so $\Lambda_1$ satisfies all the properties of
    \cref{thm:problem}. The linear catalytic embedding
    $\Phi_1:\matrices(\ringT)\rightarrow\matrices(\ringS)$ is
    constructed by extending the map $\omega\mapsto\Lambda_1$ as
    follows:
    \[
    \Phi_1(M) = M_0\otimes I + M_1\otimes\Lambda_1\text{ where } M = M_0 + M_1\omega\text{ with } M_0,M_1\in\matrices(\ringS).
    \]
    
    We have that the extension $\Q[\omega]/\Q[i]$ is Galois so that
    there are two distinct automorphisms of $\Q[\omega]$ which fix
    $\Q[i]$. These are given by
    \[
        \left\{\operatorname{id}:\omega\mapsto\omega, \tau_1:\omega\mapsto-\omega, \right\}.
    \]
    As we have $\ringS\subset\ringT$, we can apply
    \cref{cor:RSinterplay}. The projector $\Pi_1$ is the projector
    onto the subspace of $\Lambda_1$ corresponding to $\omega$ given
    by
    \[
    \Pi_1 = \frac{1}{2}
    \begin{bmatrix}
        1&\omega^7\\
        \omega & 1 
    \end{bmatrix}.\]
    It is a straightforward exercise to check that $\tau_1(\Pi_1)$ is
    the projector onto the eigenspace of $\Lambda_1$ corresponding to
    $\tau_1(\omega) = -\omega$, and furthermore that these projectors
    are mutually orthogonal, complete, and satisfy their respective
    catalytic conditions.
    
    Constructing the pre-embedding
    $\Phi_2:\matrices(\D[i])\rightarrow\matrices(\D)$ is remarkably
    similar. As $\fieldfrac{\D} = \Q$, the element $i$ plays the role
    of $\alpha$ from \cref{thm:problem} with minimal polynomial
    $p_2=x^2+1$, and the set $\{1,i\}$ plays the role of $\Gamma$. The
    normal matrix
    \[
        \Lambda_2 = \begin{bmatrix}
        0 & 1\\
        -1 & 0
    \end{bmatrix}
    \]
    satisfies the conditions of \cref{thm:problem}, and so we can
    define $\Phi_2:\matrices(\ringS)\rightarrow\matrices(\ringR)$ by
    \[
        \Phi_2(M) = M_0\otimes I + M_1\otimes\Lambda_2\text{ where } M = M_0 + M_1 i\text{ with } M_0,M_1\in\matrices(\ringR).
    \]
    
    Again, the extension $\Q[i]/\Q$ is Galois so that there are two
    distinct automorphisms of $\Q[i]$ which fix $\Q$. These are
    \[
        \left\{\operatorname{id}:i\mapsto i, \tau_2: i\mapsto-i, \right\}.
    \]
    Since $\ringR\subset\ringS$, we can again apply
    \cref{cor:RSinterplay}. The projector $\Pi_2$ projects onto the
    subspace of $\Lambda_2$ corresponding to $i$ given by
    \[
    \Pi_2 = \frac{1}{2}
    \begin{bmatrix}
        1& i^3\\
        i & 1 
    \end{bmatrix}.\]
    As before, $\tau_2(\Pi_2)$ projects onto the eigenspace of
    $\Lambda_2$ corresponding to $\tau_2(i) = -i$, and the projectors
    are mutually orthogonal, complete, and satisfy their respective
    catalytic conditions.
    
    We now check that the concatenation $\Phi_2\circ\Phi_2$ behaves as
    expected and yields a pre-embedding
    $\Phi:\matrices[\ringT]\rightarrow\matrices[\ringR]$ for a linear
    catalytic embedding. Given
    \[
        M = M_0 + \omega M_1 + \omega^2 M_2 + \omega^3 M_3
    \]
    for the basis $\{1,\omega,\omega^2,\omega^3\}$ of $\ringT$ as an
    $\ringR$ module we have
    \begin{align*}
        \Phi(M) &= \Phi_2\circ\Phi_1(M)\\
        &= \Phi_2((M_0+ \omega^2 M_2)\otimes I_2 + (M_1  + \omega^2 M_3)\otimes \Lambda_1)\\
        &= M_0\otimes I_4 + M_1\otimes\Phi_2(\Lambda_1) + M_2\otimes I_2\otimes\Lambda_2 + M_3\otimes\Phi_2(i\Lambda_1).
    \end{align*}
    Computing $\Phi_2(\Lambda_1)$, we have
    \[
        \Phi_2(\Lambda_1) = \Lambda = \begin{bmatrix}
            0 & 0 & 1 & 0\\
            0 & 0 & 0 & 1\\
            0 & 1 & 0 & 0\\
            -1 & 0 & 0 & 0
        \end{bmatrix}.
    \]
    In fact, we have $\Lambda^2 = I_2\otimes\Lambda_2$ and $\Lambda^3
    = \Phi_2(i\Lambda_1)$ so that really
    \[
        \Phi(M) =  M_0\otimes \Lambda + M_1\otimes\Lambda + M_2\otimes \Lambda^2 + M_3\otimes\Lambda^3.
    \]
    We thus identify $\Lambda$ as the normal matrix $\Phi(\omega)$,
    and indeed the characteristic polynomial of $\Lambda$ is $x^4+1$,
    which is precisely the minimal polynomial of $\omega$ over
    $\Q$. Furthermore, the projector
    \[
        \Pi_1\otimes\Pi_2 = \projector = \frac{1}{4}\begin{bmatrix}
            1 & \omega^6 & \omega^7 & \omega^5\\
            \omega^2 & 1 & \omega & \omega^7\\
            \omega & \omega^7 & 1 & \omega^6\\
            \omega^3 & \omega & \omega^2 & 1
        \end{bmatrix}
    \]
    projects onto the $\omega$ eigenspace of $\Lambda$ as
    expected. The automorphisms of $\Q[\omega]$ which fix $\Q$ are
    \[
        \left\{\operatorname{id}:\omega\mapsto\omega, \sigma_1:\omega\mapsto\omega^3, \sigma_2:\omega\mapsto\omega^5, \sigma_3:\omega\mapsto \omega^7   \right\}.
    \]
    and their action on $\Pi$ yield orthogonal projectors which are
    mutually orthogonal, complete, and satisfy the appropriate
    catalytic conditions. Thus we have constructed the desired
    pre-embedding via concatenation.
\end{example}

\section{Standard Catalytic Embeddings}
\label{sec:standard}

Ring extensions are often built by taking quotients. This is
especially true in the context of the number rings encountered in
fault-tolerant quantum computing \cite{Amy2020}. In these cases, there
is a convenient method to build linear catalytic embeddings, which can
be seen as a simplification of \cref{thm:problem}. We call the linear
catalytic embeddings defined in this way \emph{standard catalytic
embeddings}.

\subsection{Definitions and Properties}
\label{ssec:standarddef}

We start by introducing a generalization of the notion of companion
matrix which will play, in the context of standard catalytic
embeddings, the role of matrix $\Lambda$ in \cref{thm:problem}, as
seen in \cref{ex:linembed5,ex:linembedomega}.

\begin{definition}[Pseudo-Companion Matrix]
  \label{def:pseudocomp}
  Let $\ringR$ be a number ring and let $p\in\ringR[X]$ be a monic
  polynomial over $\ringR$. A matrix $\Lambda\in\matrices[\ringR]$ is
  a \emph{pseudo-companion matrix for $p$} if the characteristic
  polynomial of $\Lambda$ is $\pm p^c$, for some positive integer $c$.
\end{definition}

\begin{proposition}
    \label{prop:lamblin}
    Let $\ringR\subset\ringR[\alpha]$ be a Kroneckerian, integral
    extension of number rings such that $\alpha$ has minimal
    polynomial $p\in\fieldfrac{\ringR}[x]$ with coefficients in
    $\ringR$ and degree $d$. Suppose $\Lambda\in\matrices(\ringR)$ is
    a normal pseudo-companion matrix for $p$. If
    $A\in\matrices(\ringR[\alpha])$, then:
    \begin{enumerate}
        \item there exist unique $A_i\in\matrices(\ringR)$ such
          that \[ A =\sum_{i=0}^{d-1} A_i\alpha^i,
        \]
        \item the map
          $\Phi:\matrices(\ringR[\alpha])\rightarrow\matrices(\ringR)$
          given by \[ A\mapsto \sum_{i=0}^k A_i\otimes\Lambda^i
        \] is the pre-embedding for a linear catalytic embedding as defined in \cref{prop:embedformal}.
    \end{enumerate}
\end{proposition}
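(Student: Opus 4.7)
The plan is to deduce this proposition directly from \cref{thm:problem} by taking $\ringS = \ringR[\alpha]$, $\ringT = \ringR \cap \ringR[\alpha] = \ringR$, and $\Gamma = \{1, \alpha, \ldots, \alpha^{d-1}\}$, and verifying that $\Lambda$ fulfills the four conditions of that theorem. The proof of \cref{thm:problem} constructs the pre-embedding by precisely the formula appearing in the present statement, so once the hypotheses are checked, both parts of the proposition follow immediately.

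For part 1, I would invoke the standard fact from commutative algebra that when $\alpha$ is integral over $\ringR$ with monic minimal polynomial $p$ of degree $d$ having coefficients in $\ringR$, the ring $\ringR[\alpha]$ is a free $\ringR$-module with basis $\{1, \alpha, \ldots, \alpha^{d-1}\}$. Spanning follows by iteratively reducing higher powers of $\alpha$ modulo $p$, and linear independence follows because any $\ringR$-linear relation among the $\alpha^i$ would give a polynomial of degree less than $d$ annihilating $\alpha$, contradicting the minimality of $p$ over $\fieldfrac{\ringR}$. Applying this basis entry-wise to $\matrices(\ringR[\alpha])$ yields the existence and uniqueness of the decomposition $A = \sum_{i=0}^{d-1} A_i \alpha^i$ with $A_i \in \matrices(\ringR)$.

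For part 2, I would verify the four hypotheses of \cref{thm:problem} in turn. Normality of $\Lambda$ is assumed outright. For the condition $p(\Lambda)=0$, I use that $\Lambda$ is a pseudo-companion matrix, so its characteristic polynomial equals $\pm p^c$ for some $c \geq 1$; since we are in characteristic zero, $p$ is separable, and the normality of $\Lambda$ forces its minimal polynomial to be the squarefree radical of its characteristic polynomial, which is $p$ itself (up to sign). The same argument shows that every root of $p$ appears as an eigenvalue of $\Lambda$, and in particular $\alpha$ does, which gives the third condition. For the fourth condition, $\Gamma = \{1, \alpha, \ldots, \alpha^{d-1}\}$ is a generating set by part 1, and for $g = \alpha^i$ the matrix $\sum_j c_j \Lambda^j$ collapses to $\Lambda^i \in \matrices(\ringR)$ since $\Lambda \in \matrices(\ringR)$.

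The main subtlety I expect is the jump from the pseudo-companion hypothesis to the clean identity $p(\Lambda)=0$. By itself, the condition that the characteristic polynomial equal $\pm p^c$ only yields $p^c(\Lambda) = 0$ via Cayley-Hamilton; what collapses this to $p(\Lambda) = 0$ is the combination of normality (diagonalizability with an orthogonal eigenbasis) and separability of $p$ (distinct roots in characteristic zero). This is precisely where the normality of $\Lambda$ plays its essential role, and it is what makes the weaker pseudo-companion notion sufficient for constructing the pre-embedding rather than requiring the characteristic polynomial to equal $p$ on the nose.
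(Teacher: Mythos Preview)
Your proposal is correct. The paper itself remarks that the result follows from \cref{thm:problem}, and your write-up carries out exactly that deduction: you take $\ringS=\ringR[\alpha]$, $\ringT=\ringR$, $\Gamma=\{1,\alpha,\dots,\alpha^{d-1}\}$, and verify the four conditions there, noting that the $(\Leftarrow)$ direction of that proof builds $\Phi$ via the same formula $M\mapsto\sum M_j\otimes\Lambda^j$.

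The paper, however, chooses not to invoke \cref{thm:problem} and instead gives a direct check of the six pre-embedding conditions from \cref{prop:embedformal}, using the eigenbasis of $\Lambda$ to verify multiplicativity and the dagger condition. Your route is shorter and avoids repeating that eigenvector bookkeeping; the paper's route is more self-contained and makes the proposition readable without chasing back through \cref{thm:problem}. Your handling of the one nontrivial point---deducing $p(\Lambda)=0$ from normality plus the pseudo-companion hypothesis via separability of $p$---is cleaner than what the paper does for this step (the paper simply asserts that the eigenvalues are the roots of $p$ each with multiplicity $m$, which implicitly uses the same reasoning).
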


This result follows from \cref{thm:problem}, but we outline the proof
here.

\begin{proof}
    Let $A,B\in\matrices(\ringR[\alpha])$ and
    $C\in\matrices(\ringR)$. Because $p$ is irreducible over
    $\fieldfrac{\ringR}$ and has coefficients in $\ringR$,
    $\ringR[\alpha]$ is a free module over $\ringR$ with basis
    $\{\alpha^i\}_{i=0}^{d-1}$. Therefore, for each
    $A\in\matrices(\ringR[\alpha])$, there exist unique
    $A_i\in\matrices(\ringR)$ such that \[ A =\sum_{i=0}^{d-1}
    A_i\alpha^i.
        \]
    We now show that $\Phi$ satisfies the conditions of a
    pre-embedding of a linear catalytic embedding.

    \begin{enumerate}
        \item Because $\Lambda$ is normal and has characteristic
          polynomial $\pm p^m$ for some positive integer $m$, the
          eigenvalues of $\Lambda$ are all the roots of $p$, each
          occurring with multiplicity $m$. In addition, the
          eigenvectors of $\Lambda$ span $\hilbert_k$. Let $\ket{x}$
          be an eigenvector of $\Lambda$ with eigenvalue $\lambda$ and
          $\ket{v}\in\hilbert_n$ be an arbitrary state. Because $p$ is
          irreducible, there always exists a ring isomorphism
          $\sigma:\ringR[\alpha]\rightarrow\ringR[\lambda]$ that fixes
          $\ringR$ and maps $\sigma(\alpha)=\lambda$. Then
        \[
        \Phi(M)(\ket{v}\otimes\ket{x}) = \sum_{i=0}^{d-1} M_i\ket{v}\otimes \Lambda^i\ket{x} = \sum_{i=0}^{d-1} M_i\ket{v}\otimes \lambda^i\ket{x} = \sum_{i=0}^{d-1} M_i\ket{v}\otimes \sigma(\alpha)^i\ket{x} = (\sigma(M)\ket{v})\otimes \ket{x}.
        \]
        Because $\sigma$ is a ring isomorphism and vectors of the form
        $\ket{v}\otimes\ket{x}$ span $\hilbert_{kn}$, we conclude that
        $\Phi(AB)=\Phi(A)\Phi(B)$ when $AB$ is defined.
        
        \item $\displaystyle{\Phi(A)+\Phi(B)=\sum_{i=0}^{d-1} A_i\otimes\Lambda^i+\sum_{i=0}^{d-1} B_i\otimes\Lambda^i=\sum_{i=0}^{d-1} (A_i+B_i)\otimes\Lambda^i=\Phi(A+B)}$.

        \item $\displaystyle{\Phi(I_n) = \Phi(I_n\alpha^0)=I_n\otimes\Lambda^0=I_{nk}}.$

        \item Because $\Lambda$ is normal, if $\lambda$ is an
          eigenvalue of $\Lambda$ with eigenvector $\ket{x}$, then
          $\lambda^\dagger$ is an eigenvalue of $\Lambda^\dagger$ with
          eigenvector $\ket{x}$. We have
        \[
        M^\dagger = \sum_{i=0}^{d-1} M_i^\dagger (\alpha^i)^\dagger\quad\text{and}\quad M^\dagger = \sum_{i=0}^{d-1} N_i\alpha^i
        \]
        for some unique $N_i\in\ringR$. If $\ket{v}\in\hilbert_n$ is
        some arbitrary state and $\sigma$ is defined as above,
        \[
        \Phi(M)^\dagger(\ket{v}\otimes \ket{x}) = \sum_{i=0}^{d-1} M_i^\dagger \ket{v} \otimes (\Lambda^i)^\dagger \ket{x} = \sum_{i=0}^{d-1} M_i^\dagger \ket{v} \otimes (\lambda^i)^\dagger \ket{x} = \sum_{i=0}^{d-1} M_i^\dagger \ket{v} \otimes \sigma(\alpha^i)^\dagger \ket{x} = (\sigma(M^\dagger)\ket{v})\otimes\ket{x}.
        \]
        On the other hand,
        \[
            \Phi(M^\dagger)(\ket{v}\otimes\ket{x}) = \sum_{i=0}^{d-1}N_i\ket{v} \otimes \Lambda^i\ket{x} =\sum_{i=0}^{d-1}N_i\ket{v} \otimes \lambda^i\ket{x}
            =\sum_{i=0}^{d-1}N_i\ket{v} \otimes \sigma(\alpha)^i\ket{x}
            =\sigma(M^\dagger)\ket{v}\otimes\ket{x}.
        \]
        We conclude that $\Phi(M)^\dagger=\Phi(M^\dagger)$, since
        vectors of the form $\ket{v}\otimes\ket{x}$ span
        $\hilbert_{kn}$.

        \item $\displaystyle{\Phi(C\otimes A) = \Phi\left(\sum_{i=0}^{d-1} C\otimes A_i\alpha^i\right)=\sum_{i=0}^{d-1} C\otimes A_i\otimes\Lambda^i=C\otimes\Phi(A)}$.

        \item Because $\Lambda$ is a pseudo-companion matrix for $p$,
          it has $\alpha$ as an eigenvalue. Let $\Pi$ be the projector
          of onto the eigenspace of $\Lambda$ corresponding
          $\alpha$. Then \[\displaystyle{\Phi(A)(I\otimes\Pi)=\left(\sum_{i=0}^{d-1}
            A_i\otimes\Lambda^i\right)(I\otimes\Pi)=\sum_{i=0}^{d-1}
            A_i\otimes(\alpha^i\Pi)=\sum_{i=0}^{d-1}
            A_i\alpha^i\otimes\Pi=\Phi(A)\otimes\Pi}.\qedhere\]
    \end{enumerate}
\end{proof}

\begin{corollary}
    \label{cor:normtounit}
    Let $\ringR$, $\alpha$, $\Lambda$, and $\Phi$ be defined as in
    \cref{prop:lamblin}. If $U\in\unitary(\ringR[\alpha])$, then
    $\Phi(U)\in\unitary(\ringR)$.
\end{corollary}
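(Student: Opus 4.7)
The plan is to observe that this corollary is a straightforward consequence of \cref{prop:lamblin} combined with the unitarity argument already used in the proof of \cref{prop:embedformal}. Since \cref{prop:lamblin} establishes that $\Phi$ satisfies all the defining conditions of a pre-embedding (conditions 1--6 of \cref{prop:embedformal}), I get for free that $\Phi$ is multiplicative, preserves the identity, and commutes with the dagger, and that $\Phi(M) \in \matrices(\ringR)$ whenever $M \in \matrices(\ringR[\alpha])$.

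First I would fix $U \in \unitary(\ringR[\alpha])$, so that $U U^\dagger = U^\dagger U = I_n$ for some $n$. Next, I would compute
\[
\Phi(U)\Phi(U)^\dagger = \Phi(U)\Phi(U^\dagger) = \Phi(UU^\dagger) = \Phi(I_n) = I_{kn},
\]
using condition 4 (dagger-preservation), condition 1 (multiplicativity), and condition 3 (identity-preservation) of the pre-embedding properties, in that order. The identical computation with the factors reversed yields $\Phi(U)^\dagger \Phi(U) = I_{kn}$. Since the construction of $\Phi$ in \cref{prop:lamblin} gives $\Phi(U) \in \matrices(\ringR)$, we conclude $\Phi(U) \in \unitary(\ringR)$ as desired.

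There is no real obstacle here; the work was already done inside \cref{prop:embedformal}, where the very first line of the proof established that any pre-embedding sends unitaries to unitaries. This corollary simply records the specialization of that fact to the pseudo-companion construction of \cref{prop:lamblin}, and its value is conceptual rather than technical: it confirms that the concrete recipe $\alpha \mapsto \Lambda$ actually produces a well-defined map $\unitary(\ringR[\alpha]) \to \unitary(\ringR)$, which is exactly the data needed to lift to a catalytic embedding at the level of circuits via \cref{prop:strongunitarymap}.
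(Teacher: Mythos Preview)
Your proposal is correct and follows essentially the same approach as the paper: both use the dagger-preservation, multiplicativity, and identity-preservation properties of $\Phi$ established in \cref{prop:lamblin} to compute $\Phi(U)\Phi(U)^\dagger = \Phi(U)\Phi(U^\dagger) = \Phi(UU^\dagger) = \Phi(I_n) = I_{kn}$.
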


\begin{proof}
    By \cref{prop:lamblin}, $\Phi(U^\dagger)=\Phi(U)^\dagger$,
    $\Phi(UV)=\Phi(U)\Phi(V)$, and $\Phi(I)=I_n$ for some $n$. Suppose
    $U\in\unitary(\ringR[\alpha])$, then \[\Phi(U)\Phi(U)^\dagger =
    \Phi(U)\Phi(U^\dagger)=\Phi(UU^\dagger)=\Phi(I)=I_n,\] and
    $\Phi(U)$ is unitary.
\end{proof}

\begin{definition}[Standard catalytic embedding]
    Let $\ringR$, $\alpha$, $\Lambda$, and $\Phi$ be defined as in
    \cref{prop:lamblin}, and let $\phi$ be the catalytic embedding
    obtained from the lifting of
    $\Phi|_{\unitary(\ringR[\alpha])}$. We call this embedding a
    \emph{standard} catalytic embedding.
\end{definition}

The difficulty in constructing a standard catalytic embedding hinges
on finding a normal pseudo-companion matrix over a given number
ring. On the surface, this might not seem immediately difficult. Each
polynomial over a number ring automatically admits a companion matrix,
and hence a pseudo-companion matrix. The normality condition ensures
unitarity of the resulting standard catalytic embedding, and is where
any difficulty in constructing such an embedding lies. In fact,
generalizing to pseudo-companion matrices (rather than simply
companion matrices) is \emph{necessary} to accommodate the normality
condition, as the following remark shows.

\begin{remark}
\label{rmk:nocompanion}
An extension of $\Q$ by a totally real algebraic number $\theta$
yields a totally real number field, and this extension $\Q[\theta]$
necessarily satisfies the assumptions of
\cref{prop:standard}. However, there exist $\theta$ whose minimal
polynomials do not permit a normal companion matrix with entries in
$\Q$ \cite{Schmeisser1993}. An example of such a polynomial is
$X^2-3$, because 3 cannot be written as a sum of two squares from
$\Q$. In fact, this is a relatively universal phenomenon, as randomly
selecting an irreducible quadratic polynomial over $\Q$ will yield a
polynomial $p$ that does not permit a normal companion matrix with
high probability. Using pseudo-companion matrices, rather than the
usual companion matrices, skirts this issue.
\end{remark}

We also might wonder under what conditions a generic number ring
extension $\ringR\subset\ringS$ permits an $\alpha\in\ringS$ as in
\cref{prop:lamblin} so that $\ringS=\ringR[\alpha]$. As it turns out,
these conditions are rather strong. Nonetheless, constructing ring
extensions in this way coincides with many ring extensions of
practical use. For example, quadratic extensions and extensions by
roots of unity generically have the required structure.

\begin{example}
\label{ex:std}
We show how to implement a fifth root of unity $\alpha = e^{2\pi i/5}$
over the Clifford+$T$ gate set using a standard catalytic
embedding. The Clifford+$T$ gate set has a characterization given by
$\unitary(\ringR)$ where $\ringR = \Z[1/2,\sqrt{2},i]$
\cite{Giles2013a}. Therefore, we need to construct a matrix whose
characteristic polynomial is a power of the minimal polynomial of
$\alpha$ over $\ringR$. The minimal polynomial of $\alpha$ over
$\ringR$ is $p(x) = x^4+x^3+x^2+x+1$. The matrix
\[
\Lambda = \frac{1}{2}\begin{bmatrix}
-1+i & 1 & 0 & i\\
1 & i & i & i\\
0 & i & -1-i & 1\\
i & i & 1 & -i
\end{bmatrix}
\]
is normal, is a pseudo-companion matrix for $p$, and has entries in
$\ringR$. The mapping
\[
U = \sum_{k=0}^4 A_k e^{k\frac{2\pi i}{5}} \mapsto \sum_{k=0}^4
A_k\otimes \Lambda^k
\]
can then be lifted to a standard catalytic embedding of
$\circuits(\ringR[\alpha])$ into $\circuits(\ringR)$. The catalyst for
this embedding is the eigenvector of $\Lambda$ corresponding to the
eigenvalue $\alpha$.
\end{example}

\begin{proposition}
\label{prop:intext}
       Let $\ringR\subset\ringR[\alpha]\subset\ringR[\beta]$ be
       integral Kroneckerian extensions. If $\Phi_1:
       \matrices(\ringR[\beta]) \rightarrow \matrices(\ringR[\alpha])$
       and $\Phi_2: \matrices(\ringR[\alpha]) \rightarrow
       \matrices(\ringR)$ are pre-embeddings for standard catalytic
       embeddings, then $\Phi=\Phi_2\circ\Phi_1$ is a pre-embedding
       for a standard catalytic embedding.
\end{proposition}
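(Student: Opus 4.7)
The plan is to combine Proposition~\ref{prop:linearconcat} with Proposition~\ref{prop:lamblin}. First, I would observe that Proposition~\ref{prop:linearconcat} applies directly: its hypothesis $\ringR[\beta]\cap\ringR\subseteq\ringR[\alpha]$ is trivially satisfied (since $\ringR\subset\ringR[\alpha]$), so $\Phi=\Phi_2\circ\Phi_1$ is already the pre-embedding of some linear catalytic embedding. It remains to upgrade this to a \emph{standard} linear catalytic embedding, which by Proposition~\ref{prop:lamblin} amounts to exhibiting a normal pseudo-companion matrix $\Lambda\in\matrices[\ringR]$ for the minimal polynomial $p\in\fieldfrac{\ringR}[x]$ of $\beta$, in such a way that $\Phi$ is given by $A\mapsto\sum_i A_i\otimes\Lambda^i$ on the unique decomposition $A=\sum_{i=0}^{d-1}A_i\beta^i$ with $A_i\in\matrices[\ringR]$.

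The natural candidate is $\Lambda:=\Phi(\beta)=\Phi_2(\Lambda_1)$, where $\Lambda_1=\Phi_1(\beta)$ is the pseudo-companion matrix that defines $\Phi_1$; its entries lie in $\ringR$ because $\Phi_2$ lands in $\matrices[\ringR]$. The verifications would proceed as follows. Normality of $\Lambda$ follows immediately from conditions~1 and~4 of Proposition~\ref{prop:embedformal} applied to $\beta\beta^\dagger=\beta^\dagger\beta$. Clearing denominators to write $tp\in\ringR[x]$ for some nonzero $t\in\ringR$ and applying $\Phi$ term-by-term to $(tp)(\beta)=0$ would yield $t\cdot p(\Lambda)=0$, hence $p(\Lambda)=0$ by integrality of $\ringR$. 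The catalytic condition exhibits $\beta$ as an eigenvalue of $\Lambda$, so $\Lambda$ is nonzero and its minimal polynomial is a non-constant divisor of $p$; irreducibility of $p$ over $\fieldfrac{\ringR}$ then forces the minimal polynomial of $\Lambda$ to equal $p$, and consequently its characteristic polynomial must be $\pm p^c$ for some $c\geq 1$, since minimal and characteristic polynomials share the same irreducible factors.

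For the formula $\Phi(A)=\sum_i A_i\otimes\Lambda^i$, I would use that $\ringR[\beta]$ is a free $\ringR$-module on $\{\beta^i\}_{i=0}^{d-1}$ (by integrality of the extension and irreducibility of $p$) to write $A=\sum A_i\beta^i$ uniquely, then apply additivity, multiplicativity, and condition~5 of Proposition~\ref{prop:embedformal} (once with $C=A_i\in\matrices[\ringR]$ and once with $C=I_n\in\matrices[\ringR]$) to each summand, obtaining $\Phi(A_i\beta^i)=\Phi(A_i)\Phi(\beta^i I_n)=(A_i\otimes I_k)(I_n\otimes\Lambda^i)=A_i\otimes\Lambda^i$. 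The main obstacle is verifying that the characteristic polynomial of $\Lambda$ is a power of $p$: this rests on both the irreducibility of $p$ over $\fieldfrac{\ringR}$ and the normality of $\Lambda$, and essentially retraces the pseudo-companion analysis that appears in the proofs of Theorem~\ref{thm:problem} and Proposition~\ref{prop:lamblin}. Everything else is bookkeeping once $\Lambda$ is in hand.
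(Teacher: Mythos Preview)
Your proposal is correct and follows essentially the same route as the paper: define $\Lambda=\Phi(\beta)$, verify it is a normal pseudo-companion matrix for the minimal polynomial $p$ of $\beta$ over $\fieldfrac{\ringR}$, and check that $\Phi$ acts as $A\mapsto\sum_i A_i\otimes\Lambda^i$. The paper omits your preliminary appeal to Proposition~\ref{prop:linearconcat} and instead derives the needed pre-embedding identities directly from $\Phi_1,\Phi_2$; it also tacitly takes $p\in\ringR[x]$ rather than clearing denominators as you do, but otherwise the arguments coincide.
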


\begin{proof}
    Let $p$ be the minimal polynomial for $\beta$ over $\ringR$. Let
    $\Lambda = \Phi_2\circ\Phi_1(\beta)\in \matrices(\ringR)$. Then $0
    = \Phi_2\circ\Phi_1(p(\beta)) = p(\Phi_2\circ\Phi_1(\beta)) =
    p(\Lambda)$. Because $p$ is irreducible over $\ringR$, the
    characteristic polynomial of $\Lambda$ must be a power of $p$, so
    $\Lambda$ is a pseudo-companion matrix for $p$.  Because $\Phi_2$
    and $\Phi_1$ are pre-embeddings, \[\Lambda\Lambda^\dagger =
    \Phi_2\circ\Phi_1(\beta)\Phi_2\circ\Phi_1(\beta^\dagger)=\Phi_2\circ\Phi_1(\beta\beta^\dagger)=\Phi_2\circ\Phi_1(\beta^\dagger\beta)=\Phi_2\circ\Phi_1(\beta^\dagger)\Phi_2\circ\Phi_1(\beta)=\Lambda^\dagger\Lambda,\]
    thus $\Lambda$ is normal.  Let
    $M\in\matrices(\ringR[\beta])$. There exist $M_i$ such that
    $\displaystyle{M = \sum_{i=1}^{d-1}M_i\beta^i}$ where $d$ is the
    degree of $p$. By linearity of $\Phi_1$ and $\Phi_2$,
    \[{\Phi_2\circ\Phi_1(M) = \sum_{i=1}^{d-1}M_i\otimes\Phi_2\circ\Phi_1(\beta)^i=\sum_{i=1}^{d-1}M_i\otimes\Lambda^i}.
    \]
    Thus $\Phi=\Phi_2\circ\Phi_1$ is a pre-embedding of a standard catalytic embedding.
\end{proof}

\begin{example}
In \cref{ex:std}, while $\Lambda$ is, in fact, a companion matrix for
$p$, it is not immediately obvious how to arrive at $\Lambda$ given
$p$. Here we show how to apply \cref{prop:intext} to construct
$\Lambda$ using an intermediate ring extension.

The ring $\ringR[\cos(2\pi/5)]$ sits between $\ringR[\alpha]$ and
$\ringR$. First, we find a matrix $\Lambda_1$ for the embedding from
$\unitary(\ringR[\alpha])$ to $\unitary(\ringR[\cos(2\pi/5])$. The
minimal polynomial of $\alpha$ over $\ringR[\cos(2\pi/5)]$ is
$q(x)=x^2+2\cos(2\pi/5)x+1$. Our matrix must therefore have $\alpha$
and $\overline{\alpha}$ as its only eigenvalues. The matrix \[
\frac{1}{2}\begin{bmatrix} 1 & 1 +2\cos(2\pi/5)\\ 1+2\cos(2\pi/5) & -1
\end{bmatrix}
\]
has eigenvalues $\pm\sin(2\pi/5)$ (since $\sin^2(2\pi/5)=(1/2)^2 +
(1/2+\cos(2\pi/5))^2$). Then, by multiplying by $i$ and adding
$\cos(2\pi/5)\cdot I$ to this matrix, we obtain the matrix
\[
\Lambda_1 = \frac{i}{2}\begin{bmatrix}
    1 & 1 +2\cos(2\pi/5)\\
    1+2\cos(2\pi/5) & -1
\end{bmatrix} + \begin{bmatrix}
    \cos(2\pi/5) & 0\\
    0 & \cos(2\pi/5)
\end{bmatrix} = \frac{1}{2}\begin{bmatrix}
    i + 2\cos(2\pi/5) & i + 2i\cos(2\pi/5)\\
    i + 2i\cos(2\pi/5) & -i + 2\cos(2\pi/5)
\end{bmatrix},
\]
whose eigenvalues are $\cos(2\pi/5) \pm
i\sin(2\pi/5)=\alpha,\overline{\alpha}$. Consequently, $\Lambda_1$ has
the characteristic polynomial, $q$.

Next, we find a matrix $\Lambda_2$ for the embedding from
$\unitary(\ringR[\cos(2\pi/5])$ to $\unitary(\ringR)$. The minimal
polynomial of $\cos(2\pi/5)$ over $\ringR$ is $r(x) = x^2 +
\frac{1}{2}x + \frac{1}{4}$. The matrix \[ \Lambda_2 =
\frac{1}{2}\begin{bmatrix} -1 & 1 \\ 1 & 0
\end{bmatrix}
\]
has $r$ as its characteristic polynomial.

Finally, we embed $\Lambda_1$ using $\Lambda_2$ in order to obtain
\[
\Lambda_1 = \frac{1}{2} \begin{bmatrix}
    i & i \\
    i & -i
\end{bmatrix} + \begin{bmatrix}
    1 & i\\
    i & 1
\end{bmatrix}\cdot \cos(2\pi/5) \longmapsto \frac{1}{2} \begin{bmatrix}
    i & i \\
    i & -i
\end{bmatrix}\otimes I_2 + \begin{bmatrix}
    1 & i\\
    i & 1
\end{bmatrix}\otimes \Lambda_2 = \frac{1}{2}\begin{bmatrix}
-1+i & 1 & 0 & i\\
1 & i & i & i\\
0 & i & -1-i & 1\\
i & i & 1 & -i
\end{bmatrix} = \Lambda.
\]
If $\ket{\psi_1}$ is the eigenvector of $\Lambda_1$ corresponding to
$\alpha$ and $\ket{psi_2}$ is the eigenvector of $\Lambda_2$
corresponding to $\cos(2\pi/5)$, then the catalyst for $\Lambda$ is
given by $\ket{\psi_1}\otimes\ket{\psi_2}$.
\end{example}

\begin{theorem}
    \label{prop:standard}
    Let $\ringR\subset\ringR[\alpha]$ be a Kroneckerian, integral
    extension of number rings such that $\alpha$ has a minimal
    polynomial over $\fieldfrac{\ringR}$ with coefficients in
    $\ringR$. If $\phi: \circuits(\ringR[\alpha]) \rightarrow
    \circuits(\ringR)$ is a linear catalytic embedding, then $\phi$ is
    a standard catalytic embedding.
\end{theorem}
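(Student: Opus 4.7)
The plan is to show that any linear catalytic embedding $\phi$ of $\circuits(\ringR[\alpha])$ in $\circuits(\ringR)$ is necessarily determined by a matrix $\Lambda$ satisfying the pseudo-companion conditions, and that the underlying pre-embedding coincides with the standard construction from Proposition~\ref{prop:lamblin}. First, I would unwrap the definition: since $\phi$ is linear, there is a pre-embedding $\Phi:\matrices[\ringR[\alpha]]\to\matrices[\ringR]$ (Definition~\ref{def:embedformal}) such that $\phi$ is the lifting of $\mu=\Phi|_{\unitary(\ringR[\alpha])}$. Set $\Lambda:=\Phi(\alpha)$. Noting that the hypothesis that the minimal polynomial $p$ of $\alpha$ has coefficients in $\ringR$ forces $\ringR\cap\ringR[\alpha]=\ringR$, I would apply Theorem~\ref{thm:problem} (with $\ringS=\ringR[\alpha]$) to conclude that $\Lambda\in\matrices[\ringR]$ is normal, that $p(\Lambda)=0$, and that $\alpha$ is an eigenvalue of $\Lambda$.

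The central step is then to upgrade ``$p(\Lambda)=0$'' to ``$\Lambda$ is a pseudo-companion matrix for $p$,'' i.e., to show that the characteristic polynomial $q_\Lambda\in\fieldfrac{\ringR}[x]$ equals $\pm p^c$ for some $c\geq 1$. Because $p(\Lambda)=0$, the minimal polynomial $m_\Lambda$ of $\Lambda$ divides $p$ in $\fieldfrac{\ringR}[x]$. Because $\alpha$ is an eigenvalue of $\Lambda$, $m_\Lambda(\alpha)=0$, so $m_\Lambda$ is a nonzero multiple of the minimal polynomial of $\alpha$, namely $p$. Since $p$ is irreducible over $\fieldfrac{\ringR}$, this forces $m_\Lambda=p$ (up to a unit). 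Because the characteristic polynomial shares its irreducible factors with the minimal polynomial, $q_\Lambda=\pm p^c$ for some positive integer $c$, which is exactly the pseudo-companion condition of Definition~\ref{def:pseudocomp}. I expect this to be the main conceptual step, but it is essentially a standard linear-algebra argument once irreducibility of $p$ is invoked.

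Finally, I would verify that $\Phi$ is the pre-embedding produced from $\Lambda$ by Proposition~\ref{prop:lamblin}. Any $A\in\matrices[\ringR[\alpha]]$ admits a unique decomposition $A=\sum_{i=0}^{d-1}A_i\alpha^i$ with $A_i\in\matrices[\ringR]$. Viewing each scalar $\alpha^i$ as a $1\times 1$ matrix over $\ringR[\alpha]$ and using condition~5 of the pre-embedding (which applies because $A_i\in\matrices[\ringT]=\matrices[\ringR]$) together with multiplicativity, I obtain
\[
\Phi(A)=\sum_{i=0}^{d-1}\Phi(A_i\otimes \alpha^i)=\sum_{i=0}^{d-1}A_i\otimes\Phi(\alpha)^i=\sum_{i=0}^{d-1}A_i\otimes \Lambda^i,
\]
which is precisely the standard pre-embedding from $\Lambda$. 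Since $\phi$ is the lifting of $\Phi|_{\unitary(\ringR[\alpha])}$ and $\phi$ is uniquely determined by $\Phi$ (via Proposition~\ref{prop:strongunitarymap}), it follows that $\phi$ is the standard catalytic embedding associated with $\Lambda$. The only subtlety worth flagging is ensuring $\ringR\cap\ringR[\alpha]=\ringR$ so that condition~5 applies with $\ringT=\ringR$; this is immediate from the hypothesis that the minimal polynomial of $\alpha$ has coefficients in $\ringR$, since then $\ringR[\alpha]\supseteq\ringR$ integrally.
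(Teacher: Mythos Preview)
Your proposal is correct and follows essentially the same approach as the paper: set $\Lambda=\Phi(\alpha)$, invoke Theorem~\ref{thm:problem} to get normality and $p(\Lambda)=0$, use irreducibility of $p$ to force the characteristic polynomial to be $\pm p^c$, and then observe that $\Phi$ is determined by $\Lambda$ via the decomposition $A=\sum A_i\alpha^i$. The only cosmetic differences are that the paper cites Proposition~\ref{prop:onlyS} for the last step rather than computing directly, and its pseudo-companion argument is terser (it asserts $p\mid q_\Lambda$ without going through $m_\Lambda$ explicitly as you do).
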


\begin{proof}
    Let $\Phi$ be the pre-embedding of $\phi$. The set
    $\{\alpha^j\}_{j=0}^k$ is a generating set for $\ringR[\alpha]$
    over $\ringR$ for some $k$. Let $\Phi(\alpha)=\Lambda$ and
    $p\in\ringR$ be the minimal polynomial of $\alpha$. By
    \cref{thm:problem}, $\Lambda$ is normal and $p(\Lambda) =
    0$. Because the characteristic polynomial of $\Lambda$ has
    coefficients in $\fieldfrac{\ringR}$ and is divisible by $p$ with
    $p$ irreducible over $\fieldfrac{\ringR}$, the characteristic
    polynomial of $\Lambda$ is a plus or minus a power of
    $p$. Therefore, $\Lambda$ is a normal pseudo-companion matrix for
    $p$. Thus, by \cref{prop:onlyS} $\Phi$ is completely determined by
    its action on $\{\alpha^j\}_{j=0}^k$, and so $\Phi$ is the
    pre-embedding of a standard catalytic embedding.
\end{proof}

Standard catalytic embeddings are useful because they reduce the
problem of constructing linear catalytic embeddings to the problem of
finding normal pseudo-companion matrices for irreducible
polynomials. The latter problem is, in general, simpler. When the
rings in question are actually fields, there is an explicit method for
constructing such matrices, and thus an explicit methods for building
standard catalytic embeddings \cite{ringsandfields}.

\subsection{Circuits Including Order-3 \texorpdfstring{$Z$}{Z}-Rotations}
\label{sect:domegathree}

One can obtain explicit reductions in the resources needed for
fault-tolerant quantum computing using standard catalytic
embeddings. We consider computations over the Clifford+$T$ gates set,
a gate set commonly used in fault-tolerant quantum computing. Let
$\omega_3$ be a third root of unity and $E$ be the rotation
\[
E:= \begin{bmatrix} 1 & 0 \\ 0 & \omega_3
	\end{bmatrix}.
\] 
It was shown in \cite{Giles2013a} that the Clifford+$T$ gate set
corresponds to the set of unitary operations $\unitary(\D[\omega_8])$
where $\omega_8$ is an eighth root of unity and $\D=\Z[1/2]$. Since
$\omega_3\notin \D[\omega_8]$, it follows that $E$ cannot be
implemented directly by a circuit over Clifford+$T$.

While quantum computation over Clifford+$T$+$E$ may not be typical, in
some circumstances it may be useful to extend the Clifford+$T$ gate
set with a phase gate of order $3$.  In the standard approach, one
would implement the $E$ gate by approximation over Clifford+$T$.  With
embeddings, we can instead directly implement $E$ using Clifford+$T$
gates by embedding $\unitary(\D[\omega_8,\omega_3])$ in
$\unitary(\D[\omega_8])$.

Because $\omega_3\notin \D[\omega_8]$ and $\omega_3^2+\omega_3+1=0$,
we need to find a normal, pseudo-companion matrix for the polynomial
$x^2 +x+1$ with entries in the ring $\D[\omega_8]$. Explicitly, we can
observe that
\[
	\Lambda = \frac{1}{2}\begin{bmatrix}  -1 - i & 1 - i \\ -1 - i & -1 + i \end{bmatrix}
\]
is a normal pseudo-companion matrix for $\omega_3$ over $\D[\omega_8]$
with $i=\omega_8^2$, giving a standard catalytic 
embedding. Note also that
\[
	\ket{v} = \frac{1}{\sqrt{3 + \sqrt{3}}}\begin{bmatrix} -\omega_3 - i\omega_3^2 \\ 1 \end{bmatrix}
\]
is in the $\omega_3$ eigenspace of $\Lambda$ and the corresponding
projector is given by $\ket{v}\bra{v}$.  Finally, we note that
\[
	\Lambda = \omega^5_8HS.
\]
Since $E = \ket{0}\bra{0} + \omega_3 \ket{1}\bra{1}$, we have
\begin{align*}
	\phi(E) = \ket{0}\bra{0}\otimes I + \ket{1}\bra{1}\otimes\Lambda 
		= \ket{0}\bra{0}\otimes I + \ket{1}\bra{1}\otimes (\omega^5_8HS).
\end{align*}
In circuit form, the right hand side amounts to a controlled
$\omega^5_8HS$ gate, as synthesized below (a controlled $\omega_8^5$
gate can be implemented as a $T^5 = ZT$ gate):
\[
\begin{quantikz}
        & \gate[2]{\phi(E)} & \qw \\
        & & \qw
\end{quantikz}
=
\begin{quantikz}
        & \ctrl{1} & \ctrl{1} &  \gate{T} & \gate{Z} & \qw \\
	& \gate{S} & \gate{H} & \qw & \qw & \qw
\end{quantikz}
\]
Using known constructions for the controlled-$H$ and -$S$ gates
\cite{ammr13} gives a Clifford+$T$ implementation of $\phi(E)$ using
$6$ $T$ gates, which can be further reduced to $4$ $T$ gates using
standard techniques (e.g. \cite{amm14}).

For circuits with a large number of $E$ gates, this gives a
significant reduction in $T$-count compared with the usual method of
repeated approximations. In particular, for a circuit with $m$ $E$
gates, assuming a practical overall precision of
$\varepsilon=10^{-15}$ \cite{kmm13}, and using asymptotically optimal
Clifford+$T$ approximations \cite{ross2016optimal} of $E$, this gives
a $T$-count of
\[
m\cdot 3\log_2(m/\varepsilon) \approx 3m\cdot(50+\log_2 m).
\] 
By comparison, using approximations to prepare the single-qubit catalyst $\ket{v}$, we get a $T$-count of
\[
6\log_2(1/\varepsilon) + 4m \approx 300 + 4m
\] 
with the given embedding. In the limit of large $m$, the ratio of these two costs has an asymptotic scaling of
\[
    \frac{4}{3\cdot(50+\log_2 m)}.
\]
Ignoring the $\log_2 m$ term, this reduces the T-count by 97\%
compared to the standard approach. Including this term for a
reasonable number of $E$ gates (say $m=2^{20}\approx 10^6$), this
reduces the T-count by 98\% over the standard method. Were we to
consider the arbitrary precision limit, this value becomes arbitrarily
close to 100\%. This highlights the power of catalytic embeddings to
reduce gate counts in practice.

\subsection{The Quantum Fourier Transform}
\label{sect:QFT}

The quantum Fourier transform (QFT) on $n$ qubits is the unitary
operation given by the matrix
$\frac{1}{2^{n/2}}[\omega^{jk}]_{j,k=0}^{n-1}$ where $\omega = e^{2\pi
  i/2^n}$. It is well-known that the QFT can be realized as the
circuit below, where $R_k$ is the $2\times 2$ diagonal matrix
$R_k=\diag(1,e^{2\pi i/2^k})$ \cite{NC}.
    \begin{center}
        \begin{quantikz}
            \lstick{$\ket{x_1}$} & \gate{H} & \gate{R_2} & \qw & \gate{R_3} & \push{~\cdots~} & \qw & \gate{R_{n}} & \qw & \qw\rstick{}\\
            \lstick{$\ket{x_2}$} & \qw & \ctrl{-1} & \gate{H} & \gate{R_2} & \push{~\cdots~} & \qw & \gate{R_{n-1}} & \qw  & \qw\rstick{}\\
            \lstick{$\ket{x_3}$} &\qw & \qw & \qw & \ctrl{-2} & \push{~\cdots~} & \qw & \gate{R_{n-2}}\arrow[dd,dashed,dash]\vqw{-2} & \qw & \qw\rstick{}\\
            \lstick{$\myvdots$} & & &  & & \myddots &&& \rstick{}\\
            \lstick{$\ket{x_{n-1}}$} & \qw & \qw & \qw & \qw & \push{~\cdots~} & \gate{H} & \gate{R_2} & \qw & \qw\rstick{} \\
            \lstick{$\ket{x_n}$} & \qw & \qw & \qw & \qw & \push{~\cdots~} & \qw & \ctrl{-1} & \gate{H} & \qw\rstick{} 
        \end{quantikz}
    \end{center}

We construct a standard catalytic embedding to implement this circuit
using the gate set $\langle H,X,CX,CCX\rangle$.  We then show how to
use $n$ additional $X$ gates and the same embedded circuit to
implement the inverse quantum Fourier transform.

Our goal is to reduce the cost of the expensive $Z$-rotations. To do
so, we will find an embedding $\phi:\unitary(\Z[e^{2\pi
    i/2^n}])\rightarrow\unitary(\N)$. We first construct a sequence of
standard catalytic embeddings
\[
\unitary(\ringR_n)\rightarrow
\unitary(\ringR_{n-1})\rightarrow\dots
\rightarrow \unitary(\ringR_k)\rightarrow
\unitary(\ringR_{k-1})\rightarrow
\dots\rightarrow\unitary(\ringR_2)\rightarrow\unitary(\ringR_1)\]
 where $\ringR_k = \Z[e^{2\pi i/2^k}]$. Let us begin with $\phi_k:\unitary(\ringR_k)\rightarrow
\unitary(\ringR_{k-1})$. The minimal polynomial for $e^{2\pi i/2^k}$ over $\ringR_{k-1}$ is $p_k(x)=x^2 - e^{2\pi i/2^{k-1}}$. Define
\[
\Lambda_k = \begin{bmatrix}
    0 & 1\\
    e^{2\pi i/2^{k-1}} & 0
\end{bmatrix} \qquad \mbox{and} \qquad \ket{\psi_k} = \frac{1}{\sqrt{2}}\begin{bmatrix}
    1\\
    e^{2\pi i/2^{k}}
\end{bmatrix}.
\]
The matrix $\Lambda_k$ is a normal companion matrix for $p_k$ and has
$\ket{\psi_k}$ as its corresponding catalyst. We have the following
embedding of $R_k$:\[
\begin{bmatrix}
    1 & 0\\
    0 & e^{2\pi i/2^k}
\end{bmatrix}\longmapsto\begin{bmatrix}
    1 & 0 & 0 & 0\\
    0 & 1 & 0 & 0\\
    0 & 0 & 0 & 1\\
    0 & 0 & e^{2\pi i /2^{k-1}} & 0
\end{bmatrix}
\]
Using controlled gates, we can represent this embedding
diagrammatically as below.
    \begin{center}
        \begin{quantikz}
            \lstick{$\ket{x}$} & \gate{R_k} & \qw & 
        \end{quantikz}
        $\longmapsto$
        \begin{quantikz}
             \lstick{$\ket{x}$} & \ctrl{1} & \ctrl{1} &  \qw\\
             \lstick{$\ket{\psi_k}$} & \gate{X} & \gate{R_{k-1}} & \qw
        \end{quantikz}
    \end{center}
Following this sequence of embeddings to its end, we obtain the
following following embedding for $R_k$:
\begin{center}
        \begin{quantikz}
            \lstick{$\ket{x}$} & \gate{R_k} & \qw & 
        \end{quantikz}
        $\longmapsto$
    \begin{quantikz}
         \lstick{$\ket{x}$} & \ctrl{1} & \push{~\cdots~} & \ctrl{1} & \ctrl{1}  & \ctrl{1} &\qw\rstick{}\\
         \lstick{$\ket{\psi_k}$} & \gate{X} & \push{~\cdots~} & \phase{}\arrow[dd,dashed,dash] & \phase{}\arrow[dd,dashed,dash] & \phase{}\arrow[dd,dashed,dash] &\qw\rstick{}\\
         \lstick{$\myvdots$} & & \myddots & & & & \\
         \lstick{$\ket{\psi_3}$} & \qw & \push{~\cdots~} & \gate{X} & \ctrl{1} & \ctrl{1}   &\qw\rstick{}\\
         \lstick{$\ket{\psi_2}$} &\qw & \push{~\cdots~} & \qw & \gate{X} & \ctrl{1}  &\qw\rstick{}\\
         \lstick{$\ket{\psi_1}$} &\qw & \push{~\cdots~} & \qw & \qw & \gate{X} &\qw\rstick{}
    \end{quantikz}.
\end{center}

In our presentation above, we have gone one step further and embedded
$R_1$ (the $Z$ gate) as $CX$, which we note is not a linear catalytic
embedding (because $\N$ is not a number ring) but nonetheless
constitutes a catalytic embedding. On computational basis states, the
action of the circuit above can be described as a controlled
decrementer, where $\ket{x}$ is the control and the register
$\ket{\psi_1}\otimes \cdots\otimes \ket{\psi_k}$ holds the target
integer (represented as a bitstring for computational basis
states). We therefore simplify our notation and write the embedding as
below.
\begin{center}
\begin{quantikz}
    \lstick{$\ket{x}$} & \gate{R_k} & \qw
\end{quantikz} ~~$\longmapsto$
    \begin{quantikz}
         \lstick{$\ket{x}$}& \ctrl{1} & \qw\rstick{}\\
        \lstick{$\ket{\psi_k}$} & \gate[wires=4,nwires=2]{-1} & \qw\rstick{}\\
        \lstick{\myvdots}  & \rstick{}\\
        \lstick{$\ket{\psi_2}$}&  & \qw\rstick{}\\
        \lstick{$\ket{\psi_1}$}&  & \qw\rstick{}
    \end{quantikz}
\end{center}
By \cref{prop:directsum}, linear catalytic embeddings are well-behaved
with respect to direct sums, and so the embedding of a controlled
operation is a controlled operation. Similarly, despite not being a
linear catalytic embedding the final catalytic embedding also respects
direct sum structure. Putting all this together, we get the following
embedding of the quantum Fourier transform:
\begin{center}
    \begin{quantikz}
        \lstick{$\ket{x_1}$} & \gate{H} & \ctrl{1} & \qw & \ctrl{2} & \qw & \push{~\cdots~} & \qw & \ctrl{2} & \qw & \push{~\cdots~} & \qw & \qw\rstick{}\\\
        \lstick{$\ket{x_2}$} & \qw & \ctrl{1} & \gate{H} & \qw & \ctrl{1} & \push{~\cdots~} & \qw & \qw & \ctrl{1} & \push{~\cdots~} & \qw & \qw\rstick{}\\
        \lstick{$\ket{x_3}$}& \qw & \qw \arrow[dd,dashed,dash] & \qw & \phase{}\arrow[dd,dashed,dash] & \phase{}\arrow[dd,dashed,dash] & \push{~\cdots~} & \qw & \qw\arrow[dd,dashed,dash] & \qw\arrow[dd,dashed,dash] & \push{~\cdots~} & \qw & \qw\rstick{}\\
        \lstick{$\myvdots$}&&&&&&\myddots&&&&\myddots\\
        \lstick{$\ket{x_{n-1}}$} & \qw & \qw \vqw{6} & \qw & \qw \vqw{5} & \qw \vqw{6} & \push{~\cdots~} & \gate{H} & \qw \vqw{1} & \qw \vqw{1} & \push{~\cdots~} & \ctrl{1} & \qw\rstick{}\\
        \lstick{$\ket{x_n}$}&\qw&\qw&\qw&\qw&\qw&\push{~\cdots~} & \qw & \ctrl{1} & \ctrl{2} & \push{~\cdots~} & \ctrl{5} & \gate{H}\rstick{}\\
        \lstick{$\ket{\psi_n}$} & \qw & \qw & \qw & \qw & \qw & \push{~\cdots~} & \qw & \gate[wires=6,nwires=3]{-1} & \qw& \push{~\cdots~} & \qw & \qw\rstick{}\\
        \lstick{$\ket{\psi_{n-1}}$} & \qw & \qw & \qw & \qw & \qw & \push{~\cdots~} & \qw & & \gate[wires=5,nwires=2]{-1} & \push{~\cdots~} & \qw & \qw\rstick{}\\
        \lstick{$\myvdots$}&&&&&&\myddots&&&&\myddots\\
        \lstick{$\ket{\psi_3}$}&\qw&\qw&\qw&\gate[wires=3]{-1}&\qw&\push{~\cdots~}& \qw && & \push{~\cdots~} & \qw & \qw\rstick{}\\
        \lstick{$\ket{\psi_2}$} & \qw & \gate[wires=2]{-1} &\qw& &\gate[wires=2]{-1} &\push{~\cdots~}& \qw && & \push{~\cdots~} & \gate[wires=2]{-1} & \qw\rstick{}\\
        \lstick{$\ket{\psi_1}$} & \qw &  & \qw & & & \push{~\cdots~} & \qw & & & \push{~\cdots~} & & \qw\rstick{}
    \end{quantikz}
\end{center}
Each of these controlled-decrement circuits can be implemented with
$X$, $CX$, and $CCX$ gates \cite{shende03}, and so we have implemented
the quantum Fourier transform using the gate set $\langle
H,X,CX,CCX\rangle$ in the presence of catalysts.

Compiled as a circuit over Clifford+$T$ instead, 
the above implementation of the quantum Fourier transform is
equivalent to the $T$-count efficient circuit given in
\cite{gidney,nam2020approximate}. Each individual decrement can be
seen as an inverse adder controlled by the bottom control qubit and
subtracting the top control qubit from the ancilla qubits. Each
sequence of decrements can be seen as an inverse adder taking as input
the binary number represented by the top control bits and subtracting
it from the ancilla bits. As an $n$-bit adder can be implemented with
linear $T$ complexity, and noting that $\ket{\psi_k}=R_k H\ket{0}$,
the entire circuit may be implemented over Clifford+$T$ with $T$ count
\[
    O(n^2 + n\log_2(1/\epsilon))
\]
compared to the standard approach of approximation, which would
require $T$-count
\[
    O(n^2\log_2(1/\epsilon)).
\]
While this implementation of the QFT has been previously derived using
phase gradients \cite{gidney,nam2020approximate}, we have shown that
the more general framework of catalytic embeddings suffices to
reproduce it.

While it might appear that catalytic embeddings have merely reproduced
the best known constructions of the QFT, we can in fact glean
additional insight into the structure of those constructions. In light
of \cref{cor:RSinterplay}, we know that the catalyst
$\ket{\psi_1}\otimes\cdots\otimes\ket{\psi_n}$ produced in this
construction should have orthogonal counterparts corresponding to
alternative embeddings of $e^{2\pi i/2^n}$ in $\C$. For example, by
applying an $X$ gate to each $\ket{\psi_k}$ we see
that \[X\frac{1}{\sqrt{2}}\begin{bmatrix} 1\\ e^{2\pi/2^k}
\end{bmatrix} = \frac{1}{\sqrt{2}}\begin{bmatrix}
    e^{2\pi/2^k}\\
    1
\end{bmatrix} \sim \frac{1}{\sqrt{2}}\begin{bmatrix}
    1\\
    e^{-2\pi/2^k}
\end{bmatrix}.
\]
The resulting state is orthogonal to the original state since
$\bra{\psi_2} X \ket{\psi_2} = 0$, and by inspection such a tensor
product of $X$ gates induces the complex conjugation automorphism on
$\omega$. Therefore, $X\ket{\psi_1}\otimes\cdots\otimes X\ket{\psi_n}$
is one such alternative catalyst, and it is precisely such that it
maps each $\omega$ in the QFT unitary to $\omega^\dagger$, inducing
the complex conjugation automorphism on the circuit which happens to
be equivalent to the inverse QFT. Thus, using the \emph{same} embedded
circuit along with $n$ $X$ gates, we can implement the inverse QFT.

\section{Perspectives}
\label{sec:conc}

In this paper, we laid the foundations for the theory of catalytic
embeddings. We believe that catalytic embeddings may find a variety of
applications in the study of quantum circuits and, more generally, in
the theory and practice of fault-tolerant quantum computation. As
discussed in \cref{sec:standard}, there are cases where catalytic
embeddings reproduce or beat existing quantum circuit constructions
for specific operations. We are eager to see what other algorithmic
primitives can be improved with catalytic embeddings. Approximate and
exact synthesis methods also seem like prime candidates to bolster
with the power of catalytic embeddings. To make the most of this
framework, it is important to provide constructive methods for
producing catalytic embeddings. In follow-up work
\cite{ringsandfields}, we provide such constructive methods in many
cases of interest.

The structure-preserving nature of catalytic embeddings may provide
insights into a number of open questions. Firstly, one may be able to
use catalytic embeddings to better understand gate sets. By embedding
a poorly understood gate set into a well-understood one (such as the
Toffoli-Hadamard gate set), one could in principle transform results
about the latter into results about the former. This approach may help
in characterizing gate sets, finding relations for circuits, and
deriving asymptotic lower bounds for resources. Further afield,
catalytic embeddings seem to be a natural tool with which to tackle
long-standing open questions about the Clifford hierarchy. Indeed,
catalytic embeddings were (in part) born out of generalizing gate
teleportation protocols. Even farther afield, there seems to be a
growing body of evidence that various approaches to achieving
fault-tolerant quantum computation share important
properties. Catalytic embeddings may help in understanding what
unifies these different approaches.

\section{Acknowledgements}
\label{sec:ack}

The circuit diagrams in this paper were typeset using Quantikz
\cite{quantikz}. This work was funded in part by Naval Innovative
Science and Engineering funding.

\bibliography{embeddings}

\end{document}